\numberwithin{equation}{section}
\newtheorem{corollary}{Corollary}
\newtheorem{t1}{Theorem}[section]
\newtheorem{d1}{Definition}[section]
\newtheorem{l1}{Lemma}[section]
\newtheorem{r1}{Remark}[section]
\newtheorem{e1}{Example}[section]
\newtheorem{a1}{Assumption}[section]
\begin{document}
\title{L-Estimation Approach to Tobit Models with Endogeneity and Weakly Dependent Errors}
\author{
	Swati Shukla$^1$, Subhra Sankar Dhar$^2$, Shalabh$^3$\\
	Department 
 of Mathematics and Statistics, Indian Institute of Technology Kanpur \\
	 Kanpur 208016, India\\
	 Emails: $^1$shukla@iitk.ac.in , $^2$subhra@iitk.ac.in,  $^3$ shalab@iitk.ac.in}

\maketitle
\begin{abstract}
        This article introduces an L-estimator for the semiparametric Tobit model with endogenous regressors. The estimation procedure follows a two-stage approach: the first stage employs least squares, while the second stage utilizes the L-estimation technique. We establish the large-sample properties of the proposed estimators under weakly dependent data. The utility of the proposed methodology is demonstrated for various simulated data and a benchmark real data set.\\
    
\noindent\textbf{Key Words:}  L-estimation, Tobit model, endogeneity, control variables, two-stage estimation, exogenous variable, instrumental variables.
\end{abstract}

\section{Introduction and Literature Review }
We consider the problem of estimation of parameters associated with the instrumental variables regression (see, e.g., \cite{MR4725141}, \cite{MR4712657}, \cite{MR4393477}, \cite{MR4294546},  \cite{Smith}, \cite{Roger1984}, \cite{St2018}, \cite{Dharsh}, \cite{Dharsh2023}, \cite{Kim2011}, \cite{Simon2012}, and \cite{Camponovo2015}) for censored data under the violation of the regularity assumption that regressors are uncorrelated with unobserved random errors in the model. In the literature, this situation is described as the presence of endogeneity in the model, as the regressors correlated with errors are called endogenous. For the sake of simplicity in exposition, here we consider the estimation of the parameters associated with the aforesaid model under the presence of one endogenous regressor. Till now, various estimation procedures have been proposed to estimate the parameters associated with various regression models for censored data under the presence of endogeneity in the model (see, e.g., \cite{MR4066064}, \cite{MR3449070}, \cite{BLUNDELL200765}, \cite{MR3928467}, \cite{CHERNOZHUKOV2015201}, \cite{HongTamer}, \cite{KHAN2009104}, \cite{HONORE2004293} and \cite{CHEN201830}).
 The main limitation of the method as mentioned earlier is their reliance on moment conditions, which lead to some serious model identifiability issues. Additionally, these methods impose conditions on the error distribution, requiring it to have finite moments, thereby excluding distributions like the Cauchy distribution. Therefore, most of these methods may not perform well due to their stringent theoretical assumptions, which are challenging to meet in practice. In contrast, the theoretical requirements of our proposed estimator are less cumbersome than those of the existing methodologies. Moreover, our estimator accommodates error distributions with non-finite moments and performs well in these conditions. Furthermore, the proposed class of L-estimators consists of several estimators in the context of censored regression. This class of L-estimator is also robust to the large number of censored observations and offers an advantage in practice.
 
 The standard procedure to deal with the endogeneity is to adopt a two-stage estimation procedure in which one takes suitable measures to remove the endogeneity from the model in the first step and subsequently applies a suitable estimation procedure (see, e.g., \cite{POWELL1984}, \cite{Winsorized}, \cite{bradic2019} and \cite{shukla2023}). In this work, we also adopt a two-step estimation procedure by using the instrumental variables regression approach, which is often referred to as the control function approach (see, e.g., \cite{Smith}) to remove the endogeneity, and then propose a class of smooth L-estimators taking advantage of the continuity of the censored quantile regression coefficient process. The proposed class of estimators that are analogous to L-estimators for linear regression models, as L-estimators are well-known for their robustness property when the data are generated from a heavy-tailed distribution and/or the data consist of some outliers (see, e.g., \cite{MR0829458}).  Furthermore, we establish the consistency and asymptotic normality of the proposed L-estimators and study the performance of the proposed methodology on simulated and real data sets. With notations, the problem is briefly described in the following. 

Let $(y_{1}, \boldsymbol{\Tilde{x}_{1}}, w_{1})^\top,\ldots,(y_{n}, \boldsymbol{\Tilde{x}_{n}}, w_{n})^\top$ be $n$ identically distributed observations on the random vector $(Y, \boldsymbol{\Tilde{X}}, W)^\top$, where $Y$ denotes the response variable, $\boldsymbol{\Tilde{X}}$ denotes the vector of $p$  exogenous regressors with the first component being 1, and $W$ denotes an endogenous regressor. Note that for the sake of exposition, we have taken only one endogenous regressor; however, in principle, one may take more than one endogenous regressor as well. We now wish to fit the following linear regression model to the observed censored data:   
\begin{equation}\label{eq:1.1} 
y_{i}=\max~(c,y_{i}^{*})= \max~\{c,\boldsymbol{\Tilde{x}^\top_{i}} \boldsymbol{\alpha}_{0}+w_{i}\gamma_{0} +\epsilon_{i}\},\hspace{0.4cm} i = 1,\ldots,n,
\end{equation}
where $\boldsymbol{\alpha}_{0} \in \mathbb{R}^{p},\; p\geq 1$ is the vector of unknown parameters associated with the exogenous regressors, including the intercept term, $\gamma_{0} \in \mathbb{R}$ is an unknown parameter associated with the endogenous regressor, $\epsilon_{i}$ is an unobserved error term, and $c$ denotes the censoring threshold, which is assumed to be non-random and constant for all the observations. Here the response variable $y_{i}$ is fully observed, but the latent
variable $y_{i}^{*}$ is not fully observed.  A particular case of this model is the model with the censoring threshold $c = 0$, such a model is referred to as a Tobit model. In what follows, we restrict our study to Tobit models, since results can be relayed back to the model with censoring threshold $c$ just by transforming $y$ to $y - c$ for a known threshold $c$. Therefore, in this chapter, we are concerned with the following model:       
\begin{equation}\label{eq:1.2}
y_{i} =  \max  \{0, \boldsymbol{\Tilde{x}^\top_{i}} \boldsymbol{\alpha}_{0}+w_{i}\gamma_{0} +\epsilon_{i} \}, \hspace{0.1cm} i = 1,\ldots,n.
\end{equation}
The model above in \eqref{eq:1.2} is an instance of a linear regression model for censored data with left censoring at a non-random and constant censoring threshold $c = 0$.  We assume errors $\epsilon_{i}$'s to be identically distributed with some unknown distribution (denoted by $F$ later on), and possibly dependent random variables. 
This model is often called the semi-parametric censored regression model since we assume the linearity of the regression function but do not assume any parametric form of the error distribution $F$. 

 In the context of statistical inference, many estimation procedures are available for the estimation of the unknown parameters associated with the model \eqref{eq:1.2} (see, e.g., \cite{POWELL1984}, \cite{Winsorized}, \cite{Powell1986}, \cite{SCLS} and \cite{shukla2023}). However, in the presence of endogeneity, parameter estimates from these conventional procedures do not have optimal properties. In such a situation, various methods have been proposed (see, e.g., \cite{BLUNDELL200765}, \cite{CHERNOZHUKOV2015201}, \cite{HongTamer}, \cite{KHAN2009104}, \cite{HONORE2004293}, \cite{CHEN201830}) to deal with the endogeneity such that the parameter estimates have optimal properties. One such approach proposed by \cite{Smith} uses the 
instrumental variables regression approach or control function approach to remove the endogeneity from the model. Subsequently, any suitable estimation procedure can be applied to estimate the parameters as if the regularity condition of uncorrelated errors is not violated. 
\subsection{Background on L-Functionals}
In this chapter, we adopt a two-stage estimation procedure. In the first stage, we employ the instrumental variables regression to account for the endogeneity in the model, and in the second stage, we define a class of estimators analogous to L-estimators in linear regression models. Given a response variable $Y$ and a vector of regressors $\boldsymbol{X}$, the class of L-estimators for the linear regression model is defined as follows: 

Let $F_{Y|\boldsymbol{X}}$ denote the conditional distribution of $Y|\boldsymbol{X}$, and the L-functional for $F_{Y|\boldsymbol{X}}$ is denoted as $T(F_{Y|\boldsymbol{X}})$, which is defined as 
\begin{equation} \label{Lf}
     \mathscr{L}(\boldsymbol{x}) := T(F_{Y|\boldsymbol{X} = \boldsymbol{x}}) = \int^1_0 \mathcal{Q}_{Y|\boldsymbol{X}}(\uptau) d \mu(\uptau),
\end{equation}
where $\mathcal{Q}_{Y|\boldsymbol{X}}(\uptau) := \inf \{y \in \mathcal{S} : F_{Y|\boldsymbol{X}}(y) \geq \uptau \}$ is the conditional quantile of $F_{Y|\boldsymbol{X}}$, $\mathcal{S}$ denotes the support of $F_{Y|\boldsymbol{X}}$, and $\mu$ is a generic weight function that corresponds to a signed measure. The choice of $\mu$ can be any signed measure for functionals such as the measure of scale, skewness, and kurtosis (heavy-tailedness) of the response, conditioned on covariates (see,  \cite{L_functional}). However, when the functional of interest (or parameter) corresponds to a measure of location such as the population mean or quantile of the response, conditioned on covariates, the weight function is chosen to be a positive measure on $[0, 1]$ (see,\cite{L_functional}). 

Further, a measure of location is equivariant to the linear transformation of the underlying random variable, i.e., for a random variable $Z$ having a distribution $F_{Z}$, and a functional $T_{loc}$ that corresponds to a measure of location, we have 
$$T_{loc}(F_{aZ + b}) = a T_{loc}(F_{Z}) + b, \text{ for any } a, b \in \mathbb{R}.$$
Since the conditional quantiles $\mathcal{Q}_{Y|\boldsymbol{X}}$ are equivariant to linear transformation of $Y$ that is $\mathcal{Q}_{aY + b|\boldsymbol{X}} = a\mathcal{Q}_{Y|\boldsymbol{X}} + b$ for any $a, b \in \mathbb{R}$, this implies that the L-functionals $\mathscr{L} (.)$ (see \eqref{Lf}) are equivariant if 
$$ \int^1_0 d \mu(\uptau) = 1.$$
Therefore, we are mainly concerned with the weight functions $\mu$ that are probability measures on $[0, 1]$ that split as follows:
\begin{equation} \label{wf}
    \mu(B) = \alpha\lambda(B) + (1 - \alpha)\nu(B), \text{ for every Borel set } B \in \mathscr{B}, \alpha \in [0, 1], 
\end{equation} 
where $\mathscr{B}$ denotes Borel sigma algebra on $[0, 1]$, and $\lambda$ and $\nu$ denotes an absolutely continuous and discrete measure on $[0, 1]$, respectively. 

Thus, the functional $\mathscr{L}(\boldsymbol{x})$ encompasses various measures of locations for different choices of the weight function $\mu$, hence forming a class of measures of location. A few examples of measures of location for different choices of weight function are in order. For a given $\uptau \in [0, 1]$, the population conditional quantile $\mathcal{Q}_{Y|\boldsymbol{X}}(\uptau)$ can be derived from $\mathscr{L}(\boldsymbol{x})$ by choosing the weight function $\mu$ to be a single point mass distribution at $\uptau$. Take $\alpha = 0$, and $\nu$ be a single point mass distribution at $\uptau$ in \eqref{wf}. The midrange of $F_{Y|\boldsymbol{X}}$, assuming that $\mathcal{S}$ is bounded, can be obtained by taking $\alpha = 0$, and $\nu$ to be Bernoulli (1/2). This yields the midrange $[\mathcal{Q}_{Y|\boldsymbol{X}}(0) + \mathcal{Q}_{Y|\boldsymbol{X}}(1)]/2$. Some other examples of location functionals with absolutely continuous choice of the weight function $\mu$ include the so-called ``smoothed conditional quantiles''. The smoothed conditional quantiles of $F_{Y|\boldsymbol{X}}$ can be derived from \eqref{wf} by taking $\alpha = 1$, and choosing $\lambda$ to be an absolutely continuous distribution. Following is an example of a smooth quantile: 
\begin{equation} \label{sq}
    T_{SCQ} (F_{Y|\boldsymbol{X}}) = \int^1_0  \mathcal{Q}_{Y|\boldsymbol{X}}(\uptau) \frac{1}{h(\pi)} K\left( \frac{\uptau - \pi}{h(\pi)}\right) d\uptau,
\end{equation}  
where $K$ is a smoothing probability density with the bandwidth $h(\pi)$ depending on $\pi \in [0, 1]$ such that $ 0 < h(\pi) < \min (\pi, 1-\pi)$. The conditional trimmed mean of the response variable can be derived from \eqref{sq} by the following choice of smoothing density $K$, and the bandwidth $h(\pi)$:
\begin{equation} \label{tm}
    K(u) = \frac{1}{2} \mathds{1}(|u| \leq 1), u \in [0, 1], \pi = \frac{1}{2}(\pi_{1} + \pi_{2}), h(\pi) = \frac{1}{2} (\pi_{2} - \pi_{1}), \text{ such that } \pi_{1} + \pi_{2} = 1.
\end{equation}
Now, using \eqref{tm} along with \eqref{sq} yields the following conditional trimmed mean functional:
\begin{equation} \label{cm}
    T_{CTM}(F_{Y|\boldsymbol{X}}) = \frac{1}{\pi_{2} - \pi_{1}}\int^{\pi_{2}}_{\pi_{1}} \mathcal{Q}_{Y|\boldsymbol{X}}(\uptau) d\uptau.   
\end{equation}
Note that the conditional mean $\mathbb{E}(Y|\boldsymbol{X})$ is a particular case of $T_{CTM}(F_{Y|\boldsymbol{X}})$ for $\pi_{1} = 0, \pi_{2} = 1$. For other kinds of trimmed means in the general non-parametric regression model, the readers may look at Dhar et al., \cite{MR4511147} and a few references therein.

 We now want to discuss how to estimate the functional $\mathscr{L}(\boldsymbol{x})$ defined in \eqref{Lf} for some suitable choice of the weight function. One can estimate the functional $\mathscr{L}(\boldsymbol{x})$ by plugging in an estimator of the conditional quantile $\mathcal{Q}_{Y|\boldsymbol{X}}(\uptau)$ in \eqref{Lf}, and observe that the conditional quantile $\mathcal{Q}_{Y|\boldsymbol{X}}(\uptau)$ can in turn be estimated by plugging in an estimate of $F_{Y|\boldsymbol{X}}$. For a given $n$ observations on the response and regressors, usually one takes the empirical distribution function $F_{n}$ (see, e.g., \cite{Serfling}, \cite{Koenker1987}) as an estimate of $F_{Y|\boldsymbol{X}}$. In this case, an estimate of $\mathcal{Q}_{Y|\boldsymbol{X}}(\uptau)$ is given by:
$$\Hat{\mathcal{Q}}_{Y|\boldsymbol{X}}(\uptau) = \inf\{ y \in \mathcal{S} : F_{n}(y) \geq \uptau\}.$$
Note that for the regression model without covariates, $\Hat{\mathcal{Q}}_{Y|\boldsymbol{X}}(\uptau)$ simply reduces to an unconditional quantile $\Hat{\mathcal{Q}}_{Y}(\uptau)$ that can be written in terms of order statistics $Y_{(i)}, i = 1, \ldots, n$, as follows:
$$\Hat{\mathcal{Q}}_{Y}(\uptau) = Y_{(i)} \quad if \quad \frac{i-1}{n} < \uptau \leq \frac{i}{n}, \text{ for all } i = 1, \ldots, n.$$
Thus, for a regression model without covariates, the estimate of L-functional reduces to the following linear combinations of order statistics:
$$ \Hat{\mathscr{L}}_{n} = \sum _{i = 1}^n Y_{(i)} w_{i}, \quad w_{i} = \int_{(i-1 )/ n} ^{i/n} d\mu(\uptau).$$
However, for the linear regression models with covariates, we are concerned with the following L-functional and its estimation:
\begin{equation} \label{LE}
     \mathscr{L}_{0} = \int_0^1 \boldsymbol{\beta}_{0}(\uptau) d\mu(\uptau),
\end{equation}
where $ \boldsymbol{\beta}_{0}(\uptau)$ is the unknown true regression quantile for a fix, $\uptau \in [0, 1]$ which is defined as:
\begin{equation}
   \boldsymbol{\beta}_{0}(\uptau) = \underset{\beta \in \mathscr{B}}{\arg\min}\;\mathbb{E} \left(\rho_{\uptau}\left(Y-\boldsymbol{x}^\top \boldsymbol{\beta}\right)\right).
\end{equation}
Here $\mathscr{B}$ is the parameter space of $\boldsymbol{\beta},$ $\rho_{\uptau}(u)=(\uptau-\mathds{1}(u\leq 0))u$, $\mathbb{E}$ denotes the conditional expectation corresponding to $F_{Y|\boldsymbol{X}}$. For details on the L-estimation in the linear regression model (see, e.g., 
  \cite{Koenker1987}, \cite{Bickel} and \cite{Gutenbrunner}) and a few relevant references therein. Finally, to estimate $\mathscr{L}_{0}$, one may consider the following type of estimator of $\mathscr{L}_{0}$ defined in \eqref{LE}:
\begin{equation}
    \Hat{\mathscr{L}}_{n} = \int_0^1 \Hat{\boldsymbol{\beta}}_{n}(\uptau) d\mu(\uptau),
\end{equation}
where $\Hat{\boldsymbol{\beta}}_{n}(\uptau)$ is an estimator of $\boldsymbol{\beta}_{0}(\uptau)$ defined as:
\begin{equation}
     \Hat{\boldsymbol{\beta}}_{n}(\uptau) \in \underset{\beta \in \mathscr{B}}{\arg\min}\;\mathbb{E}_{n} \left(\rho_{\uptau}\left(Y-\boldsymbol{x}^\top \boldsymbol{\beta}\right)\right).
\end{equation}
Here, $\mathbb{E}_{n}$ denotes the expectation corresponding to the empirical distribution $F_{n}$ of $F_{Y|\boldsymbol{X}}$. In general, the readers may look at \cite{MR4550234} for details on regression quantiles associated with general regression models like time-varying regression models, and for classical regression quantiles, the readers may refer to \cite{MR0474644}, \cite{MR1105843}, \cite{MR2760140}, \cite{MR2420417}, \cite{MR0640165}, and the relevant references therein. 
\subsection{Our contribution}
 In this chapter, we propose an L-estimator for the Tobit model under endogeneity with $\alpha$-mixing (weakly dependent) data. To achieve this, we develop a two-stage estimation procedure for the Tobit regression model and investigate both its estimation and inference properties. The asymptotic behavior of the joint estimator is examined as a function of the quantile index. In the second stage, the regression is carried out using L-estimation, which is constructed from the quantile process index. Since the regression function involves an indicator function, the resulting objective function is non-differentiable and non-convex with respect to the unknown parameters. To address this challenge, we employ empirical process methods for dependent data and establish the consistency and asymptotic normality of the joint estimator. Specifically, the first stage is based on least squares estimation, while the second stage applies L-estimation through a Tobit quantile regression process. Furthermore, recognizing that L-estimators are a functional of the regression quantiles, we extend these results to derive the consistency and asymptotic normality of the proposed L-estimator under suitable regularity conditions. We also provide a consistent estimator of the covariance matrix. Finally, the finite-sample performance of the proposed estimators is assessed through comprehensive simulation studies and real data applications.
\subsection{ Challenges}
 Before closing this section, it may be an appropriate place to mention that the model described in \eqref{eq:1.2} is entirely different from the usual multiple linear regression model, and hence, the mathematical treatments of the L-estimator considered here (see Section \ref{L}) is entirely different too than the usual L-estimators of the unknown parameters involved in the usual regression model. An advanced technique of empirical process is used to deal with the complications involved in model \eqref{eq:1.2} and to investigate the asymptotic properties of the proposed estimator. For sake of better presentation, the mathematical challenges in implementing the concept of L-estimators of the unknown parameters associated with \eqref{eq:1.2} are thoroughly discussed in the first two paragraphs in Section \ref{MR}. 
\subsection{Outline of the Paper}
The rest of the chapter is organized as follows: The proposed two-step estimation procedure is articulated in Section \ref{Methodology}. Section \ref{Est} describes the control function approach used to address the endogeneity issue in the linear regression model for censored data, and Section \ref{L} extends the formulation of L-functional and its estimation procedure for the same model. The asymptotic results along with the regularity conditions are articulated in Section \ref{MR}. Simulation studies are presented in Section \ref{FSSS} to demonstrate the consistency of estimates of L-functionals, and benchmark real data is analyzed in Section \ref{RDA}. We conclude the chapter with concluding remarks in Section \ref{Conclusion}. Finally, proofs of the main theoretical results are given in the Appendix \ref{appendix}.   

\textit{ The R codes of simulated and real data studies are available at \url{https://github.com/swati-1602/CRML.git}.}

{\bf{Notations:}} 

We adhere to \cite{vanderVaart1996}'s notation for empirical processes.
\begin{itemize}
    \item For $\mathbb{W} =(y,x,w,e)$,  $\mathbb{E}_{n}[f(\mathbb{W})]= \frac{1}{n}\sum\limits_{i=1}^{n}f(\mathbb{W}_{i})$ is the empirical average at $f$, and $\mathbb{G}_{n}[f(\mathbb{W})]= n^{-1/2}\sum\limits_{i=1}^{n}(f(\mathbb{W}_{i})-\mathbb{E}[f(\mathbb{W}_{i})])$ is the empirical process evaluated at same function $f.$  Here $W_1, \ldots, W_n$ are i.i.d. copies of $W$.
    \item If $\Hat{f}$ is an estimated function, $\mathbb{G}_{n}[\hat{f}(\mathbb{W})]= n^{-1/2}\sum\limits_{i=1}^{n}(f(\mathbb{W}_{i})-\mathbb{E}[f(\mathbb{W}_{i})])_{f=\Hat{f}}.$
    \item ``$\Rightarrow$" represents convergence in distribution, and  $l^{\infty}(\mathcal{T})$ denotes set of all uniformly bounded real functions on $\mathcal{T}.$
    \item For any $\{X_{n},n\geq 1\}$ and $\{Y_{n},n\geq 1\},$
    we define the following:
    \begin{itemize}
        \item[$\bullet$]  $X_{n}=o_{p}(Y_{n})$ if $\frac{X_{n}}{Y_{n}}\overset{p}{\to}0.$
          \item[$\bullet$] $X_{n}=O_{p}(Y_{n})$ if $\forall \;\epsilon^{*}>0,$ $\exists\; m_{0}\;\&\; n_{0}\; \in N$ such that $p\left[|\frac{X_{n}}{Y_{n}}|\leq m_{0}\right]\leq 1-\epsilon^{*}\; \forall n\geq n_{0}.$
    \end{itemize}
   \item ``.'' represents the product.
\end{itemize}

\section{Methodology} \label{Methodology}
We adopt a two-stage estimation procedure to estimate the parameters of the model described in \eqref{eq:1.2}. In the first stage, we employ the instrumental variable regression approach to remove the effect of endogenous variables on the parameter estimates. Subsequently, in the second stage, we define a class of estimators analogous to L-estimators in linear regression models associated with censored data.

 \subsection{The IV Estimation Technique}\label{Est}
The instrumental variable (IV) estimation technique is the standard method for estimating a regression model's parameters when one or more endogenous regressors are present. This approach uses instruments, or instrumental variables, to remove endogeneity at the first stage. Subsequently, the parameters involved in the model are estimated in the next stage using an appropriate estimation technique. It is imperative to assume the validity of the instruments while using the instrumental variable approach. If an instrument is both exogenous and correlated with the endogenous variables, it is considered a valid instrument. In these cases, endogeneity is removed from the model using instrumental variables based on the instrumental variables technique (see, e.g., \cite{Smith}, \cite{NEWEY19}, and  \cite{RIVERS1988347}). 

Let $Z$ be an instrumental variable for the endogenous variable W (see model \eqref{eq:1.2}), and  $z_{1,i}$ denote the $i^{th}$ value of the instrumental variable corresponding to the endogenous variable $w_{i}$ in the model \eqref{eq:1.2}, for $i = 1,\ldots, n$. The instrumental variables technique proceeds along the following steps. In the first step, one 
regresses the endogenous variable $w_{i}$ on the instrumental variable $z_{1,i}$ and exogenous vectors $(\Tilde{\boldsymbol{x}}_{i}^\top)$ using the least squares methodology, i.e.,
\begin{equation} \label{eq:2.1}
   w_{i}= \boldsymbol{z}^\top_{i}\boldsymbol{\delta}_{0} + \vartheta_{i},\quad i= 1, \ldots, n,
\end{equation}
where $\boldsymbol{z}_{i}=(z_{1,i},\boldsymbol{\Tilde{x}}^\top_{i})^\top$ represents the $i$-th observation on a $p+1$-dimensional vector of instruments and $\boldsymbol{\delta}_{0}=(\delta_{1,0},\boldsymbol{\delta}_{2,0})^\top\in \mathbb{R}^{p+1}$ is the unknown parameter vector associated with the instrumental vector $\boldsymbol{z}$, and $\vartheta_{i}$ is the random error. For identification, it is assumed that there is at least one component of $\boldsymbol{z}$ that is not included in $\boldsymbol{\Tilde{x}}_{i}$, and that there is at least one non-zero coefficient for the excluded components of $\boldsymbol{z}$. The first-stage estimator $\hat{\boldsymbol{\delta}}_{n}$ is obtained by minimizing the following objective function:
\begin{equation}\label{eq:2.2}
    \hat{\boldsymbol{\delta}}_{n}=\underset{\boldsymbol{\delta} \in \Delta}{\arg\min} T_{n}(\boldsymbol{\delta}),
\end{equation}
where $T_{n}(\boldsymbol{\delta})=\frac{1}{n}\sum_{i=1}^{n}(w_{i}-\boldsymbol{z}^\top_{i}\boldsymbol{\delta}_{0})^2$ and $\Delta \subset \mathbb{R}^{p+1}$ is a parameter space.
Let $\mathcal{F}_{i}$ be the smallest sigma field generated by $\{(\boldsymbol{\Tilde{x}}^\top_{j+1},w_{j},\boldsymbol{z}^\top_{j+1},\epsilon_{j}):0\leq j\leq i\leq n\}=\sigma(\boldsymbol{\Tilde{x}}_{i},\boldsymbol{z}_{i},w_{i-1},\epsilon_{i-1},\ldots).$
Now, we assume the following conditions:
\begin{a1}\label{a:2.1} For all $i=1,\ldots, n,$ $\mathbb{E}[\vartheta_{i}|\mathcal{F}_{i-1}]=0$ holds almost surely.
\end{a1}
\begin{a1}\label{a:2.2} For all $i=1,\ldots, n,$ $$Q_{\epsilon_{i}|\mathcal{F}_{i-1},w_{i}}(\uptau)=Q_{\epsilon_{i}|\vartheta_{i}}(\uptau)=\rho_{10}\vartheta_{i}+Q_{\varepsilon_{i}}(\uptau)$$ holds almost surely. In this context,  $Q_{\epsilon_{i}|\mathcal{F}_{i-1},w_{i}}(\uptau)$ denotes the $\uptau$-th conditional quantile function of $\epsilon_{i},$ given, $\mathcal{F}_{i-1}$ and $w_{i},$ the other expressions are defined analogously. In addition, $\rho_{10}$ is an unknown parameter, and the error term $\varepsilon_{i}$ is the same as defined in \eqref{eq:2.2}.
\end{a1}
 Assumption~\ref{a:2.1} imposes mean independence, implying that $\vartheta_{i}$ is uncorrelated to $\boldsymbol{\Tilde{x}}_{i-1}, \boldsymbol{z}_{i-1}$, $w_{i-2},\epsilon_{i-2}$ and their past values. In addition, Assumption~\ref{a:2.2} states that the control function is linearly related to $\vartheta_{i}.$
The variable $w_{i}$ is endogenous due to its contemporaneous correlation with the error term 
$\epsilon_{i}$ in model \eqref{eq:1.2}, and $\vartheta_{i}$, it is represented by
\begin{equation}\label{eq:2.3}
    \epsilon_{i}=\rho_{10}\vartheta_{i}+\varepsilon_{i},\; i=1,\ldots, n,
\end{equation}
     where $\boldsymbol{\varepsilon}_{i}$ is the random error and $\rho_{10}$ is an unknown parameter in \eqref{eq:2.3}. 
The essence of the aforesaid approach is to incorporate the control function in the original model (see \eqref{eq:1.2}) to account for endogeneity and use $\boldsymbol{\varepsilon}_{i}$ as the error term instead of $\epsilon_{i}$. Consequently, the following transformed model will be free from endogeneity that can be estimated using any suitable technique.
\begin{equation}\label{eq:2.4}
y_{i} =  \max \{0, \boldsymbol{\Tilde{x}^\top_{i}} \boldsymbol{\alpha}_{0}+w_{i}\gamma_{0} + \rho_{10} \vartheta_{i} +\varepsilon_{i}\}= \max\{0, \boldsymbol{x}^\top_{i}\boldsymbol{\beta}_{0}+\varepsilon_{i}\},\quad i= 1, \ldots, n.
          \end{equation}
Here $\boldsymbol{x}_{i}=[\boldsymbol{\Tilde{x}^\top_{i}},w_{i},\vartheta_{i}]^\top\in \mathbb{R}^{p+2}$ denotes the vector of regressors, and $\boldsymbol{\beta}_{0} = [\boldsymbol{\alpha}_{0},\gamma_{0},\rho_{10}]^\top$ s the corresponding 
$(p+2)$-dimensional vector of unknown parameters. Note that the model described in  \eqref{eq:2.4} contains an unobserved random component $(\vartheta_{i})$ as one of the regressors. Therefore, it is necessary to replace $\vartheta_{i}$ by the residuals $e_{i}$ of the model in \eqref{eq:2.1} to estimate the parameters of the model \eqref{eq:2.4}. Now, let $e_{1}, \ldots, e_{n} $ denote the residuals of the model in \eqref{eq:2.1}, where $e_{i}=e_{i}(\hat{\boldsymbol{\delta}}_{n}) = w_{i}-\hat{w}_{i}=w_{i}-\boldsymbol{z}_{i}^\top\hat{\boldsymbol{\delta}}_{n}.$ Then the model in \eqref{eq:2.4} reduces to the following model:
\begin{equation}\label{eq:2.5}
              y_{i} =  \max \{0, \boldsymbol{\Tilde{x}^\top_{i}} \boldsymbol{\alpha}_{0}+w_{i}\gamma_{0}+e_{i}\rho_{10}+\eta_{i}\} =  \max\{0, \hat{\boldsymbol{x}}^\top_{i}\boldsymbol{\beta}_{0}+\eta_{i}\}.
          \end{equation}
In this case, $\hat{\boldsymbol{x}}_{i}=[\boldsymbol{\Tilde{x}^\top_{i}},w_{i},e_{i}]^\top$ and $\boldsymbol{\beta}_{0} = [\boldsymbol{\alpha}_{0},\gamma_{0},\rho_{10}]^\top$ is the $((p+2) \times 1)$-dimensional vector of unknown parameters and $\eta_{i}=\varepsilon_{i}-\rho_{10}(e_{i}-\vartheta_{i})$ is a new error term. Next, the estimation of the parameters in the model \eqref{eq:2.5} is described.

\subsection{The Class of L-estimators}\label{L} 
Given a vector of regressors $\boldsymbol{x}$ and the response variable $Y$, the conditional quantile function corresponding to the model \eqref{eq:1.2} is represented by $\mathcal{Q}_{Y|\boldsymbol{x}}$ and defined as follows: 
$$\mathcal{Q}_{Y|\boldsymbol{x}}(\uptau) = F^{-1}_{Y|\boldsymbol{x}}(\uptau) = \inf\{{t\in \mathbb{R}: F_{Y|\boldsymbol{x}}(t) \geq \uptau}\}, \; \uptau \in (0,1),$$ where $F_{Y|\boldsymbol{x}}$ denotes the cumulative conditional distribution function of $Y$ given $\boldsymbol{x}$. The class of L-estimators of the parameters involved in the model \eqref{eq:2.4} is based on the estimators studied by (Powell, \cite{Powell1986}). For that model, under the assumption of integrability, the $\uptau$-th quantile of true unknown parameter vector $\boldsymbol{\beta}_{0}(\uptau )$ solves
\begin{equation}\label{eq:2.6}
   \boldsymbol{\beta}_{0}(\uptau) = \underset{\beta \in \mathscr{B}}{\arg\min}\;\mathbb{E} \big[\rho_{\uptau}(Y-\max(0,\boldsymbol{x}^\top \boldsymbol{\beta})\big].
\end{equation}
Here 
$\boldsymbol{\beta} = (\boldsymbol{\alpha},\gamma,\rho_{1})^\top,$ $\mathscr{B}$ is the parameter space of $\boldsymbol{\beta},$ and $\rho_{\uptau}(u)=(\uptau-\mathds{1}(u\leq 0))u$. The expectation corresponding to the conditional distribution function of $Y$ given $\boldsymbol{x}$ is represented by $\mathbb{E}$. In view of the equivariance properties of the quantile function to monotone transformation, the quantile function under censoring is as follows: 
\begin{equation}\label{eq:2.7}
\begin{split}
\mathcal{Q}_{Y|\mathcal{F}_{i-1},w_{i}}(\uptau) = F^{-1}_{Y|\mathcal{F}_{i-1},w_{i}}(\uptau) &= \mathcal{Q}\big(\max\{0, \boldsymbol{\Tilde{x}}^\top\alpha_{0}+w_{i}\gamma_{0}+\epsilon_{i}\}|\mathcal{F}_{i-1}.w_{i}\big),\;\uptau \in \mathcal{T}\\
&=\max\{0,\boldsymbol{\Tilde{x}}^\top\alpha_{0}+w_{i}\gamma_{0}+\mathcal{Q}_{\epsilon_{i}|\mathcal{F}_{i-1},w_{i}}(\uptau) \}\\
&=\max\{0,\boldsymbol{\Tilde{x}}^\top\alpha_{0}+w_{i}\gamma_{0}+\mathcal{Q}_{\epsilon_{i}|\vartheta_{i}}(\uptau) \}\\
&=\max\{0,\boldsymbol{x}^\top_{i}\boldsymbol{\beta}_{0}(\uptau)\},\; \uptau \in \mathcal{T}.
\end{split}
\end{equation}
Let $\boldsymbol{\beta}_{0}(\uptau) \in \mathbb{R}^{p+2}$ denote the true parameter vector, defined as $$\boldsymbol{\beta}_{0}(\uptau)=(\boldsymbol{\alpha}_{0}(\uptau),\gamma_{0},\rho_{10})^\top,$$ where $\boldsymbol{\alpha}_{0}(\uptau)=(\alpha_{1,0}+\mathcal{Q}_{\varepsilon}(\uptau),\alpha_{2,0},\ldots,\alpha_{p,0})^\top,$ and $\mathcal{Q}_{\varepsilon}(\uptau)$ denotes the $\uptau$-th quantile function of error term $\varepsilon.$  The error term $\varepsilon_{i},$ defined in equation~\eqref{eq:2.3}, can then be rewritten as follows:
\begin{equation}\label{eq:2.8}
    \varepsilon_{\uptau,i} = y_{i}^{*}-\boldsymbol{x}^\top_{i}\boldsymbol{\beta}_{0}(\uptau) = \varepsilon_{i}-\mathcal{Q}_{\varepsilon_{i}}(\uptau).
\end{equation}
Let $\mathcal{Q}_{\varepsilon_{\uptau,i}}(\uptau)$ is the $\uptau$-th quantile of $\varepsilon_{\uptau,i}.$ Then, the censored quantile estimator $\Hat{\boldsymbol{\beta}}_{n}(\uptau)$ of  $\boldsymbol{\beta}_{0}(\uptau)$ is defined as follows:
\begin{equation}\label{eq:2.9}
     \Hat{\boldsymbol{\beta}}_{n}(\uptau)=\Hat{\boldsymbol{\beta}}_{n}(\uptau,\hat{\boldsymbol{\delta}}_{n}) \in \underset{\boldsymbol{\beta}\in \mathscr{B}}{\arg\min}~\mathscr{Q}_{n}(\boldsymbol{\beta},\hat{\boldsymbol{\delta}}_{n},\uptau),
\end{equation}
where the corresponding objective function is given by  \begin{equation}\label{eq:2.10}
\mathscr{Q}_{n}(\boldsymbol{\beta},\hat{\boldsymbol{\delta}}_{n},\uptau) =\frac{1}{n}\sum_{i=1}^{n}\rho_{\uptau}(y_{i},\hat{\boldsymbol{\delta}}_{n},\boldsymbol{\beta}) =\frac{1}{n}\sum_{i=1}^{n}\rho_{\uptau}(y_{i}-  \max \{0,\hat{\boldsymbol{x}}^\top_{i}\boldsymbol{\beta}\}). \end{equation}
and the quantile loss function is defined as $\rho_{\uptau}(u)=(\uptau -\mathds{1}\{u\leq 0\})u.$
Furthermore, in the context of the model described in \eqref{eq:2.4}, we define the censored quantile process as follows:
\begin{equation} \label{eq:2.11}
 \Hat{\boldsymbol{\beta}}_{n}(\uptau)=\{\Hat{\boldsymbol{\beta}}_{n}(\uptau),\uptau\in \mathcal{T}\subset(0,1)\}.
\end{equation}
Here, $\mathcal{T}$ is a closed subinterval of $(0,1)$, denoted as $[\uptau_{0},1-\uptau_{0}]$, for some $\uptau_{0}>0$. A wide class of estimators called as smooth L-estimators, denoted by  
\begin{equation}\label{eq:2.12}
\mathscr{L}_{n}=\int_{0}^{1}\hat{\boldsymbol{\beta}}_{n}(\uptau)J_{1}(\uptau)\;d\uptau, 
\end{equation} where $J_{1}(\uptau)$ is a bounded measurable on $[0, 1]$ and vanishes outside of the subinterval $\mathcal{T}.$  The corresponding population version, denoted by $\mathscr{L}_{0}$, is
\begin{equation}\label{eq:2.13}
\mathscr{L}_{0}=\int_{0}^{1}\boldsymbol{\beta}_{0}(\uptau)J_{1}(\uptau)\;d\uptau. 
\end{equation}  Observe that various choices of $J_{1}(.)$ in \eqref{eq:2.12} and \eqref{eq:2.13} will provide various sample and population versions of L-estimators, which are later on considered in simulation studies (see in Section \ref{FSSS}).  Section \ref{MR} investigates the asymptotic properties of $\mathscr{L}_{n}$ after appropriate normalization.

\section{Main Results}\label{MR} 
In this section, the asymptotic normality and consistency of $\mathscr{L}_{n}$, as defined in \eqref{eq:2.12}, are established. We additionally provide a consistent estimator of the asymptotic covariance matrix of estimated parameters. We present the main theoretical results, including asymptotic normality and consistency in the following subsections.

\subsection{Strong Consistency of \texorpdfstring{$\mathscr{L}_{n}$}{}}\label{sc}
We establish the consistency of $\mathscr{L}_{n}$ under the following assumptions on the parameter space, errors, regressors, and the true regression function. 
 As per the requirements in understanding the technical assumptions, we here briefly discuss the notion of mixing dependent random variables, and the notations used in this description are self-contained.

 Suppose that $\{X_{n}\}_{n\geq 1}$ is a sequence of identically distributed random variables, and for any $a\in\mathbb{N}$ and $b\in\mathbb{N}$ such that $1\leq a\leq b <\infty$, let us denote ${\cal{F}}_{a}^{b} = \sigma(X_{a}, X_{a + 1}, \ldots, X_{b})$ and ${\cal{F}}_{m}^{\infty} = \sigma(X_{m}, X_{m + 1}, \ldots)$, where $\sigma(.)$ denotes the smallest $\sigma$-field generated by the random variables mentioned inside $(.).$ Note that if ${\cal{F}}_{1}^{k}$ and ${\cal{F}}_{k + m}^{\infty}$ are independent, then $P(A\cap B) - P(A) P (B) = 0$ for any $A\in {\cal{F}}_{1}^{k}$ and $B\in {\cal{F}}_{k + m}^{\infty}$. Let $\{(\boldsymbol{x}_{i}^\top,w_{i},\boldsymbol{z}_{i},\epsilon_{i})\}_{i\geq 1}$ denote a strict stationary sequence of random variables defined on a probability space $(\Omega, \mathcal{F},P)$. Define the sigma fields,
$\mathcal{F}_{-\infty}^{0} = \sigma\{(\boldsymbol{x}^\top_{i},w_{i},\boldsymbol{z}^\top_{i},\epsilon_{i}):i\leq 0\}$ and $\mathcal{F}_{m}^{\infty} = \sigma\{(\boldsymbol{x}^\top_{i},w_{i},\boldsymbol{z}^\top_{i},\epsilon_{i}):i\geq m\}.$ Then this sequence of random variables is said to be strongly mixing or $\alpha$-mixing (see, \cite{M.Rosenblatt}) if mixing coefficients $\alpha(m)$, defined as
\begin{eqnarray}\label{eq:3.1}
  \alpha(m) = \sup_{\substack{A \in \mathcal{F}_{-\infty}^{0}, \, B \in \mathcal{F}_{m}^{\infty}}} 
  \left| P(A \cap B) - P(A)P(B) \right|,
\end{eqnarray}
satisfy $\alpha(m) \to 0$ as $m \to \infty$. Furthermore, the mixing coefficients $\alpha(m)$ are assumed to satisfy the condition
\begin{equation}\label{eq:3.2}
\sum_{m \geq l}^{\infty} m^{l} \left(\alpha(m)\right)^{(r-2)/r} < \infty,
\end{equation}
for some constants $r > 2$ and $l > \frac{r-2}{r}$.
Now, recall the model \eqref{eq:2.4}, and all of the assumptions discussed below are based on it.

\subsection*{Assumptions:}

    \begin{a1}\label{a:3.3.1}
       The parameter space $\mathscr{B}$  defined in \eqref{eq:2.5} is a compact space, and for each $\uptau \in \mathcal{T},$ $\boldsymbol{\beta}_{0}(\uptau) \in \mathscr{B}^{o}.$ Here $\mathcal{T}$ is a closed subinterval of $(0,1),$ $\boldsymbol{\beta}_{0}(\uptau)$ is the same as defined in \eqref{eq:2.6}, and $\mathscr{B}^{o}$ is the interior of $\mathscr{B}\subset \mathbb{R}^{m}.$ Moreover, $\Delta$  defined in \eqref{eq:2.2} is a compact space with $\boldsymbol{\delta}_{0}\in \Delta^{o}.$ Here, $\boldsymbol{\delta}_{0}$ is the same as defined in \eqref{eq:2.1}, and $\Delta^{o}$ is an interior of $\Delta.$
     \end{a1}
      \smallskip
       \begin{a1}\label{a:3.3.2}
       There exists a constant $\phi_{1}<\infty$ such that $\mathbb{E}\|\boldsymbol{z}_{i}\|^{r}<\phi_{1}$ and $\mathbb{E}\|\boldsymbol{z}_{i}\vartheta_{i}\|^{r}<\phi_{1},$ where $r>2$ and same defined in \eqref{eq:3.2}. Moreover, let  $\mathbb{J}_{1,\boldsymbol{\delta}}=\mathbb{E}[\boldsymbol{z}_{i}\boldsymbol{z}_{i}^\top]>0,$ which exists and is positive definite. Also, the matrix $\mathbb{V}_{1,\boldsymbol{\delta}}=\underset{n \to \infty}{\lim} \mathrm{Var}\left(\frac{1}{\sqrt{n}}\sum\limits_{i=1}^{n}\boldsymbol{z}_{i}\vartheta_{i}\right)$ exists and is uniformly positive definite.
       \end{a1}
        \smallskip
      \begin{a1}\label{a:3.3.3}
      Let $f_{\varepsilon_{\uptau}}(.|\boldsymbol{x})$ denote the conditional probability density function of $\varepsilon_{\uptau}$ given $\boldsymbol{x},$ where $f_{\varepsilon_{\uptau}}(.|\boldsymbol{x})$ is strictly positive and bounded above. Moreover, the density $f_{\varepsilon_{\uptau}}(.|\boldsymbol{x})$ is Lipschitz continuous in $\varepsilon_{\uptau},$
        and twice continuously differentiable with uniformly bounded first and second derivatives on $\varepsilon_{\uptau} \in [-\varepsilon_{0},\varepsilon_{0}]$ for a constant $\varepsilon_{0}>0.$
     
      \end{a1}
      \smallskip
     \begin{a1}\label{a:3.3.4}Assume that
      there exists a constant $\phi_{2}$ such that $\mathbb{E}\|\boldsymbol{x}_{i}\|^{r'}<\phi_{2}<\infty, $ where $r^{'} =\max\{r, 4\},$ and $r>2,$ is given in \eqref{eq:3.2}. In addition, suppose that there exist a constant $\overline{c}>0$ such that $\underset{1\leq i\leq n}{\max}\|\boldsymbol{x}_{i}\|^{2}\leq \overline{c}(n)^{1/5}.$ Moreover, for all $\uptau \in \mathcal{T},$ define
     \begin{equation}
      \mathbb{A}(\uptau) =
      \mathbb{E}\left[\mathds{1}(\boldsymbol{x}^\top\boldsymbol{\beta}_{0}(\uptau)>0)\boldsymbol{x}\boldsymbol{x}^\top\right], 
     \end{equation}
where $\mathbb{A}(\uptau)$ is a uniformly positive definite matrix in the sense that the minimum eigenvalue of $\mathbb{A}(\uptau)$ is bounded away from zero.
     \end{a1}
     \smallskip
  \begin{a1}\label{a:3.3.5}
  The support of $J_{1}(.)$ is a compact subinterval of $(0,1)$, and $J_{1}(.)$ is a continuous function on that compact subinterval. Here $J_{1}$ is the same as defined in \eqref{eq:2.13}.
 \end{a1}
  \smallskip
\begin{r1} Assumption \ref{a:3.3.1} guarantees the existence and measurability of $\hat{\boldsymbol{\beta}}_{n}$ and $\hat{\boldsymbol{\delta}}_{n}$ as well as the uniformity of the almost sure convergence of the minimand over $\mathscr{B}$ and $\Delta$ respectively. Assumption \ref{a:3.3.2} implies the consistency of the first-stage estimator. Further, Assumption \ref{a:3.3.3} requires that the error has a uniformly bounded and continuous density function, that required for establishing the asymptotic distribution of censored quantile regression. Assumption \ref{a:3.3.4} specifies the identification condition that is needed to establish the consistency second-stage estimator. Lastly, Assumption \ref{a:3.3.5}  is common in practice, and it is the same as the condition considered in \cite{Koenker1987}.
\end{r1}

Theorem \ref{th:3.3.1} describes the uniform convergence of $\Hat{\boldsymbol{\beta}}_{n}(\uptau)$ to $\boldsymbol{\beta}_{0}(\uptau)$, which is a valuable result by its own worth.

\begin{t1}\label{th:3.3.1} For the model described in \eqref{eq:2.5}, under the assumptions \ref{a:3.3.1}--\ref{a:3.3.4}, we have $$\underset{\uptau \in \mathcal{T}}{\sup}\|\Hat{\boldsymbol{\beta}}_{n}(\uptau) - \boldsymbol{\beta}_{0}(\uptau)\|_{\mathbb{R}^{p+2}}\rightarrow 0$$ in probability as $n\rightarrow\infty$. Here  $\Hat{\boldsymbol{\beta}}_{n}(\uptau)$ is the same as defined in \eqref{eq:2.9}, and $\boldsymbol{\beta}_{0}(\uptau)$ is the same as defined in \eqref{eq:2.6}.
\end{t1}
\begin{proof}[$\textbf{Proof}$]
See the proof in Appendix-\ref{B.2}.
\end{proof}

Theorem \ref{th:3.3.2} describes the strong consistency of  $\mathscr{L}_{n}$ to $\mathscr{L}_{0}$.

\begin{t1}\label{th:3.3.2} For the model described in \eqref{eq:2.5}, under the assumptions \ref{a:3.3.1}--\ref{a:3.3.5}, we have  $$\|\mathscr{L}_{n}-\mathscr{L}_{0}\|_{\mathbb{R}^{p+2}}\rightarrow 0$$ in probability as $n\rightarrow\infty$. Here $\mathscr{L}_{n}$ is the same as defined in \eqref{eq:2.12}, and $\mathscr{L}_{0}$ is the same as defined in \eqref{eq:2.13}.
\end{t1}

\begin{proof}[$\textbf{Proof}$]
See the proof in Appendix-\ref{B.3}.
\end{proof}
\subsection{Asymptotic Normality of \texorpdfstring{$\mathscr{L}_{n}$}{}}
We use the censored quantile process $\hat{\boldsymbol{\beta}}_{n}(\tau)$ to establish the asymptotic normality of $\mathscr{L}{n}$, as defined in \eqref{eq:2.12}. Since the L-estimator is specified as a linear functional of the quantile process, the asymptotic normality of $\mathscr{L}_{n}$ is simply inherited from the properties of the censored quantile estimator.
Furthermore, the minimand $\mathscr{Q}_{n}(\boldsymbol{\beta},\hat{\boldsymbol{\delta}}_{n},\tau)$ in \eqref{eq:2.10} is non-differentiable, thus the standard method of demonstrating asymptotic normality via a Taylor expansion of the objective function is not applicable here. To address this, we derive the joint asymptotic distribution of the first-stage least squares estimator and the censored quantile estimator, $(\hat{\boldsymbol{\beta}}_{n}(\tau), \hat{\boldsymbol{\delta}}_{n})^\top$, by treating them as the joint solution to the system of first-order conditions defined in \eqref{eq:7.17}. We use these results to prove the asymptotic normality of $\mathscr{L}_{n}$.

We use an empirical process-based technique for weakly dependent data, which is especially useful for determining the asymptotic normality of estimators obtained from non-smooth loss functions. For example, \cite{Zhang21102025} used this strategy in threshold regression models, where estimators are determined by moment conditions that are not differentiable with respect to the parameters. However, in our scenario, the methods proposed by \cite{Zhang21102025} require modification to accommodate the two-stage estimation approach inherent in the Tobit model, which was not present in their framework.

As the loss function involved in \eqref{eq:2.10} is not differentiable at $\boldsymbol{\beta},$ where $\boldsymbol{x}^\top \boldsymbol{\beta}=0,$ we must ruled out a sequence of $\boldsymbol{x}_{i}$ values that are  orthogonal to $\boldsymbol{\beta}$ and have a positive frequency, establishing the asymptotic normality of $\hat{\boldsymbol{\beta}}_{n}(\uptau).$ The Assumption~\ref{a:3.3.7} is sufficient for this purpose.

This section investigates the asymptotic normality of $\mathscr{L}_{n}$ after appropriate normalization, and a few more assumptions, along with the earlier assumptions \ref{a:3.3.1} to \ref{a:3.3.5} are required.
 
  \begin{a1}\label{a:3.3.6}
       $\boldsymbol{\beta}_{0}(\uptau)$ is Lipschitz in $\uptau$ such that $\|\boldsymbol{\beta_{0}}(\mathcal{\tau^{'}})-\boldsymbol{\beta}_{0}(\uptau^{''})\|\leq c_{0}\;\|\uptau^{'}-\uptau^{''}\|,$ where $c_{0}$ is some constant.
     \end{a1}
     \smallskip
      \begin{a1}\label{a:3.3.7}
     For some constants $K_{1}>0$ and $\zeta_{0}>0$, and for all $1\leq i\leq n,$  the random vector $\boldsymbol{x}_{i}$ defined in \eqref{eq:2.4}, under the condition $\|\boldsymbol{\beta}-\boldsymbol{\beta}_{0}(\uptau)\|<\zeta_{0}$, satisfy the following inequality for all $0\leq z^*< \zeta_{0},$ and $ \tilde{r} = 0,1,2:$ 
$$\mathbb{E}\left[\mathds{1}\left(|\boldsymbol{x}^\top_{i}\boldsymbol{\beta}|\leq\|\boldsymbol{x}_{i}\|.z^*\right)\|\boldsymbol{x}_{i}\|^{\tilde{r}}\right]
  \leq K_{1}.z^*
  .$$
\end{a1}
\smallskip
 \begin{a1}\label{a:3.3.8}
 The Jacobian matrix $\mathbb{J}(\uptau)$ defined in \eqref{eq:3.3.5} is continuous with full rank, and $\mathbb{J}_{2,\boldsymbol{\delta}}(\uptau)$ has full column rank uniformly over $\mathcal{T}$. Moreover, the matrices $\mathbb{V}(\uptau,\uptau')$ and $\mathbb{V}_{\boldsymbol{\beta},\boldsymbol{\beta}}(\uptau,\uptau')$ defined in \eqref{eq:3.3.6} are uniformly positive definite over $\uptau\in \mathcal{T}.$ Particularly, there exist a constants $\underline{a}$ and $\overline{a}$ such that
\begin{equation}
\begin{split}
0<\underline{a}\leq \lambda_{\min}\big(\mathbb{V}_{\boldsymbol{\beta},\boldsymbol{\beta}}(\uptau,\uptau')\big)\leq\lambda_{\max}\big(\mathbb{V}_{\boldsymbol{\beta},\boldsymbol{\beta}}(\uptau,\uptau')\big)\leq \overline{a}< \infty, \\
0<\underline{a}\leq \lambda_{\min}\big(\mathbb{V}(\uptau,\uptau')\big)\leq\lambda_{\max}\big(\mathbb{V}(\uptau,\uptau')\big)\leq\overline{a}< \infty.
\end{split}
\end{equation}
Here $\lambda_{\min}(.)$ and $\lambda_{\max}(.)$ represent the minimum and maximum eigenvalues of the corresponding matrix.
 \end{a1}
 \smallskip
 \begin{a1}\label{a:3.3.9}Suppose that the kernel function $\mathbb{K}(.)>0$ defined in \eqref{eq:7.79} is symmetric with a compact support satisfying $\int \mathbb{K}(u)\;du = 1,$ and has a bounded first derivative. Moreover, the lag truncated parameter $b_{n}$ satisfies that $b_{n}\to \infty$ and $\frac{b_{n}^2}{n}\to 0,$ and the bandwidth $h_{n}$ satisfies the $h_{n}\to 0 $ and $nh_{n}^{2}\to \infty$ as $n \to \infty.$
 \end{a1}
  \smallskip
\begin{r1} Assumption \ref{a:3.3.6} implies that censored quantile regression coefficients are sufficiently smooth with respect to the quantile index. This is a reasonable assumption when the conditional distribution of $y^{*}$ given $\boldsymbol{x}$ is sufficiently smooth. 
Further, when for all $1\leq i\leq n $, $\boldsymbol{x}^\top_{i}\boldsymbol{\beta}>0,$ with probability 1, then  assumption \ref{a:3.3.7} holds uniformly in $\uptau \in \mathcal{T},$ under some smoothness condition on the distribution of $\boldsymbol{x}_{i}$. In addition, Assumptions  \ref{a:3.3.8} is needed to study the asymptotic distribution of joint estimators, it ensures that they do not degenerate for each $\uptau \in \mathcal{T},$ which is standard in the literature for censored quantile regression. Assumption \ref{a:3.3.9} is common across the literature of density estimation, and this assumption is useful in Theorems \ref{th:3.3.5} and \ref{th:3.3.6}.   
\end{r1}
\smallskip
Theorem \ref{th:3.3.3} describes the weak convergence of the joint stochastic process defined in \eqref{eq:7.16}, which is worthy of investigating by its own importance, and moreover, the assertion of this theorem is the key step to have the asymptotic distribution of $\mathscr{L}_{n}$ after appropriate normalization.
\begin{t1}\label{th:3.3.3} For the model
described in \eqref{eq:2.5}, under the assumptions \ref{a:3.3.1}- \ref{a:3.3.8}, we have 
$$\sqrt{n}\begin{pmatrix}
          \begin{bmatrix}
          \hat{\boldsymbol{\beta}}_{n}(\uptau)\\
          \hat{\boldsymbol{\delta}}_{n}
          \end{bmatrix} -
          \begin{bmatrix}
           \boldsymbol{\beta}_{0}(\uptau)\\
          \boldsymbol{\delta}_{0}
         \end{bmatrix}
         \end{pmatrix}\Rightarrow (\mathbb{G}_{2}(.),\mathbb{G}_{1})=\mathbb{G}(.)\quad in\quad \Delta \times\ell^{\infty}( \mathscr{B}\times\mathcal{T}),$$
where $\mathbb{G}(.)$ is centered Gaussian vector process with covariance function 
\begin{equation}\label{eq:3.3.5}
    \mathbb{E}\!\left[\mathbb{G}(\uptau)\mathbb{G}(\uptau')^\top\right]= \mathbb{J}^{-1}(\uptau)\;\mathbb{V}(\uptau,\uptau')\;\mathbb{J}^{-1}(\uptau).
\end{equation}
The Jacobian matrix $\mathbb{J}(\uptau)$ and covariance function matrix 
$\mathbb{V}(\uptau,\uptau')$ are given by 
\begin{equation}\label{eq:3.3.6}
\mathbb{J}(\uptau)=
\begin{bmatrix}
     \mathbb{J}_{2,\boldsymbol{\beta}}(\uptau) & \mathbb{J}_{2,\boldsymbol{\delta}}(\uptau)\\[6pt]
     \boldsymbol{0} & \mathbb{J}_{1,\boldsymbol{\delta}}
\end{bmatrix},
\qquad
\mathbb{V}(\uptau,\uptau')=
\begin{bmatrix}
        \mathbb{V}_{\boldsymbol{\beta},\boldsymbol{\beta}}(\uptau,\uptau') & \mathbb{V}_{\boldsymbol{\beta},\boldsymbol{\delta}} (\uptau)\\[6pt]
       \mathbb{V}_{\boldsymbol{\delta},\boldsymbol{\beta}}(\uptau)& \mathbb{V}_{\boldsymbol{\delta},\boldsymbol{\delta}}
\end{bmatrix},
\end{equation}
where the components are defined as
\begin{equation}\label{eq:3.3.7}
     \begin{split}
     \mathbb{J}_{1,\boldsymbol{\delta}}&=\mathbb{E}\big[\boldsymbol{z}\boldsymbol{z}^\top\big]\\
         \mathbb{J}_{2,\boldsymbol{\beta}}(\uptau)&=\mathbb{E}\big[\mathds{1}\big(\boldsymbol{x}^\top\boldsymbol{\beta}_{0}(\uptau)>0\big)f_{\varepsilon_{\uptau}}\big(0|\boldsymbol{x}\big)\boldsymbol{x}\boldsymbol{x}^\top\big]\\
         \mathbb{J}_{2,\boldsymbol{\delta}}(\uptau)&=\mathbb{E}\big[\rho_{10}\mathds{1}\big(\boldsymbol{x}^\top\boldsymbol{\beta}_{0}(\uptau)>0\big)f_{\varepsilon_{\uptau}}\big(0|\boldsymbol{x}\big)\boldsymbol{x}\boldsymbol{z}^\top\big]\\
          \mathbb{V}_{\boldsymbol{\delta},\boldsymbol{\delta}}&=\underset{n \to \infty}{\lim}\mathrm{Var}\left(\frac{1}{\sqrt{n}}\sum_{i=1}^{n}\Psi_{1}(\vartheta_{i},\boldsymbol{z}_{i},\boldsymbol{\delta}_{0})\right)
     \end{split}
 \end{equation}
 \begin{equation*}
     \begin{split}
          \mathbb{V}_{\boldsymbol{\beta},\boldsymbol{\beta}}(\uptau,\uptau')&=\underset{n \to \infty}{\lim}\frac{1}{n}\sum_{i=1}^{n}\sum_{j=1}^{n}\mathbb{E}\big[\Psi_{2}(\boldsymbol{s}_{j},\boldsymbol{\beta}_{0},\boldsymbol{\delta}_{0},\uptau)\Psi_{2}(\boldsymbol{s}_{i},\boldsymbol{\beta}_{0},\boldsymbol{\delta}_{0},\uptau')^\top\big]\\
          \mathbb{V}_{\boldsymbol{\beta},\boldsymbol{\delta}} (\uptau)&=\underset{n \to \infty}{\lim}\frac{1}{n}\sum_{i=1}^{n}\sum_{j=1}^{n}\mathbb{E}\big[\Psi_{2}(\boldsymbol{s}_{j},\boldsymbol{\beta}_{0},\boldsymbol{\delta}_{0},\uptau)\Psi_{1}(\vartheta_{i},\boldsymbol{z}_{i},\boldsymbol{\delta}_{0})^\top\big]\\
          \mathbb{V}_{\boldsymbol{\delta},\boldsymbol{\beta}}(\uptau)&=\mathbb{V}_{\boldsymbol{\beta},\boldsymbol{\delta}}(\uptau)^\top.
     \end{split}
 \end{equation*}
An application of partitioned matrix multiplication to this result will yield the conclusions:
$$\sqrt{n}(\hat{\boldsymbol{\delta}}_{n}-\boldsymbol{\delta}_{0})\xrightarrow{d}N_{p+1}\big(0,\; \mathbb{J}^{-1}_{1,\boldsymbol{\delta}}\mathbb{V}_{\boldsymbol{\delta},\boldsymbol{\delta}}(\mathbb{J}^{-1}_{1,\boldsymbol{\delta}})^\top\big),$$
and $$\sqrt{n}\big(\hat{\boldsymbol{\beta}}_{n}(\uptau)-\boldsymbol{\beta}_{0}(\uptau)\big)\Rightarrow \mathcal{G}_{2}(\uptau),\quad \text{in} \; \ell^{\infty}(\mathscr{B}\times\mathcal{T})$$
where $\mathcal{G}_{2}(.)$ is the Gaussian process with covariance kernel $\mathbb{E}[\mathcal{G}_{2}(\uptau)\mathcal{G}_{2}(\uptau')^\top]$ defined as
\begin{equation}\label{eq:3.3.8}
\begin{split}
  \mathbb{E}[\mathcal{G}_{2}(\uptau)\mathcal{G}_{2}(\uptau')^\top]&=\mathbb{J}^{-1}_{2,\boldsymbol{\beta}}(\uptau)\big\{ \mathbb{V}_{\boldsymbol{\beta},\boldsymbol{\beta}}(\uptau,\uptau')-\mathbb{V}_{\boldsymbol{\delta},\boldsymbol{\beta}}(\uptau)(\mathbb{J}^{-1}_{1,\boldsymbol{\delta}})^\top\mathbb{J}^\top_{2,\boldsymbol{\delta}}(\uptau)-\mathbb{J}_{2,\boldsymbol{\delta}}(\uptau)\mathbb{J}^{-1}_{1,\boldsymbol{\delta}}\mathbb{V}_{\boldsymbol{\beta},\boldsymbol{\delta}}(\uptau)\\
&+\mathbb{J}_{2,\boldsymbol{\beta}}(\uptau)(\mathbb{J}^{-1}_{1,\boldsymbol{\delta}}\mathbb{V}_{\boldsymbol{\delta},\boldsymbol{\delta}}(\mathbb{J}^{-1}_{1,\boldsymbol{\delta}})^\top)\mathbb{J}^\top_{2,\boldsymbol{\beta}}(\uptau)\big\}(\mathbb{J}^{-1}_{2,\boldsymbol{\beta}}(\uptau))^\top. 
\end{split}
\end{equation}

\end{t1}
\begin{proof}[$\textbf{Proof}$]
See the proof in Appendix-\ref{B.4}.
\end{proof}
\begin{r1}\label{r4}Note that, if all regressors are exogenous, the first component of the covariance matrix in \eqref{eq:3.3.8} coincides with the asymptotic variance of the censored quantile estimator $\hat{\boldsymbol{\beta}}_{n}(\uptau)$ in model \eqref{eq:2.5}. By Theorem~\ref{th:3.3.3}, the presence of endogenous regressors increases the variability of $\hat{\boldsymbol{\beta}}_{n}(\uptau),$ in a positive-semidefinite sense through the last term
$\mathbb{J}_{2,\boldsymbol{\beta}}(\uptau)(\mathbb{J}^{-1}_{1,\boldsymbol{\delta}}\mathbb{V}_{\boldsymbol{\delta},\boldsymbol{\delta}}(\mathbb{J}^{-1}_{1,\boldsymbol{\delta}})^\top)\mathbb{J}^\top_{2,\boldsymbol{\beta}}(\uptau),$ which captures the variability from the first-stage estimation. In addition, the cross-terms represent the covariance between first-stage and second-stage estimators. Thus, the full adjustment comprising the positive semidefinite sense together with the cross-variance corrections arises from the the first-stage residuals$e_{i}.$ Whenever $\mathbb{J}_{2,\boldsymbol{\delta}}(\uptau)\neq 0,$
ignoring this adjustment leads to asymptotic standard errors that are strictly smaller in the positive semidefinite sense than the correct standard errors of $\hat{\boldsymbol{\beta}}_{n}(\uptau),$ and therefore to confidence intervals that are invalid under endogeneity.
\end{r1}
\begin{r1}\label{r5}
An implication of Theorem \ref{th:3.3.3} is that for any given $\uptau,$ we have
$$\sqrt{n}\begin{pmatrix}
          \begin{bmatrix}
          \hat{\boldsymbol{\beta}}_{n}(\uptau)\\
          \hat{\boldsymbol{\delta}}_{n}
          \end{bmatrix} -
          \begin{bmatrix}
           \boldsymbol{\beta}_{0}(\uptau)\\
          \boldsymbol{\delta}_{0}
         \end{bmatrix}
         \end{pmatrix}\xrightarrow{d} Z,$$
where $Z$ is a $\mathbb{R}^{2p+3}$ valued random vector associated with a $2p+3$-dimensional multivariate normal
distribution, with mean vector $\boldsymbol{0}$ and covariance matrix $\mathbb{J}(\uptau)^{-1}\mathbb{V}(\uptau,\uptau)$
$\mathbb{J}(\uptau)^{-1}.$
Next, suppose that for any finite collection of quantile indices, $\{\uptau_{i}, i=\{1,2,\ldots,k\}\},$ the vector of k regression quantile estimator is represented by $\Hat{\zeta_{n}}=(\hat{\boldsymbol{\beta}}_{n}(\uptau_{1}),\ldots,\Hat{\boldsymbol{\beta}}_{n}(\uptau_{k}),\hat{\boldsymbol{\delta}}_{n})^\top,$
and the corresponding vector of true unknown regression quantiles is represented by
$\zeta_{0}=(\boldsymbol{\beta}_{0}(\uptau_{1}),\ldots,\boldsymbol{\beta}_{0}(\uptau_{k}),\boldsymbol{\delta}_{0})^\top$. Then 
$$\sqrt{n}(\Hat{\zeta}_{n}-\zeta_{0})\xrightarrow{d} \Tilde{Z},$$
where $\Tilde{Z}$ is a $\mathbb{R}^{k(p+2)+(p+1)}$ valued random vector associated with a standard m-dimensional multivariate
normal distribution with mean zero vector and covariance matrix $\Omega_{k}=\mathbb{J}_{k}^{-1}\mathbb{V}_{k}\mathbb{J}_{k}^{-1}.$ Here Jacobian matrix $\mathbb{J}_{k}$ is defined by 
$$
\mathbb{J}_k =
\begin{bmatrix}
\mathbb{J}_{2,\boldsymbol{\beta}}(\uptau_1) & & & \mathbb{J}_{2,\boldsymbol{\delta}}(\uptau_1)\\
& \ddots & & \vdots\\
& & \mathbb{J}_{2,\boldsymbol{\beta}}(\uptau_k) & \mathbb{J}_{2,\boldsymbol{\delta}}(\uptau_k)\\
\boldsymbol{0} & \cdots & \boldsymbol{0} & \mathbb{J}_{1,\boldsymbol{\delta}}
\end{bmatrix}.$$
The covariance matrix $\mathbb{V}_k$ is defined by 
$$
\mathbb{V}_k =
\begin{bmatrix}
\mathbb{V}_{\boldsymbol{\beta},\boldsymbol{\beta}}(\uptau_1,\uptau_{1}) & \cdots & \mathbb{V}_{\boldsymbol{\beta},\boldsymbol{\beta}}(\uptau_1,\uptau_k) & \mathbb{V}_{\boldsymbol{\beta},\boldsymbol{\delta}}(\uptau_1)\\
\vdots & \ddots & \vdots & \vdots \\
\mathbb{V}_{\boldsymbol{\beta},\boldsymbol{\beta}}(\uptau_k,\uptau_1) & \cdots & \mathbb{V}_{\boldsymbol{\beta},\boldsymbol{\beta}}(\uptau_k,\uptau_k) & \mathbb{V}_{\boldsymbol{\beta},\boldsymbol{\delta}}(\uptau_k,\uptau_k)\\
\mathbb{V}_{\boldsymbol{\delta},\boldsymbol{\beta}}(\uptau_1) & \cdots & \mathbb{V}_{\boldsymbol{\delta},\boldsymbol{\beta}}(\uptau_k) & \mathbb{V}_{\boldsymbol{\delta},\boldsymbol{\delta}}
\end{bmatrix}.
$$
\end{r1}

Finally, Theorem \ref{th:3.3.4} describes the asymptotic distribution of $\mathscr{L}_{n}$ after appropriate normalization, which is the key result of this work.

\begin{t1}\label{th:3.3.4} For the model
described in \eqref{eq:2.5}, under the assumptions \ref{a:3.3.1}--\ref{a:3.3.8}, we have 
$$\sqrt{n}(\mathscr{L}_{n}-\mathscr{L}_{0})\xrightarrow{d}Z_{1},$$
where $Z_1$ is a $\mathbb{R}^{p+2}$ valued random vector associated with $(p+2)$-dimensional multivariate normal distribution, with mean vector $\boldsymbol{0}$ and covariance matrix,
\begin{equation}\label{eq:3.3.9}
\boldsymbol{\Omega}= \underset{n \to \infty}{\lim}\mathrm{Var}\Big(\frac{1}{\sqrt{n}}\sum_{i=1}^{n}h_{i}\Big).
\end{equation}
Here, $h_{i}=\int_{0}^{1}v_{i}(\uptau)J_{1}(\uptau)\;d(\uptau),$ where  $v_{i}(\uptau)$ is defined in \eqref{eq:7.50} and $J_{1}(\uptau)$ is given in \ref{a:3.3.5}.
\end{t1}
\begin{proof}[$\textbf{Proof}$]
See the proof in Appendix-\ref{B.5}.
\end{proof}
\begin{r1}\label{r6} In order to implement the assertion in Theorem \ref{th:3.3.4} in practice, one needs to consistently estimate $\boldsymbol{\Omega}$ described in the statement of this theorem. 
\end{r1}
\subsection{Consistent Estimator of \texorpdfstring{$\boldsymbol{\Omega}$}{}}
 As we mentioned in Remark \ref{r6}, for a given data, one needs to estimate $\boldsymbol{\Omega}$ consistently to implement the asymptotic distribution stated in Theorem \ref{th:3.3.4}. Let $\hat{\boldsymbol{\Omega}}$ be an HAC estimator of $\boldsymbol{\Omega}$, which is defined as follows:  
\begin{eqnarray}\label{eq:3.3.10}
\Hat{\boldsymbol{\Omega}}=\frac{1}{n}\sum_{j=-n+1}^{n}\;\sum_{i=1}^{n-j}\hat{h}_{i}\hat{h}_{i+j}^\top \;\mathbb{K}\Big(\frac{j}{b_{n}}\Big), \end{eqnarray}
where $\mathbb{K}(.)$ and $b_{n}$ are given in Assumption~\ref{a:3.3.9}. Here $\hat{h}_{i} = \int\limits_{0}^{1}\hat{\Tilde{v}}_{i}(\tau)\hat{J}_{1}(\tau) d(\tau)$, where $\hat{\Tilde{v}}_{i}(\uptau)=\hat{\mathbb{J}}^{-1}_{2,\boldsymbol{\beta}}(\uptau)\big\{\Psi_{2}\big(\boldsymbol{s}_{i},\hat{\boldsymbol{\beta}}_{n}(\uptau),\hat{\boldsymbol{\delta}}_{n},\uptau\big)-\hat{\mathbb{J}}_{2,\boldsymbol{\delta}}(\uptau)\hat{\mathbb{J}}^{-1}_{1,\boldsymbol{\delta}}
 \Psi_{1}\big(\vartheta_{i},\boldsymbol{z}_{i},\hat{\boldsymbol{\delta}}_{n}\big)\big\},$
$i = 1, \ldots, n - j$ .

Theorem \ref{th:3.3.5} establishes the uniform consistency of $\hat{\mathbb{J}}(\uptau)$ and $\hat{\mathbb{V}}(\uptau,\uptau')$ (see \eqref{eq:7.73} and \eqref{eq:7.78}, respectively for their expressions). It is an intermediate result, and it is required to establish the consistency of $\hat{\boldsymbol{\Omega}}.$
\begin{t1}\label{th:3.3.5} Under the Assumptions \ref{a:3.3.1}-\ref{a:3.3.9}, we have
$$\hat{\mathbb{J}}(\uptau)-\mathbb{J}(\uptau)=o_{p}(1),\quad \hat{\mathbb{V}}(\uptau,\uptau')-\mathbb{V}(\uptau, \uptau')=o_{p}(1),$$
as $n \to \infty$ uniformly over $\uptau \in \mathcal{T}.$ Here, $\hat{\mathbb{J}}(\uptau)$ and $\hat{\mathbb{V}}(\uptau,\uptau')$ are the same as defined in \eqref{eq:7.73} and \eqref{eq:7.78}, respectively, and $\mathbb{J}(\uptau)$ and $\mathbb{V}(\uptau, \uptau')$ are the same as defined in \eqref{eq:3.3.6}.
\end{t1}
\begin{proof}[$\textbf{Proof}$]
See the proof in Appendix-\ref{B.6}.
\end{proof}
Now, Theorem \ref{th:3.3.6} asserts the weak consistency of $\hat{\boldsymbol{\Omega}}$ to $\boldsymbol{\Omega}$.
\begin{t1}\label{th:3.3.6}
Under the conditions \ref{a:3.3.1}-- \ref{a:3.3.9}, we have $$||\hat{\boldsymbol{\Omega}}-\boldsymbol{\Omega}||_{\mathbb{R}^{p+2}}=o_{p}(1),$$ as $n \to \infty,$  where $\hat{\boldsymbol{\Omega}}$ and $\boldsymbol{\Omega}$ are the same as defined in \eqref{eq:3.3.10} and \eqref{eq:3.3.9}, respectively. 
\end{t1}
\begin{proof}[$\textbf{Proof}$]
See the proof in Appendix-\ref{B.6}.
\end{proof}
Next, Corollary \ref{cor1} follows from the assertions in Theorem \ref{th:3.3.6} and Theorem \ref{th:3.3.4}. 

\begin{corollary}\label{cor1}
Under the conditions \ref{a:3.3.1}--\ref{a:3.3.6}, we have $$\sqrt{n}\hat{\boldsymbol{\Omega}}^{-\frac{1}{2}}(\mathscr{L}_{n}-\mathscr{L}_{0})\Rightarrow Z_{2},$$ where $Z_{2}$ is a $\mathbb{R}^{p+2}$ valued random vector associated with standard $p+2$-dimensional multivariate normal distribution. 
\end{corollary}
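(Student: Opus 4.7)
The plan is to obtain the corollary as a direct consequence of Theorem \ref{th:3.3.4} and Theorem \ref{th:3.3.6} via a Slutsky-type argument combined with the continuous mapping theorem. First, by Theorem \ref{th:3.3.4} we have $\sqrt{n}(\mathscr{L}_{n}-\mathscr{L}_{0})\xrightarrow{d} Z_{1}$, where $Z_{1}\sim N_{p+2}(\boldsymbol{0},\boldsymbol{\Omega})$. Second, by Theorem \ref{th:3.3.6} we have $\|\hat{\boldsymbol{\Omega}}-\boldsymbol{\Omega}\|_{\mathbb{R}^{p+2}}=o_{p}(1)$.

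The next step is to argue that $\boldsymbol{\Omega}$ is positive definite, so that $\boldsymbol{\Omega}^{-1/2}$ is well defined. This follows from the structure of $\boldsymbol{\Omega}$ in \eqref{eq:3.3.9}, which is the limiting variance of the smooth linear functional $\frac{1}{\sqrt{n}}\sum_{i=1}^{n}h_{i}$ of the joint influence functions $v_{i}(\uptau)$, together with Assumption \ref{a:3.3.8}, which ensures that the underlying covariance kernels $\mathbb{V}(\uptau,\uptau')$ and $\mathbb{V}_{\boldsymbol{\beta},\boldsymbol{\beta}}(\uptau,\uptau')$ are uniformly positive definite over $\mathcal{T}$, and Assumption \ref{a:3.3.5}, which guarantees that $J_{1}$ is not identically zero on its compact support. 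A direct computation using the representation of $h_{i}$ then shows that $\boldsymbol{\Omega}$ is positive definite. Since matrix inversion and the matrix square root are continuous on the open cone of positive definite matrices, the continuous mapping theorem yields $\hat{\boldsymbol{\Omega}}^{-1/2}\xrightarrow{p}\boldsymbol{\Omega}^{-1/2}$.

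Finally, I apply Slutsky's theorem to the product: since $\sqrt{n}(\mathscr{L}_{n}-\mathscr{L}_{0})\xrightarrow{d} Z_{1}$ and $\hat{\boldsymbol{\Omega}}^{-1/2}-\boldsymbol{\Omega}^{-1/2}=o_{p}(1)$, we obtain
\begin{equation*}
\sqrt{n}\,\hat{\boldsymbol{\Omega}}^{-1/2}(\mathscr{L}_{n}-\mathscr{L}_{0})=\boldsymbol{\Omega}^{-1/2}\sqrt{n}(\mathscr{L}_{n}-\mathscr{L}_{0})+o_{p}(1)\xrightarrow{d}\boldsymbol{\Omega}^{-1/2}Z_{1}.
\end{equation*}
Since $Z_{1}\sim N_{p+2}(\boldsymbol{0},\boldsymbol{\Omega})$, linearity of the normal distribution gives $\boldsymbol{\Omega}^{-1/2}Z_{1}\sim N_{p+2}(\boldsymbol{0},I_{p+2})$, which we identify with $Z_{2}$, completing the argument.

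The main obstacle in carrying this out rigorously is the verification that $\boldsymbol{\Omega}$ is positive definite; once positive definiteness is established, the continuity of the matrix square root inverse and Slutsky's theorem reduce the rest of the proof to a routine limit argument. All other steps are mechanical applications of standard tools in weak convergence theory, using only results already proved in Theorems \ref{th:3.3.4} and \ref{th:3.3.6}.
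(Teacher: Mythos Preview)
Your proposal is correct and follows essentially the same approach as the paper, which simply states that the result follows from combining Theorem~\ref{th:3.3.4} with Theorem~\ref{th:3.3.6}. You have merely made explicit the standard Slutsky and continuous mapping steps (and the positive definiteness of $\boldsymbol{\Omega}$) that the paper leaves implicit in its one-line proof.
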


\begin{proof}[$\textbf{Proof}$]
 See the proof in Appendix-\ref{B.6}.
\end{proof}


\begin{r1}\label{r7}
In order to implement any statistical methodology based on the asymptotic properties of $\mathscr{L}_{n}$, one may employ the result stated in Corollary \ref{cor1} as $\hat{\boldsymbol{\Omega}}^{-\frac{1}{2}}$ is computable for a given data.
\end{r1}

\section{Finite Sample Simulation Studies}\label{FSSS}
The asymptotic properties of the estimator give an idea about the behaviour of the estimator when the sample size is large. To study and understand the behaviour of the proposed estimators in finite samples, we conduct a Monte Carlo simulation study. 

 The data on $ y^{*}_{i} (i = 1, \ldots, n)$ is generated from the  model:
\begin{equation}\label{eq:4.1}
          y^{*}_{i} = \beta_{0}+\beta_{1}x_{i}+\beta_{2}w_{i}+\epsilon_{i},
\end{equation}
and the data on the endogenous variable $w_{i}$ is generated using 
\begin{equation}\label{eq:4.2}
    w_{i} = \Tilde{\delta} z_{i}+\nu_{i},
\end{equation} where the random errors $\nu_{i}$ are generated from   $N(0,1)$, while $z_{i}$ is considered as the fixed observations from the $U(0,1)$. Afterwards, the observations on $w_{i}$ are generated for given values of $\Tilde{\delta}=1$. To address endogeneity within the model, we employ the instrumental variable estimation techniques (see Section \ref{Est}) and  decompose the error term  $\epsilon_{i}$ in \eqref{eq:4.1} into two distinct components  as:
\begin{equation}\label{eq:4.3}
    \epsilon_{i}=\rho_{1}\nu_{i}+\varepsilon_{i}.
\end{equation}
The random errors $\epsilon_{i}$ are generated using the relationship provided in \eqref{eq:4.3} for the given $\rho_1$ and the observations on  $\varepsilon_{i}$. Then $y_{i}$ is generated using  
\begin{equation}\label{eq:4.4}
          y_{i} =\max(0, y^{*}_{i})= \max(0,\beta_{0}+\beta_{1}x_{i}+\beta_{2}w_{i}+\rho_{1}\nu_{i}+\varepsilon_{i}).
\end{equation}
In this study, we choose $\boldsymbol{\beta} = (\beta_{0}, \beta_{1}, \beta_{2}, \rho_{1}) = (1, 2, 3, 0.5)$ and generate exogenous regressors $x_{i}$ from $N(0, 1)$. 

\noindent{\bf Dependent Errors:}
\noindent To introduce the dependent error structure, consider an auto-regressive process of first order as
\begin{equation}\label{eq:4.5}
    \varepsilon_{i}=\rho^{*} \varepsilon_{i-1}+\eta_{1,i}, i = 1, \ldots, n,
    \end{equation}
where $|\rho^{*}|\leq 1$ is the autocorrelation coefficient, and $\eta_{1,i}$ ($i = 1, \ldots, n$) are independent and identically distributed random variables that follow $N(0, \sigma^{2})$, see \cite{Zhou} for discussion on such models. In this numerical study, $\varepsilon_{i},$s are generated using the model \eqref{eq:4.5} with $\rho^{*} = 0.5$ and the observations $\{\eta_{1,1}, \ldots, \eta_{1,n}\}$.

The following examples are considered for the dependent error structure: 
\begin{e1}{(\textbf{$\Tilde{\alpha}$-trimmed mean}):}\label{e1} Let $J_{1}(\uptau)=\frac{1}{1-2\Tilde{\alpha}}$, where $\Tilde{\alpha} <\uptau <1-\Tilde{\alpha}$ and $\Tilde{\alpha} \in (0, 1/2)$. Then, the corresponding $\mathscr{L}_{n}$ (see \eqref{eq:2.12}) estimator takes the following form:
\begin{equation}
\mathscr{L}_{n}=\frac{1}{1-2\Tilde{\alpha}}\int_{\Tilde{\alpha}}^{1-\Tilde{\alpha}}\Hat{\boldsymbol{\beta}}_{n}(\uptau)\; d\uptau.
\end{equation}
\end{e1}

\begin{e1}{(\textbf{$\Tilde{\alpha}$-Winsorized mean}):}\label{e2} Let $J_{1}(\uptau)=1$, where $\Tilde{\alpha} <\uptau <1-\Tilde{\alpha}$ and $\Tilde{\alpha} \in (0,1/2)$. Then, the corresponding $\mathscr{L}_{n}$ estimator takes the following form:
\begin{equation}
\mathscr{L}_{n}=\int_{\Tilde{\alpha}}^{1-\Tilde{\alpha}}\Hat{\boldsymbol{\beta}}_{n}(\uptau)\; d\uptau +\Tilde{\alpha} (\Hat{\boldsymbol{\beta}}_{n}(\Tilde{\alpha}) + \Hat{\boldsymbol{\beta}}_{n}(1-\Tilde{\alpha})).
\end{equation}
\end{e1}

\begin{e1}\label{e3} Let $J_{1}(\uptau)= 6\uptau (1-\uptau)$, where $0 \leq \uptau \leq 1$. Then, the corresponding $\mathscr{L}_{n}$ estimator takes the following form:
\begin{equation}
\mathscr{L}_{n}=\int_{0}^{1} 6\uptau(1-\uptau)\;\Hat{\boldsymbol{\beta}}_{n}(\uptau)\; d\uptau. 
\end{equation}
\end{e1}

When data contains outliers or is heavily skewed, typical mean-based L-estimators might be greatly influenced. Trimmed or Winsorized means producing statistical summaries that are less impacted by extreme values, making them more informative.
Here, we use $\Tilde{\alpha} = \{0.01, 0.02, 0.20\}$ in Examples \ref{e1} and \ref{e2}, and $\beta_{0}(\uptau),$ $\beta_{1}(\uptau),$ $\beta_{2}(\uptau),$ and $\rho_{1}(\uptau)$ in \eqref{eq:4.4} are computed using 
 the ``\texttt{quantreg}'' function in R software. All the R codes for the simulated data study were executed on the server in the Linux operating system. The following steps are summarized for evaluating the integral involved in $\mathscr{L}_{n}.$

 \begin{itemize}
    \item[$\bullet$] Generate $M$ many random observations $\{\uptau_{1}, \uptau_{2},\ldots, \uptau_{m}\}$ from $\mathbb{U}(0,1).$
    \item[$\bullet$] For each sample $\uptau_{j}$, estimate $\boldsymbol{\beta}=(\beta_{0},\beta_{1},\beta_{2},\rho_{1})^\top$ using quantile regression.
    \item[$\bullet$] The integral $\mathscr{L}_{n} = (\mathcal{L}_{n0}, \mathcal{L}_{n1}, \mathcal{L}_{n2}, \mathcal{L}_{n3})^\top\in \mathbb{R}^{4}$ defined in \eqref{eq:2.11} is approximated  by
    $$\mathscr{L}_{n} \approx \frac{1}{M}\sum_{j=1}^{M}\Hat{\boldsymbol{\beta}}_{n}(\uptau_{j})J_{1}(\uptau_{j}),$$
    and $\mathscr{L}_{0}$ is approximated  by 
    $$\mathscr{L}_{0} \approx \frac{1}{M}\sum_{j=1}^{M}\boldsymbol{\beta}(\uptau_{j})J_{1}(\uptau_{j}).$$
\end{itemize}
As the estimates were getting converged with 2000 replications, we repeated this experiment $r=2000$ times for the sample size $n \in \{50,75,100,200,500,700,1000\}.$  Observe that $\mathscr{L}_{0}\in\mathbb{R}^{4}$, and for that reason, the  component-wise empirical bias Ebias and component-wise empirical mean squared error  of the parameter estimates of $$\mathscr{L}_{0}=(\mathcal{L}_{0},\mathcal{L}_{1},\mathcal{L}_{2},\mathcal{L}_{3})^\top$$ are computed as follows:
$$~\mbox{Ebias}~(\mathcal{L}_{j} )\approx \frac{1}{r} \sum_{k=1}^{r} \left(\mathcal{L}_{njk}-\mathcal{L}_{j}\right),j = 0, 1, 2, 3$$
and  
$$~\mbox{EMSE}~(\mathcal{L}_{j})\approx \frac{1}{r} \sum_{k=1}^{r} \left(\mathcal{L}_{njk}-\mathcal{L}_{j}\right)^2 , j = 0, 1, 2, 3$$
where $\mathcal{L}_{njk}$ is the estimate of $\mathcal{L}_{nj}$ ($j = 0, 1, 2, 3$) for the $k$-th replicate, where $k = 1, \ldots, r$.
For these designs, the overall censoring proportions vary between $28\%$ to $46\%$, and it is observed that the different censoring percentages do not change the qualitative conclusions.
In Section \ref{SST1}, tables \ref{table:6c}-\ref{table:8c} present the Ebias and  of the proposed estimators with censored observations for different sample sizes across the three examples. To maintain brevity, the Ebias and  are only reported for sample sizes $n\in \{50, 100, 500, 1000\}$.

The results demonstrate that all estimators improve as the sample size increases. Both Ebias and  approach zero, confirming their consistency under the dependent error. In examples \ref{e1} and \ref{e2}, the estimates $\mathscr{L}_{n2}$ and $\mathscr{L}_{n3}$ perform better with smaller Ebias and  across all sample sizes. In contrast, $\mathscr{L}_{n1}$ has a significant negative Ebias, making it the least efficient estimator. In Example~\ref{e3}, however, $\mathscr{L}_{n1}$ performs best overall, since it has the lowest Ebias and  throughout all $n$, whereas $\mathscr{L}_{n0}$ and $\mathscr{L}_{n3}$ improve with $n$ but remains less efficient.

Trimming is employed to enhance the robustness of the estimators by excluding extreme observations. The effect of the trimming proportion $\Tilde{\alpha}$ is clearly reflected in the results. When the trimming level is small ($\Tilde{\alpha} = 0.01, 0.02$), the estimators retain low bias, thereby preserving efficiency. However, when the trimming proportion is increased to $\Tilde{\alpha} = 0.20$, the Ebias becomes noticeably larger across all estimators. This inflation occurs because excessive trimming discards too much relevant information, leading to a loss of efficiency along with higher bias. Therefore, a moderate trimming level offers the best balance, providing robustness against outliers without substantially sacrificing efficiency.
\begin{figure}[H]
    \centering
    \includegraphics[width=\linewidth]{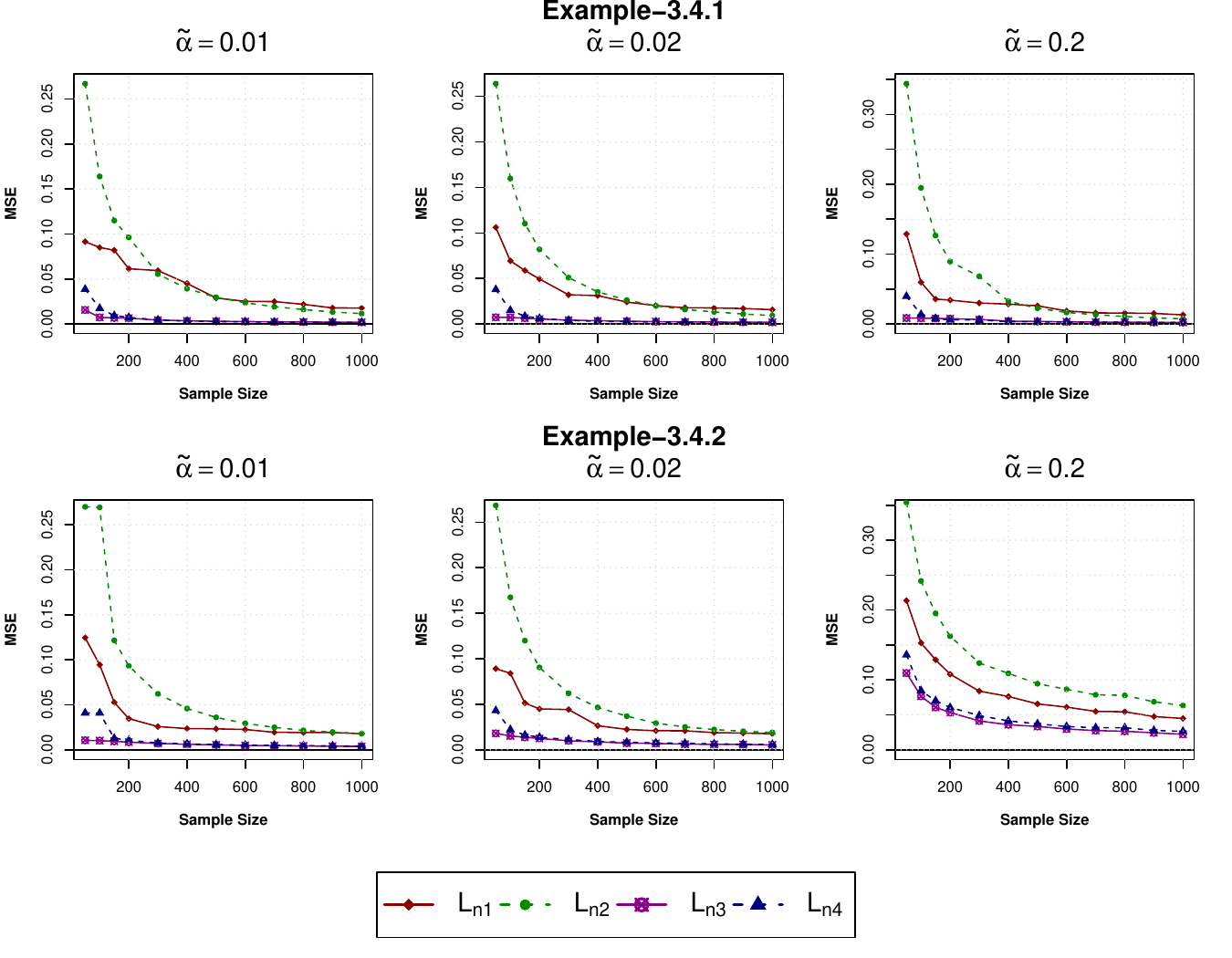}
    \caption{Empirical MSE of parameter estimates using $\Tilde{\alpha}$-trimmed mean and $\Tilde{\alpha}$-Winsorized mean (see Example \ref{e1}-\ref{e2}) for $\Tilde{\alpha}$ = 0.01, 0.02, 0.20 when errors are dependent and $n = 50, 100, 200, . . . , 1000$}
    \label{fig:3.1}
\end{figure}
\begin{figure}[H]
    \centering
    \includegraphics[width=0.55\linewidth, height=0.55\textheight]{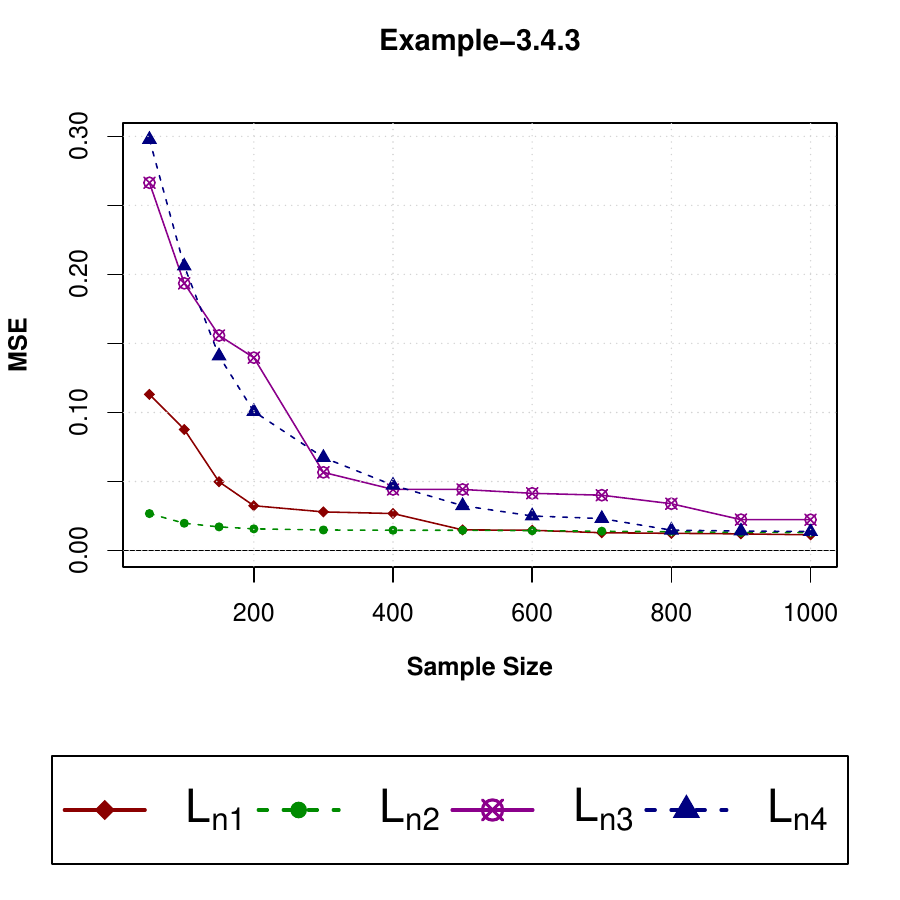}
    \caption{Empirical MSE of parameter estimates of Example \ref{e3} when errors are dependent and $n = 50, 100, 200, . . . , 1000.$}
    \label{fig:3.2}
\end{figure}
The graphical results for , presented in Figures \ref{fig:3.1}–\ref{fig:3.2}, demonstrate that the proposed estimator $\mathscr{L}_{n}$ performs well across the considered examples. The plots reveal that  tends to increase with higher trimming proportions, particularly in smaller sample sizes, while it decreases substantially as the sample size grows. This pattern highlights the usefulness of $\mathscr{L}_{n}$ in handling datasets contaminated with outliers. A similar trend is observed for the magnitude of Ebias, further confirming the robustness of the estimator.
\section{Real Data Analysis}\label{RDA}
We analyze well-known Mroz dataset, which is included in the ``Wooldridge" R package (available at \url{https://github.com/swati-1602/CRLM-Real-data.git}) to demonstrate the application of our proposed estimators. This dataset focuses on the labor force participation of U.S. women and includes data for 753 individuals. The response variable of interest is the total number of hours worked per week. Notably, 325 out of the 753 women reported not working any hours, rendering the dependent variable left-censored at zero. This implies that the censoring proportion C.P. is of approximately $0.43$.

In this dataset, the explanatory variables include the woman's years of education (educ), years of experience (exper) along with its squared value (expersq), age (age), the number of young children under six years (kidslt6), the number of older children six years or older (kidsge6), and non-wife household income (nwifeinc). We treat non-wife household income as an endogenous variable due to its potential correlation with unobserved household preferences that could affect the wife's labour force participation. In this study,  the variable, namely, the husband's years of education (huseduc) is considered as an instrumental variable in addressing this endogeneity. It is assumed that this variable does not directly impact the wife's decision to enter the labour force, but it presumably affects both his income and the income of the now-wife's household. The graphical representation in Figure \ref{Data} illustrates censored observations for the working hours of married women. The hours variable represents the total number of hours a woman worked in a year. A married woman's maximum recorded work hours are 4950. For instance, a typical full-time job consists of 40 hours per week for 52 weeks, totaling 2080 hours each year. Thus, 4950 hours imply that the individual worked around 91 hours per week in a regular 52-week year, indicating higher labour force participation.
\begin{figure}[hbt!]
    \centering
    \includegraphics[width=0.7\textwidth, height=0.5\textheight]{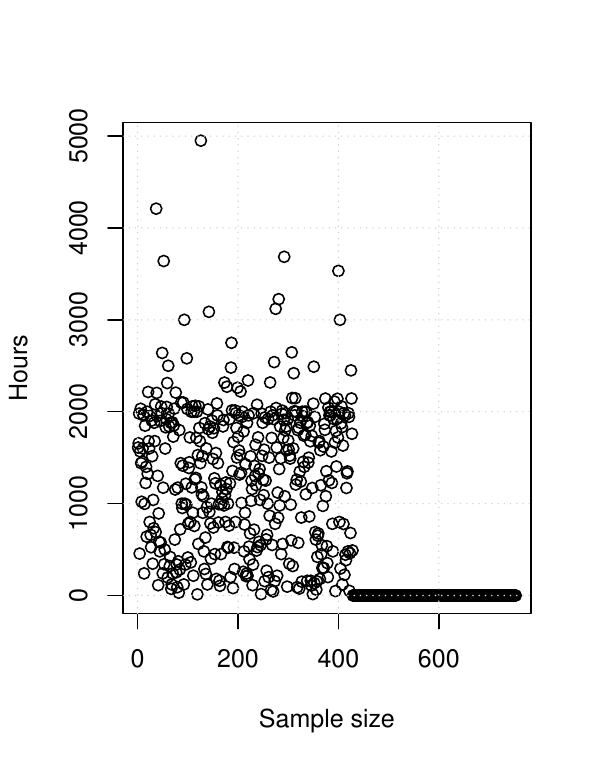}
    \caption{Working hours (yearly) for married women}
    \label{Data}
\end{figure}
 
 We obtain the estimates of the parameters and their bootstraps root mean squared error 
 to verify the theoretical result on the real data set. The steps to compute the bootstrap root mean squared error BRMSE of the parameters involved in the model are described in \eqref{eq:2.4}:
\begin{itemize}
    \item[$\blacksquare$]  Following the same algorithm provided in Section \ref{FSSS}, for the given data with size $n$, we find the approximated values of the $q$  components of $\mathscr{L}_{n}$ and  denote those components as $\mathcal{L}_{n, 1}, \ldots, \mathcal{L}_{n, q},$ where $q=8$ and $n=753.$ 
 \item[$\blacksquare$]Generate $b$ bootstrap samples with the same sample size as the original data set.
  \item[$\blacksquare$] Compute $\mathscr{L}_{n}$ for $b$ bootstrap resamples, which are denoted as $\{\mathscr{L}_{n, 1}, \ldots, \mathscr{L}_{n, b}\}$.
  \item[$\blacksquare$] Finally, 
   the bootstrap root mean squared error BRMSE is computed as 
     $$ BRMSE (\mathcal{L}_{n, i}) =\sqrt{\frac{1}{b}\sum_{k=1}^{b}\left(\mathcal{L}_{n, i, k}-\mathcal{L}_{n, i}\right)^2},$$
where for $i = 1, \ldots, q$, $\mathcal{L}_{n, i, k}$ is the value of $\mathcal{L}_{n, i}$ for the $k$-th bootstrap resample ($k = 1, \ldots, b$).  
\end{itemize}
Table \ref{table:1r} represents the parameter estimates for the model defined in \eqref{eq:2.4} for the $\Tilde{\alpha}$-trimmed mean and the $\Tilde{\alpha}$-Winsorized mean defined in Section \ref{FSSS}. These estimates are provided for $\Tilde{\alpha} = \{0.01, 0.02, 0.20\}$. Table \ref{table:2r} reports the BRMSE  for these parameter estimates across different bootstrap sample sizes
$b =\{25,50,75,100,200\}$. Additionally, Table \ref{table:3r} contains the parameter estimates and BRMSE of the estimator defined in Example \ref{e3} in Section \ref{FSSS}.

\begin{table}[H]
    \small
    \centering
    \caption{Parameter estimates for $\Tilde{\alpha}$-trimmed mean and $\Tilde{\alpha}$-Winsorized mean, for $\Tilde{\alpha}=0.01, 0.02, 0.20$}
    \begin{tabular}{|c|c|c|c|c|c|c|}
       \toprule
        \specialrule{\heavyrulewidth}{\abovetopsep}{\belowbottomsep}
          & \multicolumn{3}{c}{ $\boldsymbol{\Tilde{\alpha}}$\bf-trimmed mean}&\multicolumn{2}{c}{$\boldsymbol{\alpha}$\bf-Winsorized mean}&\\
          \specialrule{\heavyrulewidth}{\abovetopsep}{\belowbottomsep}
           \bf Variables&$\boldsymbol{\Tilde{\alpha} = 0.01}$&$\boldsymbol{\Tilde{\alpha} = 0.02}$&$\boldsymbol{\Tilde{\alpha} = 0.20}$&$\boldsymbol{\Tilde{\alpha} = 0.01}$&$\boldsymbol{\Tilde{\alpha} = 0.02}$&$\boldsymbol{\Tilde{\alpha} = 0.20}$\\
        \midrule
       \specialrule{\heavyrulewidth}{\abovetopsep}{\belowbottomsep}
        age &2.3004 &1.9851&-4.4195&2.6723&3.2969&-0.0265 \\
        educ &15.1809 &15.3170&12.3471&14.6552&14.3941&11.9117\\
        exper &23.7517&23.5119 &21.8545&24.7803&23.4696& 21.4691 \\
        expersq & -4.2199&-4.2695&-4.7762&-4.3246&-4.5611&-4.2299\\
        kidslt6 & -7.7039&-7.0448&-8.3693&-5.7924&-7.7605&-5.6729\\
        kidsge6 & -4.3523& -4.0552&-2.5310&-5.4866&-5.6679& 0.4380\\
        nwifeinc & -1.3488 & -1.9850&-5.3529&-1.2859&-1.4251&-2.2519\\
        residual & -4.4149 & -4.1638&-4.3432&-4.5116&-3.8497& -5.2681\\
        \bottomrule  
        \specialrule{\heavyrulewidth}{\abovetopsep}{\belowbottomsep}     
    \end{tabular}
    \label{table:1r}
\end{table}

\begin{table}[H]
    \small
    \centering
    \caption{Parameter estimates and BRMSE for the estimator defined in Example \ref{e3}}
    \begin{tabular}{|c|c|c|c|c|c|c|}
        \toprule
         \specialrule{\heavyrulewidth}{\abovetopsep}{\belowbottomsep}
         & \multicolumn{5}{c}{\bf Estimator and BRMSE for Example \ref{e3}}& \\
          \specialrule{\heavyrulewidth}{\abovetopsep}{\belowbottomsep}
           \bf Variables& \bf Estimates&$\boldsymbol{b = 25}$&$\boldsymbol{ b = 50}$ & $\boldsymbol{b = 75}$&$\boldsymbol{b =100}$& $\boldsymbol{b = 200}$ \\
        \midrule
    \specialrule{\heavyrulewidth}{\abovetopsep}{\belowbottomsep}
        age & -1.9084& 4.1714&5.6643&5.6334&5.4617&5.3665
        \\
        educ & 19.8326&17.4322&19.4899&18.4279&18.4255&17.8923 \\
        exper & 29.4500 &27.9768&29.6823&29.6398&29.1016&28.7205 \\
        expersq & -3.6951 & 9.3467&7.7024&7.4326&6.0869&6.0422\\
        kidslt6 & -9.6599 &10.8408&10.6208&11.3201&11.2522&10.6196 \\
        kidsge6 & -3.3560 &8.9355&7.5635&7.0924& 6.6037&5.4690\\
        nwifeinc & -4.1847&8.1198& 7.6849&7.5052& 7.4043& 6.6104\\
        residual & -4.3298 &8.1653&8.1534&8.5166&8.0292&7.9207\\
        \bottomrule  
        \specialrule{\heavyrulewidth}{\abovetopsep}{\belowbottomsep}     
    \end{tabular}
    \label{table:3r}
\end{table}
In Table \ref{table:1r}, $\Tilde{\alpha}$-trimmed mean and $\Tilde{\alpha}$-Winsorized mean demonstrate a consistent trend: the positive age coefficient when $\Tilde{\alpha} = {0.01, 0.02}$, and it suggests that as married women's age increases, they tend to increase their hours of work. However, for $\Tilde{\alpha} = 0.20$, the conclusion is the opposite, which makes sense as hours of work are supposed to be decreased as age increases for married women. It further indicates that the data set is likely to have some outliers (see Figure \ref{Data}). Also, the estimator defined in Example \ref{e3} in Section \ref{FSSS} presents a negative age coefficient, which implies that as married women grow older, they tend to reduce their working hours. Despite these differences, all three estimators demonstrate a positive coefficient for education (education) and experience (experience), implying that married women with higher levels of education and experience tend to work longer hours. Also, the coefficient for the squared term of experience (expersq) is negative for all three estimators. This suggests that while greater experience initially leads to more working hours, this effect diminishes as experience increases. Furthermore, the presence of small children under 6 or children over 6 is associated with fewer working hours, as indicated by the negative coefficients for the variables (kidslt6) and (kidsge6). In the overall context, working hours are positively influenced by age, education, and experience in the case of $\Tilde{\alpha}$-trimmed mean and $\Tilde{\alpha}$-Winsorized mean. However, they are negatively impacted by the expersq, kidslt6 or kidsge6, nwifeinc, and whereas in the case of the estimator defined in Example \ref{e3}, working hours are positively influenced by education and experience but negatively affected by the presence of ``kidsge6'', ``nwifeinc'', and other factors. The residual term also has a negative impact in this case.
\begin{table}[H]
    \small
    \centering
    \caption{BRMSE  of parameter estimates for $\Tilde{\alpha}$-trimmed mean and $\Tilde{\alpha}$-Winsorized mean when $\Tilde{\alpha} = 0.01,0.02, 0.20$}
      \resizebox{\textwidth}{!}{
 \begin{tabular}{|c|ccccc|ccccc|}
        \toprule
        \specialrule{\heavyrulewidth}{\abovetopsep}{\belowbottomsep}
              \cline{1-11}
           & \multicolumn{4}{c}{\bf BRMSE, $\boldsymbol{\Tilde{\alpha}}$\bf -trimmed mean }&&\multicolumn{4}{c}{\bf BRMSE, $\boldsymbol{\Tilde{\alpha}}$\bf-Winsorized mean }&\\ 
              \specialrule{\heavyrulewidth}{\abovetopsep}{\belowbottomsep}
         & \multicolumn{9}{c}{$\boldsymbol{\Tilde{\alpha} = 0.01}$}& \\
         \specialrule{\heavyrulewidth}{\abovetopsep}{\belowbottomsep}
           \bf Variables&$\boldsymbol{b = 25}$& $\boldsymbol{b = 50}$ & $\boldsymbol{b =75}$&$\boldsymbol{b =100}$& $\boldsymbol{b = 200}$ &$\boldsymbol{b = 25}$& $\boldsymbol{b = 50}$ & $\boldsymbol{b =75}$&$\boldsymbol{b =100}$& $\boldsymbol{b = 200}$\\
        \midrule
        \specialrule{\heavyrulewidth}{\abovetopsep}{\belowbottomsep}
        age & 6.9025&5.9225&4.5379&4.4733&4.4385& 3.4793&3.6082 & 5.6003& 5.0576 & 5.0017 \\
        educ & 14.9525& 14.5061&14.6298 &14.1398&14.5784  & 14.7756 & 14.2998 & 14.0766 & 14.0379 & 14.0019 \\
        exper & 23.1121 & 23.9217 &23.0052& 7.6685 &7.1951 &  23.1546 & 23.0288 & 23.0853 & 23.0555 & 23.0422\\
        expersq & 7.0653 & 7.2846 & 7.1502 & 7.0685&7.0251 &8.3199 & 8.0245 & 7.2879 & 7.2718 & 7.1265\\
        kidslt6 & 8.8129 & 8.9512 &8.1709&8.1342&8.0552  &7.9862 & 8.6181 & 9.2364 & 9.1216 & 8.7599 \\
        kidsge6 & 7.5438 & 8.2794&7.6612&7.0774&7.0020 &  9.5989 & 8.5860 & 8.4013 & 8.3491 & 8.1709\\
        nwifeinc & 3.8540 & 8.0073&6.7732&6.9257&6.4667 &  4.8807 & 5.0512 & 5.4469 & 4.5340 & 4.4217\\
        residual & 7.4851&7.9285&6.6482&6.2391&6.1497  &  6.6852 & 8.5323 & 8.3165 & 6.9992 & 6.4343\\
        \specialrule{\heavyrulewidth}{\abovetopsep}{\belowbottomsep}
         & \multicolumn{9}{c}{$\boldsymbol{\Tilde{\alpha} = 0.02}$}& \\
        \midrule
        age & 4.9219&4.1721&2.9882&4.1762&4.0887& 4.5430 &4.2311&7.2968 &7.0489 &5.9689 \\
        educ & 15.6677&15.3316&15.1799&14.9809&14.8205 & 9.6751 & 6.7921 & 6.4150 &5.4424 &4.9195 \\
        exper & 23.5777&24.3622&23.8121&23.3626&23.0762 &7.7911 &5.5320 & 6.0543 &6.3188 &4.4880\\
        expersq & 6.8877&6.4023&6.4418&6.3116&6.3047& 10.0748 &3.7419 &6.0543 &6.0088 &4.4880 \\
        kidslt6 &  8.9114&9.0661&8.7789&8.5657&8.0430&3.8884 & 5.9683 &5.3010 &5.2230 & 5.0822\\
        kidsge6 &  8.2963&6.0569&6.3264&6.0951&6.0325 & 12.4818&5.3319 &5.1951  &4.6089 &4.5068\\
        nwifeinc & 4.5687&7.1909&5.1988&5.0937&5.0268 &13.4388 &6.0679& 5.7135 & 5.4986 & 4.6289\\
        residual &  7.8353&8.4133&7.4161&7.1263&7.0406&10.0395 &10.7615 &10.1960 &5.1033 &4.2763\\
         \specialrule{\heavyrulewidth}{\abovetopsep}{\belowbottomsep}
         & \multicolumn{9}{c}{$\boldsymbol{\Tilde{\alpha}= 0.20}$}& \\
        \midrule
 age&6.7119&8.8537&7.8333&4.0662&4.0032&4.9207&6.4777&4.7017&4.5382&3.9375\\
  educ &12.4243&12.9229&12.7528&12.0157&11.7030&11.0408&12.2314&11.4508&11.3584&11.2935\\ 
   exper &21.5681&21.7707&21.2526&21.1134&19.9628&19.8669&20.6090&20.4094&20.3946&20.1240\\
    expersq &8.4187&8.2567&7.3494&5.3215&5.9297&8.1333&6.3889&7.5108&7.4375&7.3153 \\
     kidslt6 &11.7281
     &11.7853&11.5799&8.0941&7.6841&9.2211&9.1069&8.2365&7.8665&6.8852\\
      kidsge6 &8.3459&7.3279&6.1896&5.7441&4.7452&4.4740&3.8739&7.6806&5.3986&4.8871 \\
        nwifeinc &8.4402&8.5644&8.4793&8.4723&4.9742&4.7178&4.8311&6.3046&5.8445&5.4379\\
         residual &10.0431&8.1517&6.5809&6.1770&5.4622&6.6833&8.6092&8.3502&6.8890&6.3961 \\
        \bottomrule  
        \specialrule{\heavyrulewidth}{\abovetopsep}{\belowbottomsep}  
    \end{tabular}}
    \label{table:2r}
\end{table}

The results from Tables \ref{table:3r} and \ref{table:2r} reveal a mixed scenario regarding the performance of parameter estimates, as indicated by BRMSE. When considering a small bootstrap sample size, BRMSE  shows improvement for certain parameters like  ``educ,'' but it increases for other variables such as ``kidslt6'' and ``nwifeinc'' across all $\Tilde{\alpha}$ values. Overall, as expected, larger bootstrap sample sizes tend to yield more reliable estimates, with BRMSE  decreasing as sample size increases, indicating better precision. The influence of $\Tilde{\alpha}$ values on BRMSE  is evident, with $\Tilde{\alpha} = 0.20$ and $\Tilde{\alpha} = 0.02$ generally resulting in lower BRMSE  for larger bootstrap samples compared to $\Tilde{\alpha} = 0.01$. The data suggests that increasing bootstrap replications generally enhances the performance of parameter estimates in terms of BRMSE. Moreover, the performance of estimators improves with higher trimming proportions, implying the presence of outliers or influential observations in the dataset.

\section{Conclusions} \label{Conclusion}
In this chapter, we proposed the semi-parametric Tobit model with endogenous regressors based on the parametric control function approach. We develop inference for the weakly dependent data for L-estimators. The estimation procedure is based on two-stage procedures, and the asymptotic properties of the L-estimators of the unknown parameters have been thoroughly studied. The proposed estimator performs well for various simulated data and real data. 

Recently, Dhar and Shalabh (\cite{Dharsh}) proposed a goodness of fit statistic named the GIVE statistic for the instrumental variables regression. However, they neither considered dependent errors nor they related instrumental variable regression with the concept of endogeneity. In the same spirit of their approach, one may propose a goodness-of-fit statistic for the model considered here. Besides, one may also carry out test like $H_{0}: {\boldsymbol{\beta}} = {\boldsymbol{\beta}}_{0}$ against $H_{1} : {\boldsymbol{\beta}} \neq  {\boldsymbol{\beta}}_{0}$ using the L-estimator proposed here. Generally speaking, the applications of the proposed L-estimator may be of interest to future research.

As we mentioned earlier, we have done thorough simulation studies for finite samples of different sizes. However, it does not theoretically justify the performance of the proposed estimator when the limiting sample size is infinite. This work can be carried out using the assertion in Corollary \ref{cor1}, which is left as future work. 

\noindent {\bf Data Availability Statement:}
The data that support the findings of this study are openly available in ``CRLM-Real-data" at \url{https://github.com/swati-1602/CRLM-Real-data.git}, reference:  ``Data.txt".
\section{Appendix}\label{appendix}
We begin by introducing definitions that have been used throughout the article.

\subsection{Definition} 
\begin{d1}[\bf{Stochastically equicontinuous:}] (see, \cite{vanderVaart1996}) Let~$\{\mathscr{F}_{n}(\boldsymbol{\beta}):n\geq 1\}$ be a family of random functions defined from $\mathscr{B} \rightarrow \mathbb {R}$, where 
$\mathscr{B}$ is any normed metric space. Then 
$\displaystyle \{\mathscr{F}_{n}\}$ is stochastically equicontinuous on $\mathscr{B} $ if $\forall \;\epsilon^{*} >0$,
$ \exists\;\delta>0$ such that 
\begin{equation}
    \displaystyle \limsup _{n \to \infty } \mathbb{P} \left(\sup _{\boldsymbol{\beta} \in \mathscr{B} }\sup _{\boldsymbol{\beta} '\in \mathcal{B}(\boldsymbol{\beta} ,\delta )}|\mathscr{F}_{n}(\boldsymbol{\beta} ')-\mathscr{F}_{n}(\boldsymbol{\beta} )|>\epsilon^{*} \right)< \epsilon^{*}.
\end{equation}
Here $\mathcal{B}(\boldsymbol{\beta} ,\delta )=\{\boldsymbol{\beta} \in \mathcal{B}: \|\boldsymbol{\beta}'-\boldsymbol{\beta}\|\leq \delta\}.$
\end{d1}

\subsection{B.1  Weak Consistency of first-stage estimator \texorpdfstring{$\Hat{\boldsymbol{\delta}}_{n}$}{}}\label{B.1}
We use the following theorem to establish the weak consistency of the first-stage estimator $\hat{\boldsymbol{\delta}}_{n}.$

\begin{t1}[\bf{Theorem 14.1 of \cite{kosorok2008}}] \label{th:7.2.1} Let $\mathcal{Q}_{n}$ and $\mathcal{Q}$ be a stochastic process by a metric space $H$ such that $\mathcal{Q}_{n}\xrightarrow{p}\mathcal{Q}$ in $\ell^{\infty}(H)$ for a every compact $H \subset \Theta.$ Suppose also that almost all sample paths $\theta \mapsto \mathcal{Q}(\theta)$ are upper semi-continuous and possess a unique maximum at a (random) point $\theta_{0},$ which as a random map in $\Theta$ is tight. If the sequence $\hat{\theta}_{n}$ is uniformly tight
and satisfies $\mathcal{Q}_{n}(\hat{\theta}_{n})\geq \sup_{\theta\in \Theta}\mathcal{Q}_{n}(\theta)-o_{p}(1),$ then $\hat{\theta}_{n}\xrightarrow{p}\theta_{0}$ in $\Theta.$
\end{t1}
\vspace{0.2cm}
\begin{l1}\label{l:7.2.1} Under the assumptions \ref{a:2.1}-\ref{a:2.2} and  \ref{a:3.3.1}, and \ref{a:3.3.2}, we have $\|\hat{\boldsymbol{\delta}}_{n}-\boldsymbol{\delta}_{0}\|_{\mathbb{R}^{p+1}} \rightarrow 0$ in probability as $n\rightarrow \infty.$
\end{l1}

\begin{proof}[\bf{Proof of the Lemma~\ref{l:7.2.1}}]
The minimization of $\mathscr{T}_{n}(\boldsymbol{\delta})$ with respect to $\boldsymbol{\delta}$ is equivalent to that of $\mathscr{\overline{T}}_{n}(\boldsymbol{\delta})$
 defined as follows:
 \begin{equation}
 \begin{split}
     \mathscr{\overline{T}}_{n}(\boldsymbol{\delta})=\mathscr{T}_{n}(\boldsymbol{\delta})-\mathscr{T}_{n}(\boldsymbol{\delta}_{0})
     &=\mathbb{E}_{n}\big[\big(w-\boldsymbol{z}^\top\boldsymbol{\delta}\big)^2-\big(w-\boldsymbol{z}^\top\boldsymbol{\delta}_{0}\big)^2\big]\\
     &= \mathbb{E}_{n}[\ell(\boldsymbol{\delta})].
     \end{split}
 \end{equation}
 Then the population counterpart is defined as follows:
 \begin{equation}
     \mathscr{\overline{T}}(\boldsymbol{\delta})=\mathbb{E}\big[\big(w-\boldsymbol{z}^\top\boldsymbol{\delta}\big)^2-\big(w-\boldsymbol{z}^\top\boldsymbol{\delta}_{0}\big)^2\big]=\mathbb{E}[\ell(\boldsymbol{\delta})]
 \end{equation}
 We need to verify that $\{\ell(\boldsymbol{\delta}):\boldsymbol{\delta}\in \Delta\}=\{\big(w-\boldsymbol{z}^\top\boldsymbol{\delta}\big)^2-\big(w-\boldsymbol{z}^\top\boldsymbol{\delta}_{0}\big)^2:\boldsymbol{\delta}\in \Delta\}$ is Glivenko-Cantelli, that is $$\underset{\boldsymbol{\delta}\in \Delta}{\sup}\big|\mathscr{\overline{T}}_{n}(\boldsymbol{\delta})-\mathscr{\overline{T}}(\boldsymbol{\delta})\big|\xrightarrow{p} 0$$ as $n$ goes to $\infty.$ It is easy to verify that  both $\mathscr{\overline{T}}(\boldsymbol{\delta})$ and $\mathscr{\overline{T}}(\boldsymbol{\delta})$ are continuous in $\boldsymbol{\delta}.$
Next, we show that the pointwise convergence, we have 
$$|\mathscr{\overline{T}}(\boldsymbol{\delta})|\leq \mathbb{E}\big[\|\boldsymbol{z}_{i}\|^2\|\boldsymbol{\delta}-\boldsymbol{\delta}_{0}\|^2+2.|\vartheta_{i}|\|\boldsymbol{z}_{i}\|\|\boldsymbol{\delta}-\boldsymbol{\delta}_{0}\|\big] < \infty$$
By using the assumptions \ref{a:3.3.1}, \ref{a:3.3.2} and the ergodic theorem of $\alpha$-mixing series (see, Theorem~3.34 in \cite{white1984}), we have $\mathscr{\overline{T}}_{n}(\boldsymbol{\delta})\xrightarrow{p} \mathscr{\overline{T}}(\boldsymbol{\delta})$ pointwise in $\boldsymbol{\delta}.$  To show the stochastic equicontinuity, consider the two elements, $\boldsymbol{\Tilde{\delta}}$ and $\boldsymbol{\delta}$ in $\Delta.$ Then, we have
\begin{equation}
\big|\mathscr{\overline{T}}_{n}(\boldsymbol{\Tilde{\delta}})-\mathscr{\overline{T}}_{n}(\boldsymbol{\delta})\big|\leq \|\boldsymbol{\Tilde{\delta}}-\boldsymbol{\delta}\|\{\|\boldsymbol{z}_{i}\|^2\|\boldsymbol{\Tilde{\delta}}+\boldsymbol{\delta}-2.\boldsymbol{\delta}_{0}\|+2|\vartheta_{i}|\|\boldsymbol{z}_{i}\|\}
\end{equation}
Thus, by the assumptions \ref{a:2.1}, \ref{a:3.3.1}, and \ref{a:3.3.2} $\mathbb{E}\|\boldsymbol{z}_{i}\|,$ $\mathbb{E}|\vartheta_{i}|$, and $\|\boldsymbol{\delta}\|$ are bounded. Then, there exists a positive constant $\mathbb{C}_{n}=O_{p}(1),$ such that 
\begin{equation}
    \big|\mathscr{\overline{T}}_{n}(\boldsymbol{\Tilde{\delta}})-\mathscr{\overline{T}}_{n}(\boldsymbol{\delta})\big|\leq \mathbb{C}_{n}\|\boldsymbol{\Tilde{\delta}}-\boldsymbol{\delta}\|
\end{equation}
for all $\boldsymbol{\Tilde{\delta}}$ and $\boldsymbol{\delta}$ in $\Delta.$ Hence, the empirical process $\boldsymbol{\delta}\mapsto \mathscr{\overline{T}}_{n}$ is stochastically equicontinuous, which implies that uniform convergence of $\mathscr{\overline{T}}_{n}(\boldsymbol{\delta})$ to $\mathscr{\overline{T}}(\boldsymbol{\delta})$ for all $\boldsymbol{\delta} \in \Delta$ by Theorem~1 in \cite{Andrews92}.
Next, we show that identification of the parameter $\boldsymbol{\delta}_{0}.$
\begin{equation}
\begin{split}
    \mathbb{E}[(w_{i}-\boldsymbol{z}^\top_{i}\boldsymbol{\delta})^2]&= \mathbb{E}[(w_{i}-\boldsymbol{z}^\top_{i}\boldsymbol{\delta}_{0}+\boldsymbol{z}^\top_{i}(\boldsymbol{\delta}_{0}-\boldsymbol{\delta}))^2]\\
    &= \mathbb{E}[(w_{i}-\boldsymbol{z}^\top_{i}\boldsymbol{\delta}_{0})^2]+2\mathbb{E}[(w_{i}-\boldsymbol{z}^\top_{i}\boldsymbol{\delta}_{0})\boldsymbol{z}^\top_{i}(\boldsymbol{\delta}_{0}-\boldsymbol{\delta})]+\mathbb{E}[(\boldsymbol{z}^\top_{i}(\boldsymbol{\delta}_{0}-\boldsymbol{\delta}))^2]\\
    &\geq\mathbb{E}[(w_{i}-\boldsymbol{z}^\top_{i}\boldsymbol{\delta}_{0})^2]
    \end{split}
\end{equation}
This inequality will be strictly positive provided that $\boldsymbol{z}^\top_{i}\boldsymbol{\beta}\neq \boldsymbol{z}^\top_{i}\boldsymbol{\beta}_{0}.$ This condition is satisfied under the non-singularity of $\mathbb{E}[\boldsymbol{z}\boldsymbol{z}^\top],$ as stated in Assumption~\ref{a:3.3.2}, which ensures the uniqueness of the minimum of  $\overline{\mathscr{T}}_{n}(\boldsymbol{\delta})$ at $\boldsymbol{\delta}_{0}.$
Similar arguments in \cite{kosorok2008}, we can prove the existence of $\hat{\boldsymbol{\delta}}_{n}$ i.e. $\|\hat{\boldsymbol{\delta}}_{n}\| = O_{p}(1),$ Then, by the $\arg\max$ Theorem~\ref{th:7.2.1} it's complete the proof.
 \end{proof}
 \subsection{B.2 Uniform Consistency of \texorpdfstring{$\Hat{\boldsymbol{\beta}}_{n}(\uptau)$}{}}\label{B.2}
 \begin{l1}\label{l:7.2.2} Under the assumptions \ref{a:2.1}, \ref{a:3.3.1}-\ref{a:3.3.4}, $\mathscr{S}_{n}\big(\boldsymbol{\beta},\boldsymbol{\delta},\uptau\big)$ has a unique minimizer at $\boldsymbol{\beta}_{0}(\uptau)$ for all $\uptau \in \mathcal{T}.$
 \end{l1}
 \begin{proof}[\bf{Proof of the Lemma~\ref{l:7.2.2}}]
 To show identification of
  $\boldsymbol{\beta}_{0}$ note that $\mathscr{S}(\boldsymbol{\beta},\boldsymbol{\delta},\uptau)$ is uniformly dominated by an integrable function, so that by the Lebesgue dominated convergence theorem
  \begin{equation*}
  \begin{split}
\frac{\partial \mathbb{E}\big[\mathscr{S}(\boldsymbol{\beta},\boldsymbol{\delta},\uptau)]}{\partial \boldsymbol{\beta}}&=-2\mathbb{E}\big[\mathds{1}(\boldsymbol{x}^\top\boldsymbol{\beta}>0)\big\{\uptau-\mathds{1}\big(\varepsilon+\boldsymbol{x}^\top\boldsymbol{\beta}_{0}<\boldsymbol{x}^\top\boldsymbol{\beta}\big)\big\}\boldsymbol{x}\big]\\
&=-2\mathbb{E}\big[\mathds{1}(\boldsymbol{x}^\top\boldsymbol{\beta}>0)\big\{\uptau-\mathds{1}\big(\varepsilon-\boldsymbol{x}^\top\boldsymbol{\delta}^{*}<\mathcal{Q}_{\varepsilon}(\uptau)\big)\big\}\boldsymbol{x}\big],\\
\end{split}
  \end{equation*}
  where $\boldsymbol{\delta}^{*}=\boldsymbol{\beta}-\boldsymbol{\beta}_{0}(\uptau)$ and $\boldsymbol{\beta}_{0} (\uptau)$ is the same as defined in \eqref{eq:2.6}.
Now, using the iterative formula for expectation, we obtain
\begin{equation*}
  \begin{split}
\frac{\partial \mathbb{E}\big[\mathscr{S}(\boldsymbol{\beta},\boldsymbol{\delta},\uptau)]}{\partial \boldsymbol{\beta}}&=-2\mathbb{E}_{\mathcal{F}_{i-1}}\big[\mathds{1}(\boldsymbol{x}^\top\boldsymbol{\beta}>0)\big\{\uptau-\mathbb{P}\big(\varepsilon<\mathcal{Q}_{\varepsilon_{\uptau}}(\uptau)+\boldsymbol{x}^\top\boldsymbol{\delta}^{*}|\mathcal{F}_{i-1}\big)\big\}\boldsymbol{x}\big]\\
\end{split}
  \end{equation*}
By Assumption~\ref{a:3.3.3}, which guarantees the existence of a unique quantile, it follows that
\begin{align*}
\boldsymbol{x}^\top\boldsymbol{\beta}_{0}(\uptau)>\boldsymbol{x}^\top\boldsymbol{\beta} & : \quad \frac{\partial \mathbb{E}\left[ \mathscr{S}(\boldsymbol{\beta},\boldsymbol{\delta}, \uptau) \right]}{\partial \boldsymbol{\beta}} < 0 \\
\boldsymbol{x}^\top\boldsymbol{\beta}_{0}(\uptau)<\boldsymbol{x}^\top\boldsymbol{\beta} & : \quad \frac{\partial \mathbb{E}\left[ \mathscr{S}(\boldsymbol{\beta},\boldsymbol{\delta}, \uptau) \right]}{\partial \boldsymbol{\beta}} > 0 \\
\boldsymbol{x}^\top\boldsymbol{\beta}_{0}(\uptau)=\boldsymbol{x}^\top\boldsymbol{\beta} & : \quad \frac{\partial \mathbb{E}\left[ \mathscr{S}(\boldsymbol{\beta},\boldsymbol{\delta}, \uptau) \right]}{\partial \boldsymbol{\beta}} = 0.
\end{align*}
Therefore, it follows that the function $\mathscr{S}(\boldsymbol{\beta},\boldsymbol{\delta},\uptau)$ attains a minimum at $\boldsymbol{\beta}=\boldsymbol{\beta}_{0}(\uptau)$ or, equivalently, when $\boldsymbol{x}^\top\boldsymbol{\beta}_{0}(\uptau)=\boldsymbol{x}^\top\boldsymbol{\beta}$ with probability one. Next, to establish the uniqueness of the parameter, $\boldsymbol{\beta}_{0}(\uptau),$ it is sufficient to demonstrate that $$\max\{0, \boldsymbol{x}^\top\boldsymbol{\beta}\} \neq \max\{0, \boldsymbol{x}^\top\boldsymbol{\beta}_{0}(\uptau)\}
\quad \text{whenever} \quad \boldsymbol{\beta} \neq \boldsymbol{\beta}_{0}(\uptau).$$ 
This can be shown by considering two distinct cases:

\noindent\textbf{Case 1:}
If $\mathds{1}\big(\boldsymbol{x}^\top\boldsymbol{\beta}>0\big)\neq \mathds{1}\big(\boldsymbol{x}^\top\boldsymbol{\beta}_{0}(\uptau)>0\big)$ then it implies that
$$\max\{0, \boldsymbol{x}^\top\boldsymbol{\beta}\}\neq \max\{0, \boldsymbol{x}^\top\boldsymbol{\beta}_{0}(\uptau)\}.$$ 

\noindent\textbf{Case 2:} If
$\mathds{1}\big(\boldsymbol{x}^\top\boldsymbol{\beta}>0\big)= \mathds{1}\big(\boldsymbol{x}^\top\boldsymbol{\beta}_{0}(\uptau)>0\big)$ so that 
\begin{equation*}
 \max\{0, \boldsymbol{x}^\top\boldsymbol{\beta}\}- \max\{0, \boldsymbol{x}^\top\boldsymbol{\beta}_{0}(\uptau)\}=\mathds{1}\big(\boldsymbol{x}^\top\boldsymbol{\beta}_{0}(\uptau)>0\big)
\boldsymbol{x}^\top\boldsymbol{\delta}^{*}\neq 0.
\end{equation*}
In both cases, we find that the two expressions differ whenever $\boldsymbol{\beta}\neq\boldsymbol{\beta}_{0}(\uptau).$ Therefore, it is sufficient to assume that Assumption \ref{a:3.3.4} ensures that
$\mathscr{S}(\boldsymbol{\beta}, \boldsymbol{\delta}, \uptau)$ attains a unique minimum at $\boldsymbol{\beta}_{0}(\uptau)$ for every $\uptau \in \mathcal{T}$.
\end{proof}
 
 \begin{l1}\label{l:7.2.3} Under the assumptions \ref{a:2.1}-\ref{a:2.2} and \ref{a:3.3.1}-\ref{a:3.3.4}, $\mathscr{S}_{n}\big(\boldsymbol{\beta},\hat{\boldsymbol{\delta}}_{n},\uptau\big)$ converges uniformly to $\mathscr{S}\big(\boldsymbol{\beta},\boldsymbol{\delta},\uptau\big)$ for all $\uptau \in \mathcal{T}.$
 \end{l1}
 \begin{proof}[\bf{Proof of the Lemma~\ref{l:7.2.3}}]
 The normalized objective function is defined as
 \begin{equation}
 \begin{split}
\mathscr{S}_{n}\big(\boldsymbol{\beta},\hat{\boldsymbol{\delta}}_{n},\uptau\big)&=\mathscr{Q}_{n}\big(\boldsymbol{\beta},\hat{\boldsymbol{\delta}}_{n},\uptau\big)-\mathscr{Q}_{n}\big(\boldsymbol{\beta}_{0}(\uptau),\boldsymbol{\delta}_{0},\uptau\big),\\
     &=\mathbb{E}_{n}\big[\rho_{\uptau}\big(y-\max\{0, \hat{\boldsymbol{x}}^\top\boldsymbol{\beta}\}\big)-\rho_{\uptau}\big(y-\max\{0, \boldsymbol{x}^\top\boldsymbol{\beta}_{0}(\uptau)\}\big)\big]
     \end{split}
 \end{equation}
 where $\hat{\boldsymbol{x}}$ is the same as defined in \eqref{eq:2.5}. First, we aim to show that $\mathscr{S}_{n}\big(\boldsymbol{\beta},\hat{\boldsymbol{\delta}}_{n},\uptau\big)$ and $\mathscr{S}_{n}\big(\boldsymbol{\beta},\boldsymbol{\delta},\uptau\big)$ have the same probability limit.  The function  $\mathscr{S}_{n}\big(\boldsymbol{\beta},\boldsymbol{\delta}, \uptau\big)$ is  given by
 \begin{equation}
 \begin{split}
 \mathscr{S}_{n}\big(\boldsymbol{\beta},\boldsymbol{\delta}, \uptau\big)&=\mathscr{Q}_{n}\big(\boldsymbol{\beta},\boldsymbol{\delta},\uptau\big)-\mathscr{Q}_{n}\big(\boldsymbol{\beta}_{0}(\uptau),\boldsymbol{\delta}_{0},\uptau\big)\\
 &=\mathbb{E}_{n}\big[\rho_{\uptau}\big(y-\max\{0, \boldsymbol{x}^\top\boldsymbol{\beta}\}\big)-\rho_{\uptau}\big(y-\max\{0, \boldsymbol{x}^\top\boldsymbol{\beta}_{0}(\uptau)\}\big)\big]
 \end{split}
  \end{equation}
 Applying the Knight's identity 
  $\rho_{\uptau}\big(a-b\big)-\rho_{\uptau}\big(a\big)=-b\big(\uptau -\mathds{1}\{a<0\}\big)+\int_{0}^{b}\{\mathds{1}\big(a\leq s\big)-\mathds{1}\big(a\leq 0\big)\} ds$, then we obtain 
  \begin{equation}
       \big|\mathscr{S}_{n}\big(\boldsymbol{\beta},\hat{\boldsymbol{\delta}}_{n},\uptau\big)-\mathscr{S}_{n}\big(\boldsymbol{\beta},\boldsymbol{\delta}, \uptau\big)\big|\leq \mathbb{E}_{n}\big|\big(\hat{\boldsymbol{x}}-\boldsymbol{x}\big)\boldsymbol{\beta}\big|\leq \mathbb{E}_{n}\|\hat{\boldsymbol{x}}-\boldsymbol{x}\|\|\boldsymbol{\beta}\|.
  \end{equation}
  Since $\hat{\boldsymbol{\delta}}_{n}$ is consistent estimator of $\boldsymbol{\delta}_{0},$ it follows that $(\hat{\boldsymbol{x}}_{i}-\boldsymbol{x}_{i})=(e_{i}-\vartheta_{i})=\boldsymbol{z}_{i}^\top(\hat{\boldsymbol{\delta}}_{n}-\boldsymbol{\delta}_{0})\xrightarrow{p} 0.$ Hence, we conclude that 
  \begin{equation}
       \big|\mathscr{S}_{n}\big(\boldsymbol{\beta},\hat{\boldsymbol{\delta}}_{n},\uptau\big)-\mathscr{S}_{n}\big(\boldsymbol{\beta},\boldsymbol{\delta}, \uptau\big)\big|\xrightarrow{p} 0.
  \end{equation}
  Next, we establish the pointwise convergence by observing that
  \begin{equation}
\mathscr{S}\big(\boldsymbol{\beta},\boldsymbol{\delta},\uptau\big)=\mathbb{E}\big[\rho_{\uptau}\big(y-\max\big\{0,\boldsymbol{x}^\top\boldsymbol{\beta}\big\}\big)-\rho_{\uptau}\big(y-\max\big\{0,\boldsymbol{x}^\top\boldsymbol{\beta}_{0}(\uptau)\big\}\big)\big].
  \end{equation}
  Clearly, the function $\mathscr{S}(\boldsymbol{\beta}, \boldsymbol{\delta}, \uptau)$ is continuous in $\boldsymbol{\beta}$ and also continuous in $\uptau$. Thus, $\mathscr{S}(\boldsymbol{\beta}, \boldsymbol{\delta}, \uptau)$ is continuous over all $\mathscr{B}\times \mathcal{T}.$ Then using the Holder inequality and the assumptions \ref{a:3.3.1} and \ref{a:3.3.4}, we have 
  \begin{equation}
      |\mathscr{S}\big(\boldsymbol{\beta},\boldsymbol{\delta},\uptau\big)|\leq \mathbb{E}\big[\big|\boldsymbol{x}^\top\big(\boldsymbol{\beta}-\boldsymbol{\beta}_{0}(\uptau)\big)\big|\big]\leq \big(\mathbb{E}\big[\big\|\boldsymbol{x}\big\|\big]^{r'}\big)^{\frac{1}{r^{'}}}\big(\big\|\big(\boldsymbol{\beta}-\boldsymbol{\beta}_{0}(\uptau)\big)\big\|^{\tilde{s}}\big)^{\frac{1}{\tilde{s}}}<\infty,
  \end{equation}
   where $r^{'}$ is the same defined in \ref{eq:3.2} and $\tilde{s}=\frac{r^{'}}{r^{'}-1}.$ Then by the ergodic theorem for $\alpha$-mixing sequences (See,Theorem-3.34 in \cite{white1984}), we obtain $\mathscr{S}_{n}\big(\boldsymbol{\beta},\boldsymbol{\delta},\uptau\big)= \mathscr{S}(\boldsymbol{\beta},\boldsymbol{\delta},\uptau)+o_{p}(1)$ in $\big(\boldsymbol{\beta},\uptau\big).$ Moreover, for all $\big(\boldsymbol{\beta}^{'},\uptau^{'}\big)$ and $\big(\boldsymbol{\beta}^{"},\uptau^{"}\big),$ we have
   \begin{equation}
   \begin{split}
       \big|\mathscr{S}_{n}\big(\boldsymbol{\beta}^{'},\boldsymbol{\delta},\uptau^{'}\big)-\mathscr{S}_{n}\big(\boldsymbol{\beta}^{"},\boldsymbol{\delta},\uptau^{"}\big)\big|&\leq \big|\mathscr{S}_{n}\big(\boldsymbol{\beta}^{'},\boldsymbol{\delta},\uptau^{'}\big)-\mathscr{S}_{n}\big(\boldsymbol{\beta}^{'},\boldsymbol{\delta},\uptau^{"}\big)\big|
       +\big|\mathscr{S}_{n}\big(\boldsymbol{\beta}^{'},\boldsymbol{\delta},\uptau^{"}\big)\\
       &-\mathscr{S}_{n}\big(\boldsymbol{\beta}^{"},\boldsymbol{\delta},\uptau^{"}\big)\big|\\
       &\leq 2.\mathbb{E}_{n}[\|\boldsymbol{x}\|]\|\boldsymbol{\beta}\||\uptau^{'}-\uptau^{"}|+2.\mathbb{E}_{n}[\|\boldsymbol{x}\|]\|\boldsymbol{\beta}^{'}-\boldsymbol{\beta}^{"}\|\\
       &\leq C_{1,n}|\uptau^{'}-\uptau^{"}|+C_{2,n}\|\boldsymbol{\beta}^{'}-\boldsymbol{\beta}^{"}\|,
        \end{split}
   \end{equation}
   where $C_{1,n}=O_{p}(1)$ and $C_{2,n}=O_{p}(1)$ from the assumptions \ref{a:3.3.1} and \ref{a:3.3.4}. Hence the process $(\boldsymbol{\beta}, \uptau) \mapsto \mathscr{S}_{n}(\boldsymbol{\beta},\boldsymbol{\delta}, \uptau)$ is stochastically equicontinuous. Now using Theorem 1 by \cite{Andrews92}, it follows the uniform convergence.
   $$\underset{(\boldsymbol{\beta},\uptau)\in \mathscr{B}\times\mathcal{T}}{\sup}\big|\mathscr{S}_{n}(\boldsymbol{\beta},\boldsymbol{\delta},\uptau)-\mathscr{S}(\boldsymbol{\beta},\boldsymbol{\delta},\uptau)\big|\xrightarrow{p} 0.$$
 \end{proof}
 \begin{l1}[\bf{Lemma B.1 of \cite{Hansen}}]\label{l:7.2.4}
Suppose that uniformly in $\uptau$ in a compact set $\mathcal{T}$ and for a compact set $\Theta$
\begin{enumerate}
    \item[(i)] $\hat{\theta}_{n}(\uptau) $ is such that 
    $Q_{n}(\hat{\theta}_{n}(\uptau) \mid \uptau) \leq \inf_{\theta\in \Theta} Q_{n}(\theta \mid \uptau) + \alpha^{*}_n, \; \alpha^{*}_n \to 0,\;  \hat{\theta}_{n}(\uptau) \in \Theta \text{ w.p.} \to 1,$
   \item[(ii)] $\theta_{0}(\uptau) := \arg\inf_{\theta \in \Theta} Q_{\infty}(\theta \mid \uptau)$ is a uniquely defined continuous process in $\ell^\infty(\mathcal{T})$,
    \item[(iii)] $Q_{n}(\cdot \mid \cdot) \xrightarrow{p} Q_{\infty}(\cdot \mid \cdot)$ in $l^\infty(\Theta \times \mathcal{T})$, where $Q_{\infty}(\cdot \mid \cdot)$ is continuous.
\end{enumerate}
\noindent
Then, 
$$\hat{\theta}_n(\cdot) \xrightarrow{p} \theta_{0}(\cdot)  \quad \text{in } l^\infty(\mathcal{T}).$$
\end{l1}
\begin{proof}[\bf{Proof of the Theorem~\ref{th:3.3.1}}]
To establish the theorem, we verify that all the conditions of Lemma~\ref{l:7.2.4} are satisfied. The first condition is straightforward, and the existence of $\hat{\boldsymbol{\beta}}_{n}(\uptau)$
  can be established using an argument similar to that in \cite{kosorok2008}, under which $\hat{\boldsymbol{\beta}}_{n}(\uptau)=O_{p}(1)$. The second condition is fulfilled as demonstrated in Lemma~\ref{l:7.2.2}, while the third condition is verified through Lemma~\ref{l:7.2.3}. Since all the required conditions of Lemma~\ref{l:7.2.4} hold, we conclude that
$$\underset{\uptau \in \mathcal{T}}{\sup}\|\hat{\boldsymbol{\beta}}_{n}(\uptau)-\boldsymbol{\beta}_{0}(\uptau)\|\xrightarrow{p} 0.$$   
    \end{proof}

 \subsection{B.3 Weak Consistency of \texorpdfstring{$\mathcal{L}_{n}(.)$}{}}\label{B.3}
 \begin{proof}[\bf{Proof of the Theorem~\ref{th:3.3.2}}]
 Consider the smooth L-estimator defined as 
  \begin{equation}
      \mathscr{L}_{n}= \int_{0}^{1}\Hat{\boldsymbol{\beta}}_{n}(\uptau)J_{1}(\uptau)\;d\uptau,
      \end{equation}
      with the corresponding population counter part is given by  $$\mathscr{L}_{0}= \int_{0}^{1}\boldsymbol{\beta}_{0}(\uptau)J_{1}(\uptau)\;d\uptau.$$ 
     We begin by examining the difference between the sample and population L-estimators
   \begin{equation}
      \|L_{n}-L_{0}\|=\Big\|\int_{0}^{1}(\hat{\boldsymbol{\beta}}_{n}(\uptau)-\boldsymbol{\beta}_{0}(\uptau))J_{1}(\uptau)\;d\uptau\Big\|.\notag
       \end{equation}
By the triangle inequality and properties of the norm, we obtain
    \begin{equation}
\|L_{n}-L_{0}\| \leq \int_{0}^{1} \big\|(\hat{\boldsymbol{\beta}}_{n}(\uptau)-\boldsymbol{\beta}_{0}(\uptau))\big\| |J_{1}(\uptau)|\;d\uptau 
 \end{equation}
 Since $J_{1}(\uptau)$ is a smooth function and hence bounded on any compact subinterval of $(0,1),$ we assume there exists a constant $m_{0}>0$ such that
 $$|J_{1}(\uptau)|\leq m_{0},\quad \text{for all} \quad \uptau \in \mathcal{T}\subset (0,1).$$
 To handle possible irregularities near the boundaries, we restrict the domain of the integration to the interval $\mathcal{T}=[\uptau_{0},1-\uptau_{0}],$ where $\uptau_{0}>0,$ and consider the limit $\uptau_{0}\rightarrow 0,$ Thus, we have
 \begin{align*}
&=\lim_{\uptau_{0}\rightarrow 0} \int_{\uptau_{0}}^{1-\uptau_{0}}\|(\hat{\boldsymbol{\beta}}_{n}(\uptau)-\boldsymbol{\beta}_{0}(\uptau))\| |J_{1}(\uptau)|\;d\uptau\\
&\leq \lim_{\uptau_{0}\rightarrow 0}\: \underset{\uptau \in \mathcal{T}}\sup \|\hat{\boldsymbol{\beta}}_{n}(\uptau)-\boldsymbol{\beta}_{0}(\uptau)\|\; m_{0}\tag{Using the Assumption \ref{a:3.3.5}}\\
& \xrightarrow{p}0,\tag{{B.1}}\label{2}
  \end{align*}
where the convergence in probability follows from the uniform consistency of $\Hat{\boldsymbol{\beta}}_{n}(\uptau)$ for all $\uptau \in \mathcal{T}\subset (0,1).$
Therefore, we conclude that $\|\mathscr{L}_{n}-\mathscr{L}_{0}\|\xrightarrow{p}0.$ 
  \end{proof}
\subsection{B.4 Asymptotic Normality of \texorpdfstring{$\sqrt{n}(\Hat{\boldsymbol{\beta}}_{n}(\uptau)-\boldsymbol{\beta}_{0}(\uptau))$}{}}\label{B.4}
To prove the asymptotic Normality of the two-stage estimator process 
\begin{equation}\label{eq:7.16}
\{(\Hat{\boldsymbol{\beta}}_{n}(\uptau),\hat{\boldsymbol{\delta}}_{n})^\top: \boldsymbol{\beta}\in \mathscr{B},\boldsymbol{\delta} \in \Delta, \uptau \in \mathcal{T}\subset(0,1)\},
\end{equation} the adopted methods are similar to the methods considered in  \cite{Angrist}, \cite{CHEN201830}, and \cite{Hansen}).
 
\begin{proof}[{\bf{Proof of Theorem} \ref{th:3.3.3}:}] We derive the asymptotic normality of a two-stage estimator, which is obtained by jointly solving the first-order conditions corresponding to both stages of the estimation procedure. In particular, the true parameter values, denoted by $\boldsymbol{\beta}_{0}(\uptau)$ and $\boldsymbol{\delta}_{0}$, are characterized as the unique solution to the following system of estimating equations that combine both the first- and second-stage conditions.
\begin{align}\label{eq:7.17}
\mathbb{E}[\Psi(\boldsymbol{u},\boldsymbol{\beta}_{0}(\uptau),\boldsymbol{\delta}_{0},\uptau)]=
          \begin{bmatrix}
        \mathbb{E}[\Psi_{2}(\varepsilon_{\uptau},\boldsymbol{x},\boldsymbol{\beta}_{0}(\uptau),\boldsymbol{\delta}_{0},\uptau)] \\   
        \mathbb{E}[\Psi_{1}(\vartheta,\boldsymbol{z},\boldsymbol{\delta}_{0})]
          \end{bmatrix} =\boldsymbol{0}, 
  \end{align}
where $\boldsymbol{u}=(y,\boldsymbol{x}^\top,\boldsymbol{z}^\top),$ $\boldsymbol{s}=(y,\boldsymbol{x}^\top,\varepsilon_{\uptau})$ and  $\Psi(\boldsymbol{u},\boldsymbol{\beta}(\uptau),\boldsymbol{\delta},\uptau)= \begin{pmatrix}
     \Psi_{2}(\boldsymbol{s},\boldsymbol{\beta},\boldsymbol{\delta},\uptau),\\
      \Psi_{1}(\vartheta,\boldsymbol{z},\boldsymbol{\delta})
\end{pmatrix}$ and the individual components are given by
\begin{equation}\label{eq:7.18}
\begin{split}
  \Psi_{2}(\boldsymbol{s},\boldsymbol{\beta},\boldsymbol{\delta},\uptau) &= \mathds{1}(\boldsymbol{x}^\top\boldsymbol{\beta}>0)\psi_{\uptau}(y-\boldsymbol{x}^\top\boldsymbol{\beta})\boldsymbol{x},\\
   \Psi_{1}(\vartheta,\boldsymbol{z},\boldsymbol{\delta})&=(w-\boldsymbol{z}^\top\boldsymbol{\delta})\boldsymbol{z}.
   \end{split}
\end{equation}
In addition, $\Psi_{2}(\boldsymbol{s},\boldsymbol{\beta},\boldsymbol{\delta},\uptau)$ is vector of function with dimension equal to $\boldsymbol{x}\in \mathbb{R}^{p+2},$ and $\Psi_{1}(\vartheta,\boldsymbol{z},\boldsymbol{\delta})$ has dimension of $\boldsymbol{z}\in \mathbb{R}^{p+1}.$ The function $\psi_{\uptau}(u)=\uptau-\mathds{1}(u<0).$
 The joint two-stage estimators $(\hat{\boldsymbol{\beta}}_{n}(\uptau),\hat{\boldsymbol{\delta}}_{n})^\top$ can be characterized as the joint solution of the following equation:
 \begin{equation}\label{eq:7.19} \frac{1}{n}\sum_{i=1}^{n}\Psi(\boldsymbol{u}_{i},\boldsymbol{\beta} ,\boldsymbol{\delta},\uptau)= \frac{1}{n}\sum\limits_{i=1}^{n}\begin{bmatrix}
 \Psi_{2}(\boldsymbol{s}_{i},\boldsymbol{\beta} ,\boldsymbol{\delta},\uptau)\\
 \Psi_{1}(\vartheta_{i},\boldsymbol{z}_{i},\boldsymbol{\delta})
 \end{bmatrix}=
     \begin{bmatrix}
       \frac{1}{n}\sum\limits_{i=1}^{n}\Psi_{2}(\boldsymbol{s}_{i},\boldsymbol{\beta} ,\boldsymbol{\delta},\uptau)\\
       \frac{1}{n}\sum\limits_{i=1}^{n}\Psi_{1}(\vartheta_{i},\boldsymbol{z}_{i},\boldsymbol{\delta})
     \end{bmatrix}=\boldsymbol{0}.
 \end{equation}
Let us define the joint two-stage estimator process as $\{(\hat{\boldsymbol{\beta}}_{n}(\uptau),\hat{\boldsymbol{\delta}}_{n})^\top:\boldsymbol{\beta}\in \mathscr{B}, \boldsymbol{\delta}\in \Delta, \uptau \in \mathcal{T}\subset (0,1)\}.$ 

Now, the proof of this theorem follows from the following five steps.

 \noindent{\bf{First step:}} We first establish that 
 \begin{equation}\label{eq:7.20}
    \underset{\uptau \in \mathcal{T}}{\sup} \|\sqrt{n}\mathbb{E}_{n}[\Psi(\boldsymbol{u},\hat{\boldsymbol{\beta}}_{n}(\uptau) ,\hat{\boldsymbol{\delta}}_{n},\uptau)]\|=o_{p}(1).
 \end{equation}
 To prove this, it suffices to show that
 \begin{equation}\label{eq:7.21}
     \begin{split}
         \|\sqrt{n}\mathbb{E}_{n}[\Psi_{1}(\vartheta,\boldsymbol{z},\hat{\boldsymbol{\delta}}_{n})]&=o_{p}(1)\\
         \underset{\uptau \in \mathcal{T}}{\sup}\|\sqrt{n}\mathbb{E}_{n}[\Psi_{2}(\boldsymbol{s},\hat{\boldsymbol{\beta}}_{n}(\uptau),\hat{\boldsymbol{\delta}}_{n},\uptau)]&=o_{p}(1).\\
     \end{split}
 \end{equation}
To analyze the second term, define $\Psi_{2,j}(.)$ as the $j^{th}$ component of vector $\Psi_{2}(.),$ and let $e_{j}\in \mathbb{R}^{p+2}$ be a unit vector with 1 in the $j^{th}$ position. Consider the directional derivative of the objective function $\mathscr{Q}_{n}(\boldsymbol{\beta},\hat{\boldsymbol{\delta}}_{n}, \uptau)$ in the direction $e_{j}\in \mathbb{R}^{p+2}$ for some $a>0,$ is given by
\begin{equation}\label{eq:7.22}
\begin{split}
   \mathscr{H}_{n,j}(\boldsymbol{\beta},\boldsymbol{\delta}, \uptau)&=\underset{a\to 0}{Lim}\frac{\mathscr{Q}_{n}(\boldsymbol{\beta}+ae_{j},\hat{\boldsymbol{\delta}}_{n}, \uptau)-\mathscr{Q}_{n}(\boldsymbol{\beta},\hat{\boldsymbol{\delta}}_{n}, \uptau)}{a}\\
   &=-\frac{1}{n}\sum_{i=1}^{n}\Psi_{2,j}(\boldsymbol{s}_{i},\boldsymbol{\beta}, \hat{\boldsymbol{\delta}}_{n},\uptau)-\frac{1}{n}\sum_{i=1}^{n}\mathds{1}(\boldsymbol{x}^\top_{i}\boldsymbol{\beta}>0, y_{i}=\boldsymbol{x}^\top_{i}\boldsymbol{\beta})x_{i,j}\psi_{\uptau}(x_{i,j})\\
   &+\frac{1}{n}\sum_{i=1}^{n}\mathds{1}(\boldsymbol{x}^\top_{i}\boldsymbol{\beta}=0)\{(1-\uptau)\mathds{1}(y_{i}>0,x_{i,j}>0)x_{i,j}-\uptau\mathds{1}(y_{i}=0,x_{i,j}>0)x_{i,j}\}
   \end{split}
\end{equation}
Since $\hat{\boldsymbol{\beta}}_{n}(\uptau)$ is minimizer of $\mathscr{Q}_{n}(\boldsymbol{\beta},\hat{\boldsymbol{\delta}}_{n},\uptau)$, note that for any $a>0$ and using the monotonicity of the $\mathscr{H}_{n,j}(\boldsymbol{\beta},\hat{\boldsymbol{\delta}}_{n}, \uptau)$ then we obtain
\begin{equation}\label{eq:7.23}
\begin{split}
    |\mathscr{H}_{n,j}(\hat{\boldsymbol{\beta}}_{n}(\uptau),\hat{\boldsymbol{\delta}}_{n},\uptau)|&\leq \mathscr{H}_{n,j}(\hat{\boldsymbol{\beta}}_{n}(\uptau)+ae_{j},\hat{\boldsymbol{\delta}}_{n},\uptau)-\mathscr{H}_{n,j}(\hat{\boldsymbol{\beta}}_{n}(\uptau)-ae_{j},\hat{\boldsymbol{\delta}}_{n},\uptau)\\
    &\leq \frac{1}{\sqrt{n}}\sum_{i=1}^{n}|x_{i,j}|[\mathds{1}(y_{i}=\boldsymbol{x}^\top_{i}\hat{\boldsymbol{\beta}}_{n}(\uptau))+\max\{\uptau, 1-\uptau\}\mathds{1}(\boldsymbol{x}^\top_{i}\hat{\boldsymbol{\beta}}_{n}(\uptau)=0)]\\
    &\leq\frac{\underset{i}{\max}|x_{i,j}|}{\sqrt{n}}\sum_{i=1}^{n}[\mathds{1}(y_{i}=\boldsymbol{x}^\top_{i}\hat{\boldsymbol{\beta}}_{n}(\uptau))+\max\{\uptau, 1-\uptau\}\mathds{1}(\boldsymbol{x}^\top_{i}\hat{\boldsymbol{\beta}}_{n}(\uptau)=0)]
\end{split}
\end{equation}
The sum on the right hand side of equation \eqref{eq:7.22} is finite with probability one for large $n$ under the assumption \ref{a:3.3.7}, and the uniform consistency of $\hat{\boldsymbol{\beta}}_{n}(\uptau)$ and $\hat{\boldsymbol{\delta}}_{n}.$ Moreover, for some $r^{'}>2,$ we can apply Jensen’s inequality 
\begin{equation}\label{eq:7.24}
\begin{split}
    \frac{\max |x_{i,j}|}{\sqrt{n}}=\Big[\max\Big(\frac{|x_{i,j}|}{\sqrt{n}}\Big)^{r'}\Big]^{\frac{1}{r'}}\leq (n)^{\frac{2-r^{'}}{2r^{'}}}\Big[\frac{1}{n}\sum_{i=1}^{n}|x_{i,j}|^{r'}\Big]^{\frac{1}{r'}}
    \end{split}
\end{equation}
By the strong law of large numbers for strongly mixing sequences (see Corollary 3.48 in \cite{white1984}), the term $\underset{i}{\max |x_{i,j}|}/\sqrt{n}$ converges to zero in probability
for some $r^{'}>2.$ Hence, we conclude that
$$\underset{\uptau \in \mathcal{T}}{\sup}\|\sqrt{n}\mathbb{E}_{n}(\Psi_{2}(\boldsymbol{s},\hat{\boldsymbol{\beta}}_{n}(\uptau),\hat{\boldsymbol{\delta}}_{n},\uptau)\|=o_{p}(1).$$
Similarly, it can be shown that $$\|\sqrt{n}\mathbb{E}_{n}\Psi_{1}(\vartheta,\boldsymbol{z}, \hat{\boldsymbol{\delta}}_{n})\|=o_{p}(1).$$ Combining these two results completes the proof. 

{\bf{Second step:}} Define the empirical process
\begin{equation}\label{eq:7.25}
    \mathbb{G}_{n}(\Psi(\boldsymbol{u},\boldsymbol{\beta} ,\boldsymbol{\delta},\uptau))=\frac{1}{\sqrt{n}}\sum_{i=1}^{n}\big\{\Psi(\boldsymbol{u}_{i},\boldsymbol{\beta} ,\boldsymbol{\delta},\uptau)-\mathbb{E}\big[\Psi(\boldsymbol{u}_{i},\boldsymbol{\beta} ,\boldsymbol{\delta},\uptau)\big]\big\},
\end{equation}
 where $(\boldsymbol{\beta},\boldsymbol{\delta},\uptau) \in \mathscr{B}\times\Delta\times \mathcal{T}$ is a compact set, $\mathbb{G}_{n}(\Psi(.))=(\mathbb{G}_{2,n}(\Psi_{2}(.)),\mathbb{G}_{1,n}(\Psi_{1}(.)))^\top$ and $\Psi(.)$ is same as defined in \eqref{eq:7.17}.
Next, we establish that $(\boldsymbol{\beta},\boldsymbol{\delta}, \uptau)\mapsto \\ \mathbb{G}_{n}\big(\Psi\left(\boldsymbol{u}, \boldsymbol{\beta},\boldsymbol{\delta}, \uptau\right)\big)$ is stochastically equicontinuous on $\mathscr{B}\times\Delta \times \mathcal{T},$ with respect to the $L_{2}(P)$ pseudo-metric $(d)$, which is defined as follows:
\begin{align}\label{eq:7.26}
d[(\boldsymbol{\beta}_{1},\boldsymbol{\delta}_{1},\uptau_{1}),(\boldsymbol{\beta}_{2},\boldsymbol{\delta}_{2},\uptau_{2})]=\Big[\mathbb{E}\Big(\big\|\Psi(\boldsymbol{u},\boldsymbol{\beta}_{1},\boldsymbol{\delta}_{1},\uptau_{1})-\Psi(\boldsymbol{u},\boldsymbol{\beta}_{2},\boldsymbol{\delta}_{2},\uptau_{2})\big\|^{r}\Big)\Big]^{\frac{1}{r}}
\end{align}
where $r$ is the same as defined in the Assumption \ref{a:3.3.4}. First, Consider the\\ $\mathbb{G}_{1,n}(\Psi_{1}(\vartheta,\boldsymbol{z},\boldsymbol{\delta})),$ is defined as
\begin{equation}\label{eq:7.27}
    \mathbb{G}_{1,n}(\Psi_{1}(\vartheta,\boldsymbol{z} ,\boldsymbol{\delta}))=\frac{1}{\sqrt{n}}\sum_{i=1}^{n}\big\{\Psi_{1}(\vartheta_{i},\boldsymbol{z}_{i},\boldsymbol{\delta})-\mathbb{E}\big[\Psi_{1}(\vartheta_{i},\boldsymbol{z}_{i},\boldsymbol{\delta})\big]\big\}.
\end{equation}
Define the class of functions $
\mathscr{G}=\{\Psi_{1}\big(\vartheta_{i},\boldsymbol{z}_{i},\boldsymbol{\delta}\big):\boldsymbol{\delta}\in \Delta\}= \{(w_{i}-\boldsymbol{z}_{i}^\top\boldsymbol{\delta})\boldsymbol{z}_{i}:\boldsymbol{\delta}\in \Delta\}$ 
This class belongs to the Euclidean class with envelope $\mathcal{G}_{1}=\|\Psi_{1}(.)\|+M\|\boldsymbol{z}_{i}\|$, where $M=2\sqrt{p+1}\sup_{\boldsymbol{\delta \in \Delta}}\|\boldsymbol{\delta}-\boldsymbol{\delta}_{0}\|$ by the lemma 2.13 in \cite{PakesandPollard}. The other component $\|\boldsymbol{z}_{i}\|$ is also belongs to the  Euclidean class with envelope $\|\boldsymbol{z}_{i}\|$. Then the product of Euclidean class is again a Euclidean class with envelope $ \mathcal{G}=\mathcal{G}_{1}.\|\boldsymbol{z}_{i}\|$. Moreover, $\mathbb{E}[\mathcal{G}^2]<\infty$ and $\mathbb{E}[\Psi_{1}(.)^2]$ is continuous at $\boldsymbol{\delta}_{0}$ by using the DCT under the assumptions \ref{a:3.3.1} and \ref{a:3.3.2}. Hence by using Lemma 2.14 of \cite{PakesandPollard}, it follow that $\boldsymbol{\delta}\mapsto \mathbb{G}_{1,n}\big(\Psi_{1}\big(\vartheta,\boldsymbol{z}, \boldsymbol{\delta}\big)\big)$ is stochastically continuous.
Next, consider the $\mathbb{G}_{2,n}(\Psi_{2}(.)),$ and define the class of functions 
\begin{equation}\label{eq:7.28}
\begin{split}
    \mathscr{F}&=\{\mathds{1}\left(\boldsymbol{x}_{i}^\top\boldsymbol{\beta}>0\right)\phi_{\uptau}(y_{i}-\boldsymbol{x}_{i}^\top\boldsymbol{\beta})\left(\boldsymbol{x}_{i}\right):\boldsymbol{\beta}\in \mathscr{B},\uptau \in \mathcal{T}\}\\
    &=\{\mathds{1}\left(\boldsymbol{x}_{i}^\top\boldsymbol{\beta}>0\right)\phi_{\uptau}(\boldsymbol{\varepsilon}_{\uptau,i}-\boldsymbol{x}^\top_{i}\boldsymbol{\delta}^{*}-\mathcal{Q}_{\varepsilon_{\uptau,i}}(\uptau))\left(\boldsymbol{x}_{i}\right):\boldsymbol{\beta}\in \mathscr{B},\uptau \in \mathcal{T}\},
\end{split}
\end{equation}
 where $\mathscr{B}\times\mathcal{T}$ is compact set and $\boldsymbol{\delta}^{*}=\boldsymbol{\beta}-\boldsymbol{\beta}_{0}(\uptau)$.
This class is formed as $\mathscr{F}_{0}(\mathcal{T}-\mathscr{F}_{1})(\boldsymbol{x}_{i}),$ where $\mathscr{F}_{1}=\{\mathds{1}(\boldsymbol{\varepsilon}_{\uptau,i}\leq \boldsymbol{x}^\top_{i}\boldsymbol{\delta}^{*}+\mathcal{Q}_{\varepsilon_{\uptau,i}}(\uptau)):\boldsymbol{\beta}\in \mathscr{B},\uptau \in \mathcal{T}\},$ $\mathcal{T}=\{\uptau:\uptau\in\mathcal{T}\}$ and $\mathscr{F}_{0}=\{\mathds{1}(\boldsymbol{x}_{i}^\top\boldsymbol{\beta}>0):\boldsymbol{\beta}\in \mathscr{B}\}.$
By Lemma 2.6.15 and Lemma 2.6.18 in \cite{vanderVaart1996}, the class of functions $\mathscr{F}_{0}$ and $\mathscr{F}_{1}$ is a (VC) subgraph class (see, \cite{MR0288823}). Therefore, the classes $\mathscr{F}_{0}$ and $\mathscr{F}_{1}$ are a Donsker classes with a constant envelopes. Theorem 2.10.6 in \cite{MR1385671} implies that $\mathcal{T}-\mathscr{F}_{0}$ is also Donsker with constant envelope. Then the product of $\mathscr{F}_{1}(\mathcal{T}-\mathscr{F}_{0})$ with $\boldsymbol{x}$ also forms a Donsker class with a square integrable envelope $4.\|\boldsymbol{x}\|$ by using Theorem 2.10.6 in \cite{MR1385671}. Hence $(\boldsymbol{\beta}, \boldsymbol{\delta}, \uptau) \mapsto \mathbb{G}_{2,n}\big(\Psi_{2}\big(\boldsymbol{s},\boldsymbol{\beta},\boldsymbol{\delta},\uptau\big)\big)$ is stochastically equicontinuous.
Since both components $\mathbb{G}_{1,n}\big(\Psi_{1}\big(\vartheta,\boldsymbol{z}, \boldsymbol{\delta}\big)\big)$ and $\mathbb{G}_{2,n}\big(\Psi_{2}\big(\boldsymbol{s},\boldsymbol{\beta},\boldsymbol{\delta},\uptau\big)\big)$  are stochastically equicontinuous, it follows that  $\mathbb{G}_{n}\big(\Psi\big(\boldsymbol{u},\boldsymbol{\beta},\boldsymbol{\delta}, \uptau\big)\big)$ is stochastically equicontinuous on $\mathscr{B}\times \Delta \times \mathcal{T}.$

\noindent{\bf{Third step:}} We now aim to establish that
\begin{equation}\label{eq:7.29}
\sup_{\uptau \in \mathcal{T}}\big\|\mathbb{G}_{n}\big(\Psi\big(\boldsymbol{u},\Hat{\boldsymbol{\beta}}_{n}(\uptau),\hat{\boldsymbol{\delta}}_{n},\uptau\big)\big)-\mathbb{G}_{n}\big(\Psi\big(\boldsymbol{u},\boldsymbol{\beta}_{0}(\uptau),\boldsymbol{\delta}_{0},\uptau\big)\big)\big\|\xrightarrow{p}0,
\end{equation}  as $n\rightarrow\infty$, 
where $\Psi(.)$ is defined in \eqref{eq:7.17}. To prove it, we show that uniform consistency of $\hat{\boldsymbol{\beta}}_{n}(\uptau)$ and consistency of $\hat{\boldsymbol{\delta}}_{n}$ imply that 
 \begin{equation}\label{eq:7.30}
\sup_{\uptau \in \mathcal{T}}d\left[\left(\uptau,b(\uptau),\boldsymbol{\delta}),(\uptau,\boldsymbol{\beta}_{0}(\uptau),\boldsymbol{\delta}_{0}\right)\right]^{r}|_{b(\uptau)=\Hat{\boldsymbol{\beta}}_{n}(\uptau), \delta = \hat{\boldsymbol{\delta}}_{n}}\xrightarrow{p}0,
\end{equation} as $n\rightarrow\infty$, 
where $d$ is a pseudometric defined in \eqref{eq:7.25}. Let us consider
\begin{equation}\label{eq:7.31}
  \begin{split}
  d\left[\left(\uptau,b(\uptau),\boldsymbol{\delta}),(\uptau,\boldsymbol{\beta}_{0}(\uptau),\boldsymbol{\delta}_{0}\right)\right]^{r} &=\mathbb{E}\big[\big\|\Psi\big(\boldsymbol{u},b(\uptau),\boldsymbol{\delta},\uptau\big)-\Psi\big(\boldsymbol{u},\boldsymbol{\beta}_{0}(\uptau),\boldsymbol{\delta}_{0},\uptau\big)\big\|^{r}\big]\\
  &\leq \mathbb{E}\big[\big\|\Psi_{1}\big(\vartheta,\boldsymbol{z}, \boldsymbol{\delta}\big)-\Psi_{1}\big(\vartheta,\boldsymbol{z}, \boldsymbol{\delta}_{0}\big)\big\|^{r}\big]\\ &+\mathbb{E}\big[\big\|\Psi_{2}\big(\boldsymbol{s}, b(\uptau),\boldsymbol{\delta},\uptau\big)-\Psi_{2}\big(\boldsymbol{s}, \boldsymbol{\beta}_{0}(\uptau),\boldsymbol{\delta}_{0},\uptau\big)\big\|^{r}\big]
 \end{split}  
\end{equation}
For the first component, then we obtain 
\begin{equation}\label{eq:7.32}
    \mathbb{E}\big[\big\|\Psi_{1}\big(\vartheta,\boldsymbol{z}, \boldsymbol{\delta}\big)-\Psi_{1}\big(\vartheta,\boldsymbol{z}, \boldsymbol{\delta}_{0}\big)\big\|^{r}\big]\leq \mathbb{E}\big[\|\boldsymbol{z}_{i}\|^{2r}\big] \big(\|\boldsymbol{\delta}-\boldsymbol{\delta}_{0}\|^{r}\big).
\end{equation}
For the second term, applying the Mean Value Theorem, the Cauchy–Schwarz inequality,  and the fact that for any real numbers 
$a$ and $b,$ $\mathds{1}(a>0)-\mathds{1}(b>0)\leq \mathds{1}(|a|\leq|a-b|),$ we obtain
\begin{equation}\label{eq:7.33}
  \begin{split}
\big\|\Psi_{2}\big(\boldsymbol{s}, b(\uptau),\boldsymbol{\delta},\uptau\big)-\Psi_{2}\big(\boldsymbol{s}, \boldsymbol{\beta}_{0}(\uptau),\boldsymbol{\delta}_{0},\uptau\big)\big\|^{r}&\leq\|\boldsymbol{x}_{i}\|^{r}\big\{\big\|\boldsymbol{x}\| \|b(\uptau)-\boldsymbol{\beta}_{0}(\uptau)\|^{r'}\\
&+\big[\mathds{1}(|\boldsymbol{x}^\top b(\uptau)|\leq\|\boldsymbol{x}\|\|b(\uptau)-\boldsymbol{\beta}_{0}(\uptau)\|)\big]\big\}\\
 \end{split}  
\end{equation}
Taking expectations and using Assumption \ref{a:3.3.7}, it follows that 
\begin{equation}\label{eq:7.34}
\begin{split}
    \mathbb{E}\big[\big\|\Psi_{2}\big(\boldsymbol{s}, b(\uptau),\boldsymbol{\delta},\uptau\big)-\Psi_{2}\big(\boldsymbol{s}, \boldsymbol{\beta}_{0}(\uptau),\boldsymbol{\delta}_{0},\uptau\big)\big\|^{r}&\leq \mathbb{E}\big[\big\|\boldsymbol{x}\big\|^{r+1}\big]\big\|b(\uptau)-\boldsymbol{\beta}_{0}(\uptau)\big\|\\
    &+K.\big(\mathbb{E}\big[\big\|\boldsymbol{x}\big\|^{2r}\big]\big)^{\frac{1}{2}}.\big\|b(\uptau)-\boldsymbol{\beta}_{0}(\uptau)\big\|
    \end{split}
\end{equation}
Combining \eqref{eq:7.31} and \eqref{eq:7.33}, and applying Assumptions \ref{a:3.3.2} and \ref{a:3.3.4}, we have
\begin{equation}\label{eq:7.35}
\begin{split}
    d\Big[\big(\uptau,b(\uptau),\boldsymbol{\delta}),(\uptau,\boldsymbol{\beta}_{0}(\uptau),\boldsymbol{\delta}_{0}\big)\Big]_{b(\uptau)= \Hat{\boldsymbol{\beta}}_{n} (\uptau), \boldsymbol{\delta} = \hat{\boldsymbol{\delta}}_{n}}
&\leq \Big\{\phi_{1}.\big(\|\hat{\boldsymbol{\delta}}_{n}-\boldsymbol{\delta}_{0}\|^{r}\big)+\phi_{2}.(1+K).\\
&\sup_{\uptau \in \mathcal{T}}\|\Hat{\boldsymbol{\beta}}_{n}(\uptau)-\boldsymbol{\beta}_{0}(\uptau)\|\Big\}^{\frac{1}{r}}.\\
\end{split}
\end{equation}
Therefore, using the consistency of $\hat{\boldsymbol{\beta}}_{n}(\uptau)$ and $\hat{\boldsymbol{\delta}}_{n}$, equation \eqref{eq:7.30} is thereby established. Moreover, $(\boldsymbol{\beta},\boldsymbol{\delta},\uptau)\mapsto \mathbb{G}_{n}(\boldsymbol{u},\boldsymbol{\beta},\boldsymbol{\delta},\uptau)$ is stochastically equicontinuous. Together with \eqref{eq:7.30}, this yields the desired result in \eqref{eq:7.29}.

\noindent{\bf{Fourth step:}} Recall \eqref{eq:7.21},  and we have
\begin{equation}\label{eq:7.36}
\begin{split}
o_{p}(1)&=\sqrt{n}\mathbb{E}_{n}\big[\Psi_{2}(\boldsymbol{s},\hat{\boldsymbol{\beta}}_{n},\hat{\boldsymbol{\delta}}_{n},\uptau)\big]\quad \text{in} \quad \ell^{\infty}(\mathscr{B} \times \mathcal{T})\\
o_{p}(1)&=\sqrt{n}\Big[\mathbb{E}_{n}\big[\Psi_{2}(\boldsymbol{s},\hat{\boldsymbol{\beta}}_{n},\hat{\boldsymbol{\delta}}_{n},\uptau)\big]-\mathbb{E}\big[\Psi_{2}(\boldsymbol{s},\hat{\boldsymbol{\beta}}_{n},\hat{\boldsymbol{\delta}}_{n},\uptau)\big]\Big]+\sqrt{n}\mathbb{E}\big[\Psi_{2}(\boldsymbol{s},\hat{\boldsymbol{\beta}}_{n},\hat{\boldsymbol{\delta}}_{n},\uptau)\big]\\
o_{p}(1)&=\mathbb{G}_{2,n}\big(\Psi_{2}(\boldsymbol{s},\hat{\boldsymbol{\beta}}_{n},\hat{\boldsymbol{\delta}}_{n},\uptau)\big)+\sqrt{n}\mathbb{E}\big[\Psi_{2}(\boldsymbol{s},\hat{\boldsymbol{\beta}}_{n},\hat{\boldsymbol{\delta}}_{n},\uptau)\big]\quad  \text{in} \quad \ell^{\infty}(\mathscr{B}\times \mathcal{T})
 \end{split}
  \end{equation}
 Since $\mathbb{G}_{2,n}\big(\Psi_{2}\big(.\big)\big)$ is stochastically equicontinuous, it follows from equation \eqref{eq:7.29} that
\begin{equation}\label{eq:7.37}
\mathbb{G}_{2,n}\big(\Psi_{2}(\boldsymbol{s},\boldsymbol{\beta}_{0},\boldsymbol{\delta}_{0},\uptau)\big)+\sqrt{n}\mathbb{E}\big[\Psi_{2}(\boldsymbol{s},\hat{\boldsymbol{\beta}}_{n},\hat{\boldsymbol{\delta}}_{n},\uptau)\big]=o_{p}(1)\quad  \text{in} \quad \ell^{\infty}(\mathscr{B}\times \mathcal{T})
\end{equation}
Next apply a first-order Taylor expansion (uniform in 
$\uptau$), of the population expectation around the true parameter $(\boldsymbol{\beta}_{0}(\uptau),\boldsymbol{\delta}_{0})$ then we obtain 
\begin{equation}\label{eq:7.38}
  \mathbb{E}\big[\Psi_{2}(\boldsymbol{s},\hat{\boldsymbol{\beta}}_{n},\hat{\boldsymbol{\delta}}_{n},\uptau)\big]=\mathbb{E}\big[\Psi_{2}(\boldsymbol{s},\boldsymbol{\beta}_{0},\boldsymbol{\delta}_{0},\uptau)\big] +\mathbb{J}_{2,\boldsymbol{\beta}}(\uptau)\big(\hat{\boldsymbol{\beta}}_{n}(\uptau)-\boldsymbol{\beta}_{0}(\uptau)\big) +\mathbb{J}_{2,\boldsymbol{\delta}}(\uptau)\big(\hat{\boldsymbol{\delta}}_{n}-\boldsymbol{\delta}_{0}\big),
\end{equation}
where the $\mathbb{J}_{2,\boldsymbol{\beta}}(\uptau)$ and $\mathbb{J}_{2,\boldsymbol{\delta}}(\uptau)$ are block matrix defined in \eqref{eq:3.3.7}. Since $\mathbb{E}\big[\Psi_{2}(\boldsymbol{s},\boldsymbol{\beta}_{0},\boldsymbol{\delta}_{0},\uptau)\big]=0,$ and Substituting this expression into the previous equation gives
\begin{equation}\label{eq:7.39}
\mathbb{J}_{2,\boldsymbol{\beta}}(\uptau)\sqrt{n}\big(\hat{\boldsymbol{\beta}}_{n}(\uptau)-\boldsymbol{\beta}_{0}(\uptau)\big) +\mathbb{J}_{2,\boldsymbol{\delta}}(\uptau)\sqrt{n}\big(\hat{\boldsymbol{\delta}}_{n}-\boldsymbol{\delta}_{0}\big)=\mathbb{G}_{2,n}\big(\Psi_{2}(\boldsymbol{s},\boldsymbol{\beta}_{0},\boldsymbol{\delta}_{0},\uptau)\big)+o_{p}(1)
\end{equation}
Similarly, from equation \eqref{eq:7.21} and applying Taylor expansion, we have
\begin{equation}\label{eq:7.40}
\begin{split}
\sqrt{n}\mathbb{E}_{n}\big[\Psi_{1}\big(\vartheta,\boldsymbol{z},\hat{\boldsymbol{\delta}}_{n}\big)\big]&=o_{p}(1)\\
\mathbb{J}_{1,\boldsymbol{\delta}}\sqrt{n}\big(\hat{\boldsymbol{\delta}}_{n}-\boldsymbol{\delta}_{0}\big)&=\mathbb{G}_{n,1}\big(\vartheta,\boldsymbol{z},\boldsymbol{\delta}_{0}\big)+o_{p}(1)
    \end{split},
\end{equation}
where $\mathbb{J}_{1,\boldsymbol{\delta}}$ matrix is same defined in \eqref{eq:3.3.7}. Under the Assumptions \ref{a:2.1}, \ref{a:3.3.1} and \ref{a:3.3.2} and applying CLT for $\alpha-$ mixing processes (see, Theorem~5.20 in \cite{white1984}, p.p.-130) this implies $\sqrt{n}\big(\hat{\boldsymbol{\delta}}_{n}-\boldsymbol{\delta}_{0}\big)=O_{p}(1).$
Combining equations \eqref{eq:7.39} and \eqref{eq:7.40}, and stacking the two vector equations, we obtain the joint representation
\begin{equation}\label{eq:7.41}
\mathbb{J}(\uptau) \sqrt{n}
\begin{pmatrix}
          \begin{bmatrix}
          \hat{\boldsymbol{\beta}}_{n}(\uptau)\\
          \hat{\boldsymbol{\delta}}_{n}
          \end{bmatrix} -
          \begin{bmatrix}
           \boldsymbol{\beta}_{0}(\uptau)\\
          \boldsymbol{\delta}_{0}
         \end{bmatrix}
         \end{pmatrix}=\mathbb{G}_{n} \big( \Psi(\boldsymbol{u},
\boldsymbol{\beta}_{0}(\tau), \boldsymbol{\delta}_{0}, \tau) \big)+o_{p}(1), \quad  \text{in} \quad \Delta \times\ell^{\infty}(\mathscr{B} \times \mathcal{T}),
\end{equation}
where $\mathbb{G}_{n} \big( \Psi(\boldsymbol{u},
\boldsymbol{\beta}_{0}(\tau), \boldsymbol{\delta}_{0}, \tau) \big)=\begin{pmatrix}
     \mathbb{G}_{2,n}\big(\Psi_{2}(\boldsymbol{s},\boldsymbol{\beta}_{0},\boldsymbol{\delta}_{0},\uptau)\big)\\
     \mathbb{G}_{n,1}\big(\vartheta,\boldsymbol{z},\boldsymbol{\delta}_{0}\big)
    \end{pmatrix}$ and $\mathbb{J}(\uptau)$ is Jacobian matrix defined in \eqref{eq:3.3.6}. For any positive definite matrix $A$ and a nonzero vector $b,$ we have $\|Ab\|\geq \lambda_{(1)}\|b\|,$ where $\lambda_{(1)}$ is the smallest eigen value of $A$ (see, e.g., \cite{Angrist}), under Assumptions \ref{a:3.3.2}, \ref{a:3.3.3}, and \ref{a:3.3.4}, it follows that $\mathbb{J}(\uptau)$ is uniformly positive definite. Consequently, this bound holds uniformly for $\mathbb{J}(\uptau).$  Then, we have
\begin{equation}\label{eq:7.42}
\begin{split}
\sup_{\tau \in \mathcal{T}} 
\left\| \mathbb{G}_{n} \big( \Psi(\boldsymbol{u},
\boldsymbol{\beta}_{0}(\tau), \boldsymbol{\delta}_{0}, \tau) \big) + o_{p}(1) \right\|  
&= \sup_{\tau \in \mathcal{T}} 
\Bigg\| 
\mathbb{J}(\tau)\sqrt{n} 
\begin{pmatrix}
          \begin{bmatrix}
          \hat{\boldsymbol{\beta}}_{n}(\uptau)\\
          \hat{\boldsymbol{\delta}}_{n}
          \end{bmatrix} -
          \begin{bmatrix}
           \boldsymbol{\beta}_{0}(\uptau)\\
          \boldsymbol{\delta}_{0}
         \end{bmatrix}
         \end{pmatrix}\\
&+ o_{p}\!\Bigg( \sqrt{n}
\begin{pmatrix}
          \begin{bmatrix}
          \hat{\boldsymbol{\beta}}_{n}(\uptau)\\
          \hat{\boldsymbol{\delta}}_{n}
          \end{bmatrix} -
          \begin{bmatrix}
           \boldsymbol{\beta}_{0}(\uptau)\\
          \boldsymbol{\delta}_{0}
         \end{bmatrix}
         \end{pmatrix}
\Bigg)
\Bigg\|,\\
&\geq \big(\lambda_{( 1)} +o_{p}(1)\big)\left\|\sqrt{n} 
\begin{pmatrix}
          \begin{bmatrix}
          \hat{\boldsymbol{\beta}}_{n}(\uptau)\\
          \hat{\boldsymbol{\delta}}_{n}
          \end{bmatrix} -
          \begin{bmatrix}
           \boldsymbol{\beta}_{0}(\uptau)\\
          \boldsymbol{\delta}_{0}
         \end{bmatrix}
         \end{pmatrix}\right\|
\end{split}
\end{equation}
 where $\lambda_{( 1)}$ is the smallest eigen value of $\mathbb{J}(\uptau)$ uniformly in $\uptau \in \mathcal{T}.$

\noindent {\bf{Fifth step:}}
 Observe that the mapping $\uptau \mapsto \boldsymbol{\beta}_{0}(\uptau)$ is Lipchitz continuous in $\uptau,$  by the Assumption \ref{a:3.3.6}. Moreover, $\boldsymbol{\beta}_{0}(\uptau)$ is a solution to $\boldsymbol{\beta}$ in $$\mathbb{E}\left\{\mathds{1}\left(\boldsymbol{x}_{i}^\top\boldsymbol{\beta}>0\right)\varphi_{\uptau}(\varepsilon_{\uptau,i}-\boldsymbol{x}^\top_{i}\boldsymbol{\delta}^{*}-Q_{\varepsilon_{\uptau}}(\uptau))\left(\boldsymbol{x}_{i}\right)\right\}=0.$$ Hence, due to all the aforesaid facts, the process  
$\uptau \mapsto \mathbb{G}_{2,n}\left[\Psi_{2}\left(\boldsymbol{s}, \boldsymbol{\beta}_{0}(\uptau),\boldsymbol{\delta}_{0},\uptau\right)\right]$ is stochastically equicontinuous over $\mathcal{T}$ with respect to the pseudometric given by $$d\big[\big(\uptau_{1},\boldsymbol{\beta}_{0}(\uptau_{1})),(\uptau_{2},\boldsymbol{\beta}_{0}(\uptau_{2})\big)\big]^{r},$$ where $d$ is defined in \eqref{eq:7.26}. Consequently, the process $\uptau \mapsto \mathbb{G}_{n}\left[\Psi\left(\boldsymbol{u}, \boldsymbol{\beta}_{0}(\uptau),\boldsymbol{\delta}_{0},\uptau\right)\right]$ is stochastically equicontinuous over $\mathcal{T}.$ Next, we show that the finite-dimensional distributions converge. For any $k\geq1$ and $\uptau_{1},\ldots, \uptau_{k} \in \mathcal{T},$ 
$\mathbb{G}_{n}\big(\Psi\big(\boldsymbol{u},\boldsymbol{\beta}_{0}(\uptau),\boldsymbol{\delta}_{0},\uptau\big)\big)$ converges in distribution to $\mathbb{G}\big(\Psi(\boldsymbol{u},\boldsymbol{\beta},\boldsymbol{\delta},\uptau\big)\big).$ This claim is proved if we show that $\mathbb{G}_{n}\big(\Psi\big(\boldsymbol{u},\boldsymbol{\beta},\boldsymbol{\delta},\uptau_{1}\big)\big),\cdots,\mathbb{G}_{n}\big(\Psi\big(\boldsymbol{u},\boldsymbol{\beta},\boldsymbol{\delta},\uptau_{k}\big)\big)$ converges in distribution to $\mathbb{G}\big(\Psi\big(\boldsymbol{u},\boldsymbol{\beta},\boldsymbol{\delta},\uptau_{1}\big)\big),\cdots,\mathbb{G}\big(\Psi\big(\boldsymbol{u},\boldsymbol{\beta},\boldsymbol{\delta},\uptau_{k}\big)\big)$ for any $k>1.$ By the Cramer-Wold device, we need to show that, for any unit vector $a=(a_{1},\cdots, a_{k})^\top\in \mathbb{R}^{k},$ then $\mathscr{R}_{n}=\sum_{i=1}^{k} a_{i}\mathbb{G}_{n}\big(\Psi\big(\boldsymbol{u},\boldsymbol{\beta},\boldsymbol{\delta},\uptau_{i}\big)\big)$ converges in distribution to $\mathscr{R}=\sum_{i=1}^{k}a_{i}\mathbb{G}\big(\Psi\big(\boldsymbol{u},\boldsymbol{\beta},\boldsymbol{\delta},\uptau_{i}\big)\big).$ To show this convergence, we adapt the blocking approach (see,e.g, \cite{DEHLING20093699}). The proof is an adaptation of that of Lemma~B.2 in \cite{BUCHER201683}, which is very similar to the proof of the Theorem ~\ref{th:3.3.4} in Appendix~\ref{appendix}. For brevity, the details are omitted here. Finally, applying the Functional CLT for $\alpha$-mixing processes (see Theorem 2.1 in \cite{arcones1994central} and \cite{CHANTSAY}), the weak convergence result follows.
\begin{equation}\label{eq:7.43}
 \mathbb{G}_{n}\left(\Psi\left(\boldsymbol{u},\boldsymbol{\beta}_{0}(\uptau),\boldsymbol{\delta}_{0},\uptau\right)\right) \Rightarrow \mathbb{G}(\uptau)
\hspace{0.4cm} in \quad \Delta \times \ell^{\infty}(\mathscr{B}\times\mathcal{T}),
\end{equation}
where $\mathbb{G}(\uptau)$ is a centered Gaussian vector process with covariance function $\mathbb{E}[\mathbb{G}(\uptau)\mathbb{G}(\uptau')^\top]=\mathbb{V}(\uptau,\uptau')$ defined in \eqref{eq:3.3.5}. Therefore, $\displaystyle \sup_{\uptau \in \mathcal{T}}\|\mathbb{G}_{n}\left(\Psi\left(\boldsymbol{u}, \boldsymbol{\beta}_{0}(\uptau),\boldsymbol{\delta}_{0},\uptau\right)\right)+o_{p}(1)\| = O_{p}(n^{-1/2}),$ and which implies from \eqref{eq:7.42} that 
\begin{equation}\label{eq:7.44}
\displaystyle \sup_{\uptau \in \mathcal{T}} \left\|\sqrt{n} 
\begin{pmatrix}
          \begin{bmatrix}
          \hat{\boldsymbol{\beta}}_{n}(\uptau)\\
          \hat{\boldsymbol{\delta}}_{n}
          \end{bmatrix} -
          \begin{bmatrix}
           \boldsymbol{\beta}_{0}(\uptau)\\
          \boldsymbol{\delta}_{0}
         \end{bmatrix}
         \end{pmatrix}\right\|=O_{p}(1)
\end{equation}
Finally, using \eqref{eq:7.41} and \eqref{eq:7.43}, we obtain
\begin{equation}\label{eq:7.45}
\sqrt{n} 
\begin{pmatrix}
          \begin{bmatrix}
          \hat{\boldsymbol{\beta}}_{n}(\uptau)\\
          \hat{\boldsymbol{\delta}}_{n}
          \end{bmatrix} -
          \begin{bmatrix}
           \boldsymbol{\beta}_{0}(\uptau)\\
          \boldsymbol{\delta}_{0}
         \end{bmatrix}
         \end{pmatrix} =\mathbb{J}(\uptau)^{-1}.\mathbb{G}_{n}\left(\Psi\left(\boldsymbol{u}, \boldsymbol{\beta}_{0}(\uptau),\boldsymbol{\delta}_{0},\uptau\right)\right)+o_{p}(1)\Rightarrow \mathbb{G}(.),
\end{equation}in $\Delta \times \ell^{\infty}(\mathscr{B}\times\mathcal{T})$ and 
\begin{equation}\label{eq:7.46}
    \mathbb{J}(\uptau)^{-1}=\begin{bmatrix}
     \mathbb{J}_{2,\boldsymbol{\beta}}(\uptau)&\mathbb{J}_{2,\boldsymbol{\delta}}(\uptau)\\
     \boldsymbol{0} &\mathbb{J}_{1,\boldsymbol{\delta}}
    \end{bmatrix}^{-1}=\begin{bmatrix}
    \mathbb{J}_{2,\boldsymbol{\beta}}^{-1}(\uptau)&-\mathbb{J}_{2,\boldsymbol{\beta}}^{-1}(\uptau)\mathbb{J}_{2,\boldsymbol{\delta}}(\uptau)\mathbb{J}_{1,\boldsymbol{\delta}}^{-1}\\
    \boldsymbol{0}&\mathbb{J}_{1,\boldsymbol{\delta}}^{-1}.
    \end{bmatrix}
\end{equation} 
    Now, using the continuity mapping theorem, the weak convergence of \eqref{eq:7.39} will also hold. Hence, this completes the proof. 
 \end{proof}
\subsection{B.5 Asymptotic Normality of \texorpdfstring{$\mathscr{L}_{n}$}{}}\label{B.5}
 \begin{proof}[{\bf{Proof of Theorem \ref{th:3.3.4}:}}] Theorem \ref{th:3.3.3} asserts that
 \begin{equation}
 \begin{split}
     \sqrt{n}\big(\hat{\boldsymbol{\beta}}_{n}(\uptau)-\boldsymbol{\beta}_{0}(\uptau)\big)&=\mathbb{J}^{-1}_{2,\boldsymbol{\beta}}(\uptau)\Big\{\frac{1}{\sqrt{n}}\sum_{i=1}^{n}\Psi_{2}\big(\boldsymbol{s}_{i},\boldsymbol{\beta}_{0}(\uptau),\boldsymbol{\delta}_{0},\uptau\big)-\mathbb{J}_{2,\boldsymbol{\delta}}(\uptau)\mathbb{J}^{-1}_{1,\boldsymbol{\delta}}\\
     &\frac{1}{\sqrt{n}}\sum_{i=1}^{n}\Psi_{1}\big(\vartheta_{i},\boldsymbol{z}_{i},\boldsymbol{\delta}_{0}\big)\Big\} +o_{p}(1)\quad \text{in}\quad \ell^{\infty}(\mathscr{B}\times \mathcal{T})
     \end{split}
     \end{equation}
From the definition of the L–estimator in \eqref{eq:2.12} and \eqref{eq:2.13}, it follows that
     \begin{equation}
 \begin{split}
     \mathscr{L}_{n}-\mathscr{L}_{0}&=\int_{0}^{1}\Big[\mathbb{J}^{-1}_{2,\boldsymbol{\beta}}(\uptau)\Big\{\frac{1}{\sqrt{n}}\sum_{i=1}^{n}\Psi_{2}\big(\boldsymbol{s}_{i},\boldsymbol{\beta}_{0}(\uptau),\boldsymbol{\delta}_{0},\uptau\big)-\mathbb{J}_{2,\boldsymbol{\delta}}(\uptau)\mathbb{J}^{-1}_{1,\boldsymbol{\delta}}\\&\frac{1}{\sqrt{n}}\sum_{i=1}^{n}\Psi_{1}\big(\vartheta_{i},\boldsymbol{z}_{i},\boldsymbol{\delta}_{0}\big)\Big\}\Big]J_{1}(\uptau)d(\uptau)+o_{p}(1)
     \end{split}
     \end{equation}
     Since the matrices $\mathbb{J}^{-1}_{2,\boldsymbol{\beta}}(\uptau)$ and $\mathbb{J}^{-1}_{1,\boldsymbol{\delta}}$ are uniformly bounded in $\uptau \in \mathcal{T}$ by the assumptions \ref{a:3.3.3}, \ref{a:3.3.4} and \ref{a:3.3.2} and the $\mathbb{J}_{2,\boldsymbol{\delta}}(\uptau)$ by assumption \ref{a:3.3.8}. Here, we use $\|.\|_{2}$ spectral norm for any matrix. In addition, the functions $\Psi_{2}(.)$ and $\Psi_{1}(.)$ are dominated by an integrable functions using the assumptions \ref{a:3.3.2} and \ref{a:3.3.4}. Therefore, by the DCT, the order of integration and summation can be interchanged. Then, we obtain
     \begin{equation}
     \mathscr{L}_{n}-\mathscr{L}_{0}=\frac{1}{\sqrt{n}}\sum_{i=1}^{n}h_{i}+o_{p}(1),
    \end{equation}
 where  
 \begin{equation}
 \begin{split}
     h_{i}&=\int_{0}^{1}\Tilde{v}_{i}(\uptau)J_{1}(\uptau)d(\uptau)\\ 
     \Tilde{v}_{i}(\uptau)&=\mathbb{J}^{-1}_{2,\boldsymbol{\beta}}(\uptau)\big\{\Psi_{2}\big(\boldsymbol{s}_{i},\boldsymbol{\beta}_{0}(\uptau),\boldsymbol{\delta}_{0},\uptau\big)-\mathbb{J}_{2,\boldsymbol{\delta}}(\uptau)\mathbb{J}^{-1}_{1,\boldsymbol{\delta}}
 \Psi_{1}\big(\vartheta_{i},\boldsymbol{z}_{i},\boldsymbol{\delta}_{0}\big)\big\}.
 \end{split}
 \end{equation}
 To establish the asymptotic distribution of the leading term
$$\frac{1}{\sqrt{n}}\sum_{i=1}^n h_{i},$$ we employ a blocking technique (see, e.g., \cite{DEHLING20093699}, 2002). The proof is an adaptation of Lemma 8.2 in \cite{BUCHER201683} and relies on the combination of the small-block/large-block method and the Cramer–Wold device. In view of the Cramer-Wold device, it is sufficient to show that for any unit vector $a\in\mathbb{R}^{p+2}$,
\begin{equation}\label{eq:7.50}
\mathbb{T}_{n} \,=\, \frac{1}{\sqrt{n}}\sum_{i=1}^n a^{\top}h_{i} \,=\, \frac{1}{\sqrt{n}}\sum_{i=1}^n \xi_{i} \ \xrightarrow{d}\mathbb{T},
\end{equation}
where $\xi_{i}=a^{\top}h_{i}$ and $\mathbb{T}$ is a $\mathbb{R}^{p+2}$ valued random vector associated with a $p+2$-dimensional multivariate normal distribution, with  covariance matrix $a^{\top}\boldsymbol{\Omega} a$, i.e. $\mathbb{T}\sim N\big(\boldsymbol{0},\,a^{\top}\boldsymbol{\Omega} a\big)$,
and $\boldsymbol{\Omega}$ is the same as defined in \eqref{eq:3.3.9}.
To establish this convergence, we apply Bernstein’s small-block and large-block technique. Namely, we partition
$$ C_n = \{1,2,\dots,n\} $$
into $(2k+1)$ disjoint subsets: the large blocks $L_{nm_{1}}$ and the small blocks $S_{nm_{1}}$ for $m_{1}=1,\dots,k$, together with a remainder block $R_{nk}$. The blocks are defined as follows:
For $m_{1}=1,\dots,k$, the large block is 
$$L_{nm_{1}} = \big\{ (m_{1}-1)(p_{1}+q_{1})+1,\dots,(m_{1}-1)(p_{1}+q_{1})+p_{1} \big\}.$$
and the small block is 
$$S_{nm_{1}} = \big\{ (m_{1}-1)(p_{1}+q_{1})+p_{1}+1,\dots, m_{1}(p_{1}+q_{1}) \big\}.$$
The remainder block is given by
$$R_{nk} = \big\{ k(p_{1}+q_{1})+1,\dots,n \big\}.$$
Thus, each $L_{nm_{1}}$ is a large block of length $p_{1}=p_{1n}$, each $S_{nm_{1}}$ is a small block of length $q_{1}=q_{1n}$. The total number of blocks is given by
$$k = k_{n}= \Big[ \frac{n}{p_{1}+q_{1}} \Big],$$
where $[.]$ denotes the integer part. Since $(p_{1}+q_{1}) \leq n$, for large $n$, the block lengths $p_{1n}$, $q_{1n}$, and the number of blocks $k_{n}$ all depend on $n$. For simplicity, we suppress the dependence on $n$ without confusing.
Define
$$\mathbb{T}_{n} = n^{-1/2} \big[ Z_{n}^{(L)} + Z_{n}^{(S)} + Z_{n}^{(R)} \big],$$
where for a large block,
\begin{equation}
    Z_{n}^{(L)} = \sum_{m_{1}=1}^K C_{nm_{1}}^{(L)}, \quad
   C_{nm_{1}}^{(L)} = \sum_{i\in L_{nm_{1}}}  \xi_{i}= \sum_{i=a_{m_{1}}}^{a_{m_{1}}+p_{1}-1} \xi_i\;,
\end{equation}
where $a_{m_1} = (m_1-1)(p_1+q_1)+1.$ Thus,
\begin{equation}
    Z_{n}^{(L)} = \sum_{m_{1}=1}^k \sum_{i=a_{m_{1}}}^{a_{m_{1}}+p_{1}-1} \xi_{i}.
\end{equation}
Similarly, for the small blocks,
\begin{equation}
    Z_{n}^{(S)} = \sum_{m_{1}=1}^{k} C_{nm_{1}}^{(S)}, \quad
C_{nm_{1}}^{(S)} = \sum_{j \in S_{nm_{1}}} \xi_{j}=\sum_{j=b_{m_{1}}}^{b_{m_{1}}+q_{1}-1} \xi_{j}, 
\end{equation}
where $b_{m_{1}} = a_{m_{1}}+p_{1}.$ Thus,
\begin{equation}
Z_n^{(S)} = \sum_{m_{1}=1}^k \sum_{j=b_{m_{1}}}^{b_{m_{1}}+q_{1}-1} \xi_{j}.
\end{equation}
Finally, the remainder block is given by 
\begin{equation}
    Z_{n}^{(R)} = C_{n,k+1}^{(R)}, \quad C_{n,k+1}^{(R)} = \sum_{i \in R_{nk}} \xi_i = \sum_{i=k(p_{1}+q_{1})+1}^{n} \xi_{i}.
\end{equation}

To prove \eqref{eq:7.50}, it suffices to verify that, as $n \to \infty,$ conditions \eqref{eq:A1}-\eqref{eq:A4} hold:
\begin{align}
 n^{-1}\,\mathbb{E}\Big[(Z_n^{(S)})^2\Big] \to 0, \quad n^{-1}\,\mathbb{E}\Big[(Z_n^{(R)})^2\Big] &\to 0, && \tag{\bf{A.1}}\label{eq:A1} \\
 \Big| \mathbb{E}\Big[ e^{it Z_n^{(L)}} \Big] - \prod_{m_1=1}^K \mathbb{E}\Big[e^{it C_{nm_1}^{(L)}}\Big] \Big| &\to 0, && \tag{\bf{A.2}}\label{eq:A2} \\
 \mathrm{Var}\big(n^{-1/2} Z_n^{(L)}\big) \to a^T \Sigma a &, && \tag{\bf{A.3}}\label{eq:A3} \\
 n^{-1} \sum_{m_{1}=1}^K \mathbb{E}\Big[ (C_{nm_{1}}^{(L)})^2 \mathbf{1}\big(|C_{nm_{1}}^{(L)}| > \Tilde{\epsilon} \sqrt{n}\big) \Big] &\to 0,  && \tag{\bf{A.4}}\label{eq:A4}
\end{align}
for $\Tilde{\epsilon}>0.$
From condition \ref{eq:A1}, we conclude that the $Z_n^{(S)}$ and $Z_n^{(R)}$ are asymptotically negligible in probability. Next, condition \ref{eq:A2} shows that the summand $C_{nm_1}^{(L)}$ in $Z_n^{(L)}$ are asymptotically independent. Furthermore, Conditions \ref{eq:A3} and \ref{eq:A4} are standard Lindeberg--Feller type conditions for asymptotic normality of $Z_{n}^{(L)}$ in the independent setup.

To verify \eqref{eq:A1}, we carefully choose the large-block size $p_{1}$ and small-block size $q_{1}$ so that
$$\frac{q_{1}}{p_{1}} \to 0, \quad \frac{p_{1}}{n} \to 0, \quad p_{1} \to \infty, \quad q_{1} \to \infty, \quad \frac{n\,\alpha(n)}{p_{1}} \to 0, \quad \text{as } n\to\infty,$$
where $\alpha(.)$ is same as defined in \eqref{eq:3.1}.
Now, consider \eqref{eq:A1},
\begin{equation}
\begin{split}
    \mathbb{E}\Big[(Z_n^{(S)})^2\Big] = \mathbb{E}\Big[ \Big( \sum_{m_{1}=1}^{k} C_{nm_{1}}^{(S)} \Big)^2 \Big]&= \sum_{m_{1}=1}^{k} \mathrm{Var}\big(C_{nm_{1}}^{(S)}\big)
+ 2 \sum_{1 \leq i < j \leq k} \mathrm{Cov}\big(C_{ni}^{(S)}, C_{nj}^{(S)}\big)\\
&= A_{1}+A_{2},
\end{split}
\end{equation}
where $$ A_{1} = \sum_{m_{1}=1}^k \mathrm{Var}\big(C_{nm_{1}}^{(S)}\big)\; \text{and}\; 
A_{2} = 2 \sum_{1 \leq i < j \leq k} \mathrm{Cov}\big(C_{ni}^{(S)}, C_{nj}^{(S)}\big).$$
We first analyze the term $A_{1}.$ Clearly,
$$A_{1} = \sum_{m_{1}=1}^k \mathrm{Var}\left(C_{n,m_{1}}^{(S)}\right),$$
where,
\begin{equation}
\mathrm{Var}\left(C_{n,m_{1}}^{(S)}\right) 
= \sum_{i=b_{m_{1}}}^{b_{m_{1}}+q_{1}-1} \mathrm{Var}(\xi_{i})
+ 2 \sum_{b_{m_{1}} \leq i < j \leq b_{m_{1}}+q_{1}-1} \mathrm{Cov}(\xi_{i},\xi_{j}).
\end{equation}
Therefore, we obtain
\begin{equation}
    A_{1} \leq k \sum_{i=b_{m_1}}^{b_{m_1}+q_1-1} \mathrm{Var}(\xi_i) 
+ 2k \sum_{b_{m_1} \leq i < j \leq b_{m_1}+q_1-1} \mathrm{Cov}(\xi_i,\xi_j).
\end{equation}
By stationarity and Davydov’s inequality (see, e.g, \cite{hall1980martingale} in Corollary A.2), for all $i<j$, we have
$$
\big| \mathrm{Cov}(\xi_{i}, \xi_{j}) \big| 
\leq \bar{C} \, \alpha(j-i)^{\frac{r-2}{r}}
\left(\mathbb{E}|\xi_{i}|^{r}\right)^{\frac{1}{r}}\left(\mathbb{E}|\xi_{j}|^{r}\right)^{\frac{1}{r}},
$$
where $\alpha(.)$ is same defined in \eqref{eq:3.1} and by applying Fubini’s theorem together with Minkowski inequality under Assumptions~\ref{a:3.3.2}–\ref{a:3.3.4}, we obtain $\mathbb{E}[|\xi_{j}|^{r}]<\infty$ for some $r>2$, as required in Assumption~\ref{a:3.3.4}. Then it implies that
\begin{equation}
\begin{split}
    A_{1} &\leq k q_{1} \, \mathrm{Var}(\xi_{i}) 
+ 2k  \sum_{b_{m_1} \leq i < j \leq b_{m_1}+q_1-1}\bar{C} \alpha(j-i)^{\frac{r-2}{r}}\\
&\leq k q_{1} \, \mathrm{Var}(\xi_{i}) 
+ 2k \Bar{C}\sum_{i=1}^{q_{1}}\;\sum_{\ell=1}^{q_{1}-1}\alpha(\ell)^{\frac{r-2}{r}}\\
&\leq k q_{1} \left[ \mathrm{Var}(\xi_i) + 2\bar{C} \right],
\end{split}
\end{equation}
by the condition \ref{a:3.3.2}.
Note that,
\begin{equation}
\begin{split}
\mathrm{Cov}\!\left(C_{n,i}^{(S)}, C_{n,j}^{(S)}\right) 
&= \mathrm{Cov}\!\left( \sum_{\ell = b_{i}}^{b_{i}+q_{1}-1}\xi_\ell, \; 
\sum_{p_{1} = b_{j}}^{b_{j}+q_{1}-1} \xi_{p_{1}} \right)\\
&= \sum_{\ell = b_{i}}^{b_{i}+q_{1}-1}\; \sum_{\rho = b_{j}}^{b_{j}+q_{1}-1} 
\mathrm{Cov}(\xi_\ell, \xi_{p_{1}}).\\
\end{split}
\end{equation}
Therefore, we have
\begin{equation}
\begin{split}
    A_{2} &\leq 2 \sum_{1 \leq i < j \leq k} 
\left| \mathrm{Cov}\!\left(C_{n,i'}^{(S)}, C_{n,j}^{(S)}\right) \right|\\
 &\leq 2 \sum_{1 \leq i < j \leq k} 
\sum_{\ell = b_{i}}^{b_{i}+q_{1}-1}\; \sum_{p_{1} = b_{j}}^{b_{j}+q_{1}-1} 
\left| \mathrm{Cov}(\xi_\ell, \xi_{p_{1}}) \right|\\
 &\leq 2 \sum_{i=1}^{n-p_{1}} \sum_{j=i+p_{1}}^{n}\left| \mathrm{Cov}(\xi_{i}, \xi_{j}) \right|\leq  \sum_{i=1}^{n-p_{1}} \sum_{j=i+p_{1}}^{n}\;\bar{C} \, \alpha(j-i)^{\frac{r-2}{r}}\\
&\leq  \sum_{i=1}^{n} \sum_{\ell=p_{1}}^{n} \;\bar{C} \, \alpha(\ell)^{\frac{r-2}{r}}\leq \bar{C} \,n \sum_{i=p_{1}}^{\infty}\alpha(\ell)^{\frac{r-2}{r}}
\end{split}
\end{equation}
Hence, it follows that
\begin{equation}
    n^{-1} \, \mathbb{E}\!\left[ \big(Z_n^{(S)}\big)^2 \right] 
\leq \frac{q_{1} k}{n} \Big( \mathrm{Var}(\xi_{i}) + \bar{C} \Big) 
+ \bar{C} \sum_{\ell = p_{1}}^{\infty} \alpha(\ell)^{\frac{r - 2}{r}}.
\end{equation}
Since $\frac{q_{1}}{p_{1}} \to 0, \quad p_{1} \to \infty, \quad \text{as } n \to \infty,$ then
we conclude that
$$
n^{-1} \, \mathbb{E}\!\left[ \big(Z_n^{(S)}\big)^2 \right] \to 0.
$$
Similarly, for the remainder block, we have 
\begin{equation}
\begin{split}
    \mathbb{E}\!\left[ \big(Z_n^{(R)}\big)^2 \right] 
&= \mathrm{Var}\!\left( \sum_{\ell = k(p_{1}+q_{1})+1}^{n} \xi_\ell \right)\\
&\leq \sum_{\ell = k(p_{1}+q_{1})+1}^{n} \mathrm{Var}(\xi_\ell)
+ 2 \sum_{\ell = k(p_{1}+q_{1})+1\leq i<j\leq n}\left| \mathrm{Cov}(\xi_{i},\xi_{j}) \right|\\
&\leq\big[n - k(p_{1}+q_{1})\big] \, \mathrm{Var}(\xi_{1}) 
+ 2 \sum_{i = k(p_{1}+q_{1})+1}^{n}\; \sum_{j = i+1}^{n} 
\left| \mathrm{Cov}(\xi_{i},\xi_{j}) \right|\\
&\leq\big[n - k(p_{1}+q_{1})\big] \, \mathrm{Var}(\xi_{1}) 
+ 2 \sum_{i = k(p_{1}+q_{1})+1}^{n}\; \sum_{\ell = 1}^{k(p_{1}+q_{1})-1} \Bar{C}\alpha(\ell)^{\frac{r-2}{r}}\\
&\leq\big[n - k(p_{1}+q_{1})\big] \, (\mathrm{Var}(\xi_{1})+2\Bar{C})
\end{split}
    \end{equation}
Then, it follows that
\begin{equation}
n^{-1} \mathbb{E}\big[ (Z_{n}^{(R)})^2 \big]
\leq \frac{(n-k(p_{1}+q_{1}))}{n}\big( \operatorname{Var}(\xi_{1}) + 2\Bar{C} \big) \to 0,
\end{equation}
 as $n \to \infty,$ provided that
$\frac{p_1}{n} \to 0, \quad \frac{q_1}{p_1} \to 0, \quad p_1 \to \infty,$ as $n \to \infty.$ Hence \eqref{eq:A1} holds.

For \eqref{eq:A2}, by Lemma 1.1 of \cite{Volkonskii}, we have
\begin{equation}
\Big| \mathbb{E}\big[ e^{it Z_n^{(L)}} \big] - \prod_{m_{1}=1}^k \mathbb{E}\big[e^{it C_{nm_{1}}^{(L)}}\big] \Big|
\leq 16k \alpha(q_1) \simeq 16 \alpha(q_1) \frac{n}{p_1} \to 0.
\end{equation}
as $n \to \infty.$
For \eqref{eq:A3}, using stationarity we obtain
\begin{equation}
\mathrm{Var}\big[n^{-1/2} Z_n^{(L)}\big]
= n^{-1} \sum_{m_{1}=1}^k \mathrm{Var}\big(C_{nm_{1}}^{(L)}\big)
+ \frac{2}{n} \sum_{1 \leq i < j \leq k} \mathrm{Cov}\big(C_{ni}^{(L)}, C_{nj}^{(L)}\big).
\end{equation}
Then,
\begin{equation}
    \begin{split}
       n^{-1} \sum_{m_{1}=1}^k \mathrm{Var}(C_{nm_{1}}^{(L)})
&= \frac{k}{n} \mathrm{Var}(C_{n1}^{(L)})
= \Big(\frac{k p_{1}}{n}\Big)\Big(\frac{1}{p_{1}}\Big) \mathrm{Var}\Big(\sum_{i=a_{m_{1}}}^{a_{m_{1}}+p_1-1} \xi_{i}\Big)\\ 
&=\Big(\frac{k p_{1}}{n}\Big)\Big(\frac{1}{p_{1}}\Big) \mathrm{Var}\Big(\sum_{i=a_{m_{1}}}^{a_{m_{1}}+p_{1}-1} \xi_{i}\Big)
\to \mathrm{Var}(\xi_{i}) = a^\top \Sigma a
    \end{split}
\end{equation}
Moreover, by Lemma 2.1 in (\cite{ROUSSAS1992262}, p.p-269),
\begin{equation}
    \begin{split}
        \left| \frac{2}{n} \sum_{1 \leq i < j \leq k} \;\mathrm{Cov}(C_{ni}^{(L)}, C_{nj}^{(L)}) \right|
&\leq \frac{2}{n} \sum_{1 \leq i < j \leq k}
\sum_{\ell=a_{i}}^{a_{i}+p_{1}+1}\;
\sum_{p_{1}=a_{j}}^{a_{j}+p_{1}-1}
\left|\mathrm{Cov}(\xi_{\ell}, \xi_{p_1})\right|\\
&\leq \frac{2}{n} \sum_{i=1}^{n-q_{1}}
\sum_{j=i+q_{1}}^{n} \Bar{C} \, \alpha(j-i)^{\frac{r-2}{r}}\\
&\leq \frac{2(n-q_{1})\Bar{C}}{n} \sum_{\ell=q_{1}}^{\infty} \alpha(\ell)^{\frac{r-2}{r}}\\
&\leq \Bar{C} \sum_{\ell=q_{1}}^{\infty}\alpha(\ell)^{\frac{r-2}{r}}\;\to 0,
    \end{split}
\end{equation}
by Assumption \ref{a:3.3.2} and the fact that $q_{1} \to \infty.$
From \eqref{eq:A4}, by applying Theorem 4.1 of \cite{Qi-ManShaoandHaoYu}, for any $r > 2$, and $\Tilde{\epsilon}>0,$ we obtain
\begin{equation}
    \begin{split}
        \mathbb{E}\Big[ (C_{n1}^{(L)})^2 \mathbf{1}\big(|C_{n1}^{(L)}| > \Tilde{\epsilon} \sqrt{n} \, \sqrt{a^T \Sigma a}\big) \Big]
\leq \Bar{C}\; n^{1-r/2} \mathbb{E}[|C_{n1}^{(L)}|^{r}]\\
\leq \Bar{C}\; n^{1-r/2} (p_{1})^{r/2} \mathbb{E}[|C_{n1}^{(L)}|^{r}]
    \end{split}
\end{equation}
Then by the Assumption \ref{a:3.3.2}-\ref{a:3.3.4}, $\mathbb{E}|C_{n1}^{(L)}|^{r} < \infty.$ Hence
\begin{equation}
n^{-1} \sum_{m_{1}=1}^k \mathbb{E}\Big[ (C_{nm_{1}}^{(L)})^2 \mathbf{1}\big(|C_{nm_{1}}^{(L)}| > \epsilon \sqrt{n}\big) \Big]
\leq \frac{k}{n}\Bar{C} n^{1-r/2} (p_{1})^{r/2}
= \Bar{C} \Big(\frac{p_{1}}{n}\Big)^{r/2} \to 0,
\end{equation}
as $n \to \infty$, this completes the proof.
\end{proof}
\subsection{B.6 Consistent Estimation of  \texorpdfstring{$\boldsymbol{\Omega}$}{}}\label{B.6}To prove Theorem~\ref{th:3.3.5}, we need Lemma~\ref{l:7.2.5} that establishes uniform consistency of each component of $\hat{\mathbb{J}}(\uptau)$. As a first step, we introduce the consistent estimator $\hat{\mathbb{J}}(\uptau)$, defined as
\begin{equation}\label{eq:7.73}
    \hat{\mathbb{J}}(\uptau) = \begin{pmatrix}
        \hat{\mathbb{J}}_{2,\boldsymbol{\beta}}(\uptau) & \hat{\mathbb{J}}_{2,\boldsymbol{\delta}}(\uptau) \\
        \boldsymbol{0}&\hat{\mathbb{J}}_{1,\boldsymbol{\delta}}
    \end{pmatrix},
\end{equation}
where each component is defined as follows: 
\begin{equation}\label{eq:7.74}
\begin{split}
\hat{\mathbb{J}}_{1,\boldsymbol{\delta}}&= \frac{1}{n}\sum_{i=1}^{n}\boldsymbol{z}_{i}\boldsymbol{z}^\top_{i},\\
\hat{\mathbb{J}}_{2,\boldsymbol{\beta}}(\uptau) &=\frac{1}{2h_{n}}\sum_{i=1}^{n}\mathds{1}\big(\boldsymbol{x}_{i}^\top\hat{\boldsymbol{\beta}}_{n}(\uptau)>0\big)\mathds{1}\big(|y_{i}-\boldsymbol{x}^\top_{i}\hat{\boldsymbol{\beta}}_{n}(\uptau)|\leq h_{n}\big)\hat{\boldsymbol{x}}_{i}\hat{\boldsymbol{x}}^\top_{i},\\
\hat{\mathbb{J}}_{2,\boldsymbol{\delta}}(\uptau) &=\frac{1}{2h_{n}}\sum_{i=1}^{n}\hat{\rho}_{1}\mathds{1}\big(\boldsymbol{x}_{i}^\top\hat{\boldsymbol{\beta}}_{n}(\uptau)>0\big)\mathds{1}\big(|y_{i}-\boldsymbol{x}^\top_{i}\hat{\boldsymbol{\beta}}_{n}(\uptau)|\leq h_{n}\big)\hat{\boldsymbol{x}}_{i}\boldsymbol{z}^\top_{i}.
    \end{split}
\end{equation}

 \begin{l1}\label{l:7.2.5} Under the same assumptions as in Theorem~\ref{th:3.3.5}, we have 
 \begin{enumerate}
    \item $\hat{\mathbb{J}}_{1,\boldsymbol{\delta}}\xrightarrow{p} \mathbb{J}_{1,\;\boldsymbol{\delta}}.$
    \item $\underset{\uptau \in \mathcal{T}}{\sup}\;\|\hat{\mathbb{J}}_{2,\boldsymbol{\beta}}(\uptau)-\mathbb{J}_{2,\boldsymbol{\beta}}(\uptau)\|=o_{p}(1).$
    \item $\underset{\uptau \in \mathcal{T}}{\sup}\;\|\hat{\mathbb{J}}_{2,\boldsymbol{\delta}}(\uptau)-\mathbb{J}_{2,\boldsymbol{\delta}}(\uptau)\|=o_{p}(1).$
\end{enumerate}
 \end{l1}
 \begin{proof}[\bf{Proof of the Lemma~\ref{l:7.2.5}}] It is easy to prove (1) by using the law of large numbers for an $\alpha$-mixing process. The proof of the uniform consistency of $\hat{\mathbb{J}}_{2,\boldsymbol{\beta}}(\uptau)$ and $\hat{\mathbb{J}}_{2,\boldsymbol{\delta}}(\uptau)$ is similar to the proof given in Appendix A.1.4 of \cite{Angrist} with some modifications. For any compact set $\mathscr{B}$ and positive
constant $\mathbb{H}>0$, the functional class
$$\{f(\boldsymbol{\beta},h)=\mathds{1}\big(\boldsymbol{x}^\top\boldsymbol{\beta}>0\big)\mathds{1}\big(|y-\boldsymbol{x}^\top\boldsymbol{\beta}|\leq h\big)\boldsymbol{x}\boldsymbol{x}^\top:\boldsymbol{\beta}\in \mathscr{B},h \in (0, \mathbb{H} ]\}$$
which is a Donsker class with a square-integrable envelope by Theorem 2.10.6 in \cite{vanderVaart1996}, because this is a product of the following two classes:
$\{\mathds{1}(|y_{i}-\boldsymbol{x}_{i}^\top\boldsymbol{\beta}|\leq h); \boldsymbol{\beta} \in \mathscr{B},h\in (0,\mathbb{H}]\}$, which is a VC class with constant envelope, and $\{\mathds{1}(\boldsymbol{x}^\top\boldsymbol{\beta} >0);\boldsymbol{\beta}\in \mathscr{B}\}$, which is also VC class with constant envelope, and consequently, the product of these two classes is also a VC class having a square-integrable matrix $\boldsymbol{x}_{i}\boldsymbol{x}^\top$ envelope by Assumption \ref{a:3.3.3}. Hence, 
\begin{equation}
\begin{split}
    (\boldsymbol{\beta}, h)\mapsto & \sqrt{n}\Big[\frac{1}{n}\sum_{i=1}^{n}\mathds{1}\big(\boldsymbol{x}_{i}^\top\boldsymbol{\beta}>0\big)\mathds{1}\big(|y_{i}-\boldsymbol{x}_{i}^\top\boldsymbol{\beta}|\leq h\big)\boldsymbol{x}_{i}\boldsymbol{x}_{i}^\top\\
&-\mathbb{E}\big[\mathds{1}\big(\boldsymbol{x}_{i}^\top\boldsymbol{\beta}>0\big)\mathds{1}\big(|y_{i}-\boldsymbol{x}_{i}^\top\boldsymbol{\beta}|\leq h\big)\boldsymbol{x}_{i}\boldsymbol{x}_{i}^\top\big]\Big]
    \end{split}
\end{equation}
converges weakly to a Gaussian process in $\ell^\infty\big(\mathscr{B}\times(0,\mathbb{H}]\big).$ Consequently, it follows that, 
\begin{equation}
\begin{split}
   &\displaystyle \underset{\boldsymbol{\beta}\in \mathscr{B},\; 0<h\leq \mathbb{H}}{\sup}\Big\|\frac{1}{n}\sum_{i=1}^{n}\mathds{1}\big(\boldsymbol{x}_{i}^\top\boldsymbol{\beta}>0\big)\mathds{1}\big(|y_{i}-\boldsymbol{x}_{i}^\top\boldsymbol{\beta}|\leq h\big)\boldsymbol{x}_{i}\boldsymbol{x}_{i}^\top\\
  & -\mathbb{E}\big[\mathds{1}\big(\boldsymbol{x}_{i}^\top\boldsymbol{\beta}>0\big)\mathds{1}\big(|y_{i}-\boldsymbol{x}_{i}^\top\boldsymbol{\beta}|\leq h\big)\boldsymbol{x}_{i}\boldsymbol{x}_{i}^\top\big]\Big\|= O_{p}(n^{-1/2}).
   \end{split}
\end{equation}
Let $\mathscr{B}$ be any compact set that covers $\underset{t \in \mathcal{T}}{\cup} \boldsymbol{\beta}_{0}(\uptau),$ and $\hat{\varepsilon}_{\uptau,i}=y_{i}-\hat{\boldsymbol{x}}_{i}^\top\hat{\boldsymbol{\beta}}_{n}(\uptau)$ then 
\begin{equation}
\begin{split}
   \displaystyle \underset{\uptau \in \mathcal{T}}{\sup}\;\Big\|\frac{1}{n}\sum_{i=1}^{n}\mathds{1}\big(\hat{\boldsymbol{x}}_{i}^\top\hat{\boldsymbol{\beta}}_{n}(\uptau)>0\big)\mathds{1}\big(|\hat{\varepsilon}_{\uptau,i}|\leq h_{n}\big)\hat{\boldsymbol{x}}_{i}\hat{\boldsymbol{x}}_{i}^\top
   &-\mathbb{E}\big[\mathds{1}\big(\hat{\boldsymbol{x}}_{i}^\top\hat{\boldsymbol{\beta}}_{n}(\uptau)>0\big)\mathds{1}\big(|\hat{\varepsilon}_{\uptau,i}|\leq h_{n}\big)\hat{\boldsymbol{x}}_{i}\hat{\boldsymbol{x}}_{i}^\top\big]\Big\|\\
   &= O_{p}(n^{-1/2}).
   \end{split}
\end{equation}
Now, observe that, 
\begin{align*}
\hat{\mathbb{J}}_{2,\;\hat{\boldsymbol{\beta}}_{n}(\uptau)}(\uptau)&=\frac{1}{2.h_{n}}\mathbb{E}\big[\mathds{1}\big(\hat{\boldsymbol{x}}_{i}^\top\hat{\boldsymbol{\beta}}_{n}(\uptau)>0\big)\mathds{1}\big(|\hat{\varepsilon}_{\uptau,i}|\leq h_{n}\big)\hat{\boldsymbol{x}}_{i}\hat{\boldsymbol{x}}_{i}^\top\big]+O_{p}(n^{-1/2})\\
&=\mathbb{E}\big[\mathds{1}\big(\boldsymbol{x}_{i}^\top\boldsymbol{\beta}_{0}(\uptau)>0\big)f_{\varepsilon_{\uptau}}(0|\boldsymbol{x})\boldsymbol{x}_{i}\boldsymbol{x}_{i}^\top\big]+o_{p}(1)\\
&=\mathbb{J}_{2,\;\boldsymbol{\beta}(\uptau)}+o_{p}(1)
\end{align*} 
uniformly in $\uptau \in \mathcal{T}$, and using the fact form Assumption~\ref{a:3.3.9} that bandwidth $h_{n}$ satisfies $h_{n} \to 0 $ and $nh^{2}_{n}\to \infty$ as $n \to \infty.$ Hence, 
$\hat{\mathbb{J}}_{2,\boldsymbol{\beta}}(\uptau)\xrightarrow{p}\mathbb{J}_{2,\boldsymbol{\beta}}(\uptau)
$ uniformly in $\uptau\in \mathcal{T}.$

Similarly, the functional class
$$\{g(\boldsymbol{\beta},h)=\rho_{1}\mathds{1}\big(\boldsymbol{x}^\top\boldsymbol{\beta}>0\big)\mathds{1}\big(|y-\boldsymbol{x}^\top\boldsymbol{\beta}|\leq h\big)\boldsymbol{x}\boldsymbol{z}^\top:\boldsymbol{\beta}\in \mathscr{B},h \in (0, \mathbb{H} ]\}$$
is Donsker class then using the same argument we can show that $\hat{\mathbb{J}}_{2,\boldsymbol{\delta}}(\uptau)\xrightarrow{p}\mathbb{J}_{2,\boldsymbol{\delta}}(\uptau),$ uniformly in $\uptau \in \mathcal{T}$
\end{proof}
\begin{proof}[{\bf{Proof of the Theorem \ref{th:3.3.5}:}}] It follows directly from Lemma~\ref{l:7.2.5} that $\hat{\mathbb{J}}(\uptau)$ is uniformly consistent estimator of $\mathbb{J}(\uptau)$. 

To estimate the long-run covariance matrix $\mathbb{V}(\uptau,\uptau'),$ we adopt the kernel HAC estimator suggested by \cite{Galvao2024}. Particularly, we define
\begin{equation}\label{eq:7.78}
    \hat{\mathbb{V}}(\uptau, \uptau') = \begin{pmatrix}
        \hat{\mathbb{V}}_{\boldsymbol{\beta},\boldsymbol{\beta}}(\uptau, \uptau') & \hat{\mathbb{V}}_{\boldsymbol{\beta},\boldsymbol{\delta}}(\uptau) \\
        \hat{\mathbb{V}}_{\boldsymbol{\delta},\boldsymbol{\beta}}(\uptau) &\hat{\mathbb{V}}_{\boldsymbol{\delta},\boldsymbol{\delta}}
    \end{pmatrix},
\end{equation}
where the block components are given by
\begin{equation}\label{eq:7.79}
    \begin{split}
       \hat{\mathbb{V}}_{\boldsymbol{\delta},\boldsymbol{\delta}}&=\frac{1}{n}\sum_{j=-n+1}^{n-1}\sum_{i=1}^{n-j}\mathbb{K}\Big(\frac{j}{b_{n}}\Big)\Psi_{1}\big(\vartheta_{i},\boldsymbol{z}_{i},\hat{\boldsymbol{\delta}}_{n}\big)\Psi_{1}\big(\vartheta_{i+j},\boldsymbol{z}_{i+j},\hat{\boldsymbol{\delta}}_{n}\big)^\top\\
       \hat{\mathbb{V}}_{\boldsymbol{\beta},\boldsymbol{\beta}}(\uptau, \uptau')&=\frac{1}{n}\sum_{j=-n+1}^{n-1}\sum_{i=1}^{n-j}\mathbb{K}\Big(\frac{j}{b_{n}}\Big)\Psi_{2}\big(s_{i},\hat{\boldsymbol{\beta}}_{n}(\uptau),\hat{\boldsymbol{\delta}}_{n},\uptau\big)\Psi_{2}\big(s_{i+j},\hat{\boldsymbol{\beta}}_{n}(\uptau),\hat{\boldsymbol{\delta}}_{n},\uptau'\big)^\top\\
       \hat{\mathbb{V}}_{\boldsymbol{\beta},\boldsymbol{\delta}}(\uptau)&=\frac{1}{n}\sum_{j=-n+1}^{n-1}\sum_{i=1}^{n-j}\mathbb{K}\Big(\frac{j}{b_{n}}\Big)\Psi_{2}\big(s_{i},\hat{\boldsymbol{\beta}}_{n}(\uptau),\hat{\boldsymbol{\delta}}_{n},\uptau\big)\Psi_{1}\big(\vartheta_{i+j},\boldsymbol{z}_{i+j},\hat{\boldsymbol{\delta}}_{n}\big)^\top\\
       \hat{\mathbb{V}}_{\boldsymbol{\delta},\boldsymbol{\beta}}(\uptau)&= \hat{\mathbb{V}}_{\boldsymbol{\beta},\boldsymbol{\delta}}(\uptau)^\top.
    \end{split}
\end{equation}
Let $\mathbb{K}(t) = (1 - |t|)\mathds{1}(|t|\leq 1)$ denote the Bartlett kernel, and consider the lag truncation parameter $b_{n} = \bigl[(4n/100)^{1/3}\bigr] > 0$, where $[,\cdot,]$ denotes the integer part. This choice of $b_{n}$ satisfies Assumption~\ref{a:3.3.9}, ensuring that the truncation error vanishes asymptotically. Moreover, since $\hat{\boldsymbol{\beta}}_{n}(\uptau)$ and $\hat{\boldsymbol{\delta}}_{n}$ are uniformly consistent over $\uptau \in \mathcal{T}$, the consistency of each component in \eqref{eq:7.79} follows from arguments analogous to those used in the analysis of HAC estimators for quantile regression in \cite{Galvao2024} and least squares in \cite{Andrews1991}, under Assumptions~\ref{a:3.3.1}–\ref{a:3.3.9}. Although \cite{Galvao2024} develops the approach for quantile regression, the extension to censored quantile regression is straightforward under the stated regularity conditions. Hence, we obtain that $\hat{\mathbb{V}}(\uptau, \uptau')\xrightarrow{p}\mathbb{V}(\uptau, \uptau')$ uniformly over $\uptau \in \mathcal{T}.$ 
\end{proof}
\begin{proof}[{\bf{Proof of the Theorem \ref{th:3.3.6}:}}] The proof of consistency of the HAC estimators of $\hat{\boldsymbol{\Omega}}$ to $\boldsymbol{\Omega}$ follows directly from \cite{Galvao2024} and \cite{Andrews1991} under Assumptions \ref{a:3.3.1}-\ref{a:3.3.9}.
\end{proof}
\begin{proof}[{\bf{Proof of Corollary \ref{cor1}:}}] \label{pcor1} The assertion in Theorem \ref{th:3.3.6} and, in view of the assertion in Theorem \ref{th:3.3.4}, the result follows.   
\end{proof}
\section{Simulation Results in Tabular Form}\label{SST1}
This section provides the simulation results of Section \ref{FSSS} in tabular form. To be specific, each table reports the EMSE, Ebias, and C.P. corresponding to the estimators for different sample sizes.
\begin{table}[H]
    \small
    \centering
    \caption{Ebias and  of estimates of $\mathscr{L}_{n}$ in Example~\ref{e3} with C.P.}
    \resizebox{\textwidth}{!}{%
    \begin{tabular}{|c |c|c|c |c|c|c |c|c|c |c|c|c|}
        \toprule
          & \multicolumn{3}{|c|}{$\boldsymbol{n = 50}$} 
          & \multicolumn{3}{|c|}{$\boldsymbol{n = 100}$} 
          & \multicolumn{3}{|c|}{$\boldsymbol{n = 500}$} 
          & \multicolumn{3}{|c|}{$\boldsymbol{n = 1000}$} \\
       \cline{2-13}
        \textbf{Variables} & \textbf{Ebias} & \textbf{EMSE} & \textbf{C.P.} 
                           & \textbf{Ebias} & \textbf{EMSE} & \textbf{C.P.} 
                           & \textbf{Ebias} & \textbf{EMSE} & \textbf{C.P.} 
                           & \textbf{Ebias} & \textbf{EMSE} & \textbf{C.P.} \\
        \midrule
        \multicolumn{12}{|c|}{$\boldsymbol{\varepsilon_{i}=\rho^{*}\varepsilon_{i-1}+\eta_{1,i}}, \;\rho^{*}=0.5,\; \eta_{1,i}\sim N(0,1)$} \\
        \midrule
        $\mathcal{L}_{n0}$ & 0.1118 &0.1131  &  & 0.0866 & 0.877 &  & 0.0227 &0.0256  &  &-0.0026  & 0.0121 &  \\
        $\mathcal{L}_{n1}$ &-0.0236  & 0.0267 &28$\%$  & -0.0148 & 0.0197 & 30$\%$ &0.0062  & 0.0146 & 29$\%$ & 0.0024 & 0.0130 &28$\%$  \\
        $\mathcal{L}_{n2}$ &-0.2651  &0.2663  &  &-0.1923  &0.1935  &  &-0.0366  &0.0414  &  &-0.0061  &0.0206  &  \\
        $\mathcal{L}_{n3}$ & 0.2970 &0.2977  &  & 0.2055 &0.2059  &  &0.0468  & 0.0472 &  & 0.0128 &  0.0137&  \\
        \bottomrule
    \end{tabular}}
    \label{table:8c}
\end{table}

\begin{table}[H]
    \small
    \centering
    \caption{Ebias and  of estimates of $\mathscr{L}_{n}$ in Example~\ref{e1} 
             for $\Tilde{\alpha} = 0.01, 0.02, 0.20$ with C.P.}
    \resizebox{\textwidth}{!}{%
    \begin{tabular}{|c| c|c|c| c|c|c| c|c|c| c|c|c|}
        \toprule
          & \multicolumn{3}{|c|}{$\boldsymbol{n = 50}$} 
          & \multicolumn{3}{|c|}{$\boldsymbol{n = 100}$} 
          & \multicolumn{3}{|c|}{$\boldsymbol{n = 500}$} 
          & \multicolumn{3}{|c|}{$\boldsymbol{n = 1000}$} \\
        \cline{2-13}
        \textbf{Variables} & \textbf{Ebias} & \textbf{EMSE} & \textbf{C.P.} 
                           & \textbf{Ebias} & \textbf{EMSE} & \textbf{C.P.} 
                           & \textbf{Ebias} & \textbf{EMSE} & \textbf{C.P.} 
                           & \textbf{Ebias} & \textbf{EMSE} & \textbf{C.P.} \\
        \midrule
        \multicolumn{12}{|c|}{$\boldsymbol{\varepsilon_{i}=\rho^{*}\varepsilon_{i-1}+\eta_{1,i}}, \;\rho^{*}=0.5,\; \eta_{1,i}\sim N(0,1)$} \\
        \midrule
        \multicolumn{12}{|c|}{$\boldsymbol{\Tilde{\alpha} = 0.01}$} \\
        \midrule
        $\mathcal{L}_{n0}$ &0.0340  &0.0914  && 0.0833 & 0.0850 && 0.0244 &0.0289 &&-0.0071  & 0.0178 &  \\
        $\mathcal{L}_{n1}$ &-0.2669  & 0.2668 & 30$\%$  &-0.1634  & 0.1637 &38$\%$  &-0.0293 &0.0297  & 31$\%$ & -0.0109 & 0.0115 &35$\%$  \\
        $\mathcal{L}_{n2}$ &0.0048  &0.0156  && 0.0059 &0.0073  && 0.0024 &0.0030  && 0.0014 & 0.0019 &  \\
        $\mathcal{L}_{n3}$ &0.0381 & 0.0386 &  & 0.0057 & 0.0174 &  &0.0018  & 0.0027 &  & 0.0007 & 0.0017 &  \\
        \midrule
        \multicolumn{12}{|c|}{$\boldsymbol{\Tilde{\alpha} = 0.02}$} \\
        \midrule
        $\mathcal{L}_{n0}$ & 0.1046 &0.1061  &  & 0.0673 &0.0693  &  &0.0185  & 0.0241 &  & -0.0008  &  0.0157 &  \\
        $\mathcal{L}_{n1}$ & -0.2635 & 0.2639 &36$\%$ &-0.1594  & 0.1597 & 32$\%$ & -0.0257 &0.0262  &30$\% $& -0.0085  & 0.0092  &35$\%$  \\
        $\mathcal{L}_{n2}$ &0.0060  & 0.0073 &  &0.0035  &0.0071  &  & 0.0024 & 0.0030 &  &0.0014  & 0.0019 &  \\
        $\mathcal{L}_{n3}$ &0.0372  & 0.0379 &  & 0.0142 & 0.0150 &  & 0.0014 & 0.0026 &  & 0.0007 &0.0017  &  \\
        \midrule
        \multicolumn{12}{|c|}{$\boldsymbol{\Tilde{\alpha} = 0.20}$} \\
        \midrule
        $\mathcal{L}_{n0}$ & 0.1272 & 0.1288 &  & 0.0575 & 0.0598 &  & 0.0137 & 0.0188 &  &  8.2$e^{-05}$& 0.0130 &  \\
        $\mathcal{L}_{n1}$ &-0.3433  &0.3439  &38$\%$  &-0.1941 &0.1947  & 33$\%$ & -0.0214 &0.0225 & 29$\%$&-0.0056&0.0074  &34$\%$ \\
        $\mathcal{L}_{n2}$ &0.0060&0.0085  &  & 0.0021 & 0.0083 &  &0.0026  &0.0036&& 0.0014&0.0022& \\
        $\mathcal{L}_{n3}$ & 0.0384 &0.0397 & &0.0117  & 0.0136 &  & 0.0011 & 0.0031 & & 0.0009&0.0022& \\
        \bottomrule
    \end{tabular}}
    \label{table:6c}
\end{table}
\begin{table}[H]
    \small
    \centering
    \caption{Ebias and  of estimates of $\mathscr{L}_{n}$ in Example~\ref{e2} 
             for $\Tilde{\alpha} = 0.01, 0.02, 0.20$ with C.P.}
    \resizebox{\textwidth}{!}{%
    \begin{tabular}{|c| c|c|c| c|c|c| c|c|c| c|c|c|}
        \toprule
          & \multicolumn{3}{c}{$\boldsymbol{n = 50}$} 
          & \multicolumn{3}{c}{$\boldsymbol{n = 100}$} 
          & \multicolumn{3}{c}{$\boldsymbol{n = 500}$} 
          & \multicolumn{3}{c}{$\boldsymbol{n = 1000}$} \\
        \cmidrule(lr){2-4}\cmidrule(lr){5-7}\cmidrule(lr){8-10}\cmidrule(lr){11-13}
        \textbf{Variables} & \textbf{Ebias} & \textbf{EMSE} & \textbf{C.P.} 
                           & \textbf{Ebias} & \textbf{EMSE} & \textbf{C.P.} 
                           & \textbf{Ebias} & \textbf{EMSE} & \textbf{C.P.} 
                           & \textbf{Ebias} & \textbf{EMSE} & \textbf{C.P.} \\
        \midrule
        \multicolumn{12}{|c|}{$\boldsymbol{\varepsilon_{i}=\rho^{*}\varepsilon_{i-1}+\eta_{1,i}}, \;\rho^{*}=0.5,\; \eta_{1,i}\sim N(0,1)$} \\
        \midrule
        \multicolumn{12}{|c|}{$\boldsymbol{\Tilde{\alpha} = 0.01}$} \\
        \midrule
        $\mathcal{L}_{n0}$ & 0.1225 & 0.1244 &  & 0.0917 & 0.0943 &  &0.0163  & 0.0232 &  &0.0016  & 0.0179 &  \\
        $\mathcal{L}_{n1}$ & -0.2690 & 0.2698 & 38$\%$&-0.2682  & 0.2691 & 38$\%$& -0.0339 & 0.0359 &35$\%$  &-0.0153  &0.0179  & 34$\%$ \\
        $\mathcal{L}_{n2}$ &0.0027  &0.0105  &  &0.0024  &0.0101  &  &0.0026  & 0.0054 &  &0.0015  &0.0038  &  \\
        $\mathcal{L}_{n3}$ &0.0395  & 0.0409 &  &0.0389  &0.0408  &  &0.0020  &0.0057  &  &0.0012  & 0.0041 &  \\
        \midrule
        \multicolumn{12}{|c|}{$\boldsymbol{\Tilde{\alpha} = 0.02}$} \\
        \midrule
        $\mathcal{L}_{n0}$ &0.0837  & 0.0891 &  & 0.0794 & 0.0839 &  & 0.0129 & 0.0224 &  & 0.0022 &0.0176  &  \\
        $\mathcal{L}_{n1}$ & -0.2659 & 0.2683 & 46$\%$ & -0.1646 & 0.1674 &35$\%$ & -0.0417 & 0.0465 &40$\%$  &-0.0133  & 0.0203 & 32 $\%$ \\
        $\mathcal{L}_{n2}$ & 0.0034 &0.0180  &  &0.0068  &0.0152  &  &0.0032  & 0.0087 &  &0.0019  & 0.0058 &  \\
        $\mathcal{L}_{n3}$ &0.0380  & 0.0430 &  & 0.0152 &0.0222  &  &0.0026  &0.0092  &  & 0.0007 &  0.0061&  \\
        \midrule
        \multicolumn{12}{|c|}{$\boldsymbol{\Tilde{\alpha} = 0.20}$} \\
        \midrule
        $\mathcal{L}_{n0}$ & 0.0877 & 0.2134 &  &0.0514  &0.1528  &  &0.0039  &0.0657  &  &-0.0003  &0.0448  &  \\
        $\mathcal{L}_{n1}$ & -0.2196 & 0.3538 & 34$\%$ & -0.1283 & 0.2416  &33$\%$ & -0.0179 &0.0945  & 34$\%$ &0.0039  &0.0633 &34$\%$ \\
        $\mathcal{L}_{n2}$ &-0.0053  &0.1098  &  &0.0039  & 0.0766 &  &0.0037  &0.0333  &  &-0.0006  &0.0222  &  \\
        $\mathcal{L}_{n3}$ &0.0370  & 0.1358 &  &0.0098  & 0.0847 &  & -0.0012 & 0.0371 &  &0.0020  & 0.0264 &  \\
        \bottomrule
    \end{tabular}}
    \label{table:7c}
\end{table}
\bibliographystyle{apalike}
\bibliography{Reference}

\begin{thebibliography}{}

\bibitem[Acemoglu et~al., 2012]{Simon2012}
Acemoglu, D., Johnson, S., and Robinson, J.~A. (2012).
\newblock The colonial origins of comparative development: An empirical investigation: Reply.
\newblock {\em The American Economic Review}, 102(6):3077--3110.

\bibitem[Andrews, 1991]{Andrews1991}
Andrews, D. W.~K. (1991).
\newblock Heteroskedasticity and autocorrelation consistent covariance matrix estimation.
\newblock {\em Econometrica}, 59(3):817--858.

\bibitem[Andrews, 1992]{Andrews92}
Andrews, D. W.~K. (1992).
\newblock Generic uniform convergence.
\newblock {\em Econometric Theory}, 8(2):241--257.

\bibitem[Angrist et~al., 2006]{Angrist}
Angrist, J., Chernozhukov, V., and Fernández-Val, I. (2006).
\newblock {Quantile Regression under Misspecification, with an Application to the U.S. Wage Structure}.
\newblock {\em Econometrica}, 74(2):539--563.

\bibitem[Arcones and Yu, 1994]{arcones1994central}
Arcones, M.~A. and Yu, B. (1994).
\newblock Central limit theorems for empirical and u-processes of stationary mixing sequences.
\newblock {\em Journal of Theoretical Probability}, 7(1):47--71.

\bibitem[Beyhum et~al., 2024]{MR4712657}
Beyhum, J., Tedesco, L., and Van~Keilegom, I. (2024).
\newblock Instrumental variable quantile regression under random right censoring.
\newblock {\em Econom. J.}, 27(1):21--36.

\bibitem[Bickel, 1965]{Bickel}
Bickel, P.~J. (1965).
\newblock On some robust estimates of location.
\newblock {\em The Annals of Mathematical Statistics}, 36(3):847--858.

\bibitem[Blundell and Powell, 2007]{BLUNDELL200765}
Blundell, R. and Powell, J.~L. (2007).
\newblock Censored regression quantiles with endogenous regressors.
\newblock {\em Journal of Econometrics}, 141(1):65--83.
\newblock Semiparametric methods in econometrics.

\bibitem[Bowden Roger J. (Roger John)~Turkington, 1984]{Roger1984}
Bowden Roger J. (Roger John)~Turkington, D. A. D.~A. (1984).
\newblock {\em {Instrumental variables / Roger J. Bowden and Darrell A. Turkington}}.
\newblock Cambridge University Press.

\bibitem[Bücher and Kojadinovic, 2016]{BUCHER201683}
Bücher, A. and Kojadinovic, I. (2016).
\newblock Dependent multiplier bootstraps for non-degenerate u-statistics under mixing conditions with applications.
\newblock {\em Journal of Statistical Planning and Inference}, 170:83--105.

\bibitem[Camponovo, 2015]{Camponovo2015}
Camponovo, Lorenzo;~Otsu, T. (2015).
\newblock Robustness of bootstrap in instrumental variable regression.
\newblock {\em Econometric Reviews}, 32:352--393.

\bibitem[CHAN and TSAY, 1998]{CHANTSAY}
CHAN, K.~S. and TSAY, R.~S. (1998).
\newblock Limiting properties of the least squares estimator of a continuous threshold autoregressive model.
\newblock {\em Biometrika}, 85(2):413--426.

\bibitem[Chaudhuri, 1991]{MR1105843}
Chaudhuri, P. (1991).
\newblock Nonparametric estimates of regression quantiles and their local {B}ahadur representation.
\newblock {\em Ann. Statist.}, 19(2):760--777.

\bibitem[Chen, 2018]{CHEN201830}
Chen, S. (2018).
\newblock Sequential estimation of censored quantile regression models.
\newblock {\em Journal of Econometrics}, 207(1):30--52.

\bibitem[Chen and Wang, 2020]{MR4066064}
Chen, S. and Wang, Q. (2020).
\newblock Semiparametric estimation of a censored regression model with endogeneity.
\newblock {\em J. Econometrics}, 215(1):239--256.

\bibitem[Chernozhukov et~al., 2015]{CHERNOZHUKOV2015201}
Chernozhukov, V., Fernández-Val, I., and Kowalski, A.~E. (2015).
\newblock Quantile regression with censoring and endogeneity.
\newblock {\em Journal of Econometrics}, 186(1):201--221.

\bibitem[Chernozhukov and Hansen, 2006]{Hansen}
Chernozhukov, V. and Hansen, C. (2006).
\newblock Instrumental quantile regression inference for structural and treatment effect models.
\newblock {\em Journal of Econometrics}, 132:491--525.

\bibitem[Dehling et~al., 2009]{DEHLING20093699}
Dehling, H., Durieu, O., and Volny, D. (2009).
\newblock New techniques for empirical processes of dependent data.
\newblock {\em Stochastic Processes and their Applications}, 119(10):3699--3718.

\bibitem[Dette and Volgushev, 2008]{MR2420417}
Dette, H. and Volgushev, S. (2008).
\newblock Non-crossing non-parametric estimates of quantile curves.
\newblock {\em J. R. Stat. Soc. Ser. B Stat. Methodol.}, 70(3):609--627.

\bibitem[Dette et~al., 2011]{MR2760140}
Dette, H., Wagener, J., and Volgushev, S. (2011).
\newblock Comparing conditional quantile curves.
\newblock {\em Scand. J. Stat.}, 38(1):63--88.

\bibitem[Dhar et~al., 2022]{MR4511147}
Dhar, S.~S., Jha, P., and Rakshit, P. (2022).
\newblock The trimmed mean in non-parametric regression function estimation.
\newblock {\em Theory Probab. Math. Statist.}, pages 133--158.

\bibitem[Dhar and Shalabh, 2022]{Dharsh}
Dhar, S.~S. and Shalabh (2022).
\newblock {GIVE statistic for goodness of fit in instrumental variables models with application to COVID data}.
\newblock {\em Nature Scientific Reports}, page 9472.

\bibitem[Dhar and Wu, 2023]{MR4550234}
Dhar, S.~S. and Wu, W. (2023).
\newblock Comparing time varying regression quantiles under shift invariance.
\newblock {\em Bernoulli}, 29(2):1527--1554.

\bibitem[Frandsen, 2015]{MR3449070}
Frandsen, B.~R. (2015).
\newblock Treatment effects with censoring and endogeneity.
\newblock {\em J. Amer. Statist. Assoc.}, 110(512):1745--1752.

\bibitem[Galvao and Yoon, 2024]{Galvao2024}
Galvao, A.~F. and Yoon, J. (2024).
\newblock Hac covariance matrix estimation in quantile regression.
\newblock {\em Journal of the American Statistical Association}, 119(547):2305--2316.

\bibitem[Gutenbrunner and Jureckova, 1992]{Gutenbrunner}
Gutenbrunner, C. and Jureckova, J. (1992).
\newblock Regression rank scores and regression quantiles.
\newblock {\em The Annals of Statistics}, 20(1):305--330.

\bibitem[Hall and Heyde, 1980]{hall1980martingale}
Hall, P. and Heyde, C.~C. (1980).
\newblock {\em Martingale Limit Theory and Its Applications}.
\newblock Academic Press, New York.

\bibitem[Hampel et~al., 1986]{MR0829458}
Hampel, F.~R., Ronchetti, E.~M., Rousseeuw, P.~J., and Stahel, W.~A. (1986).
\newblock {\em Robust statistics}.
\newblock Wiley Series in Probability and Mathematical Statistics: Probability and Mathematical Statistics. John Wiley \& Sons, Inc., New York.
\newblock The approach based on influence functions.

\bibitem[Hong and Tamer, 2003]{HongTamer}
Hong, H. and Tamer, E. (2003).
\newblock Inference in censored models with endogenous regressors.
\newblock {\em Econometrica}, 71(3):905--932.

\bibitem[Honoré and Hu, 2004]{HONORE2004293}
Honoré, B.~E. and Hu, L. (2004).
\newblock Estimation of cross sectional and panel data censored regression models with endogeneity.
\newblock {\em Journal of Econometrics}, 122(2):293--316.

\bibitem[Hössjer and Karlsson, 2023]{L_functional}
Hössjer, O. and Karlsson, M. (2023).
\newblock On the use of l-functionals in regression models.
\newblock {\em Open Mathematics}, 21(1):20220597.

\bibitem[Jelena~Bradic, 2019]{bradic2019}
Jelena~Bradic, J.~G. (2019).
\newblock Generalized m-estimators for high-dimensional tobit i models.
\newblock {\em Electronic Journal of Statistics}, 13(1):582--645.

\bibitem[Khan and Tamer, 2009]{KHAN2009104}
Khan, S. and Tamer, E. (2009).
\newblock Inference on endogenously censored regression models using conditional moment inequalities.
\newblock {\em Journal of Econometrics}, 152(2):104--119.
\newblock Nonparametric and Robust Methods in Econometrics.

\bibitem[Kim et~al., 2011]{Kim2011}
Kim, D., Baum, C., Ganz, M., Subramanian, S., and Kawachi, I. (2011).
\newblock The contextual effects of social capital on health: A cross-national instrumental variable analysis.
\newblock {\em Social science and medicine (1982)}, 73:1689--97.

\bibitem[Koenker and Bassett, 1978]{MR0474644}
Koenker, R. and Bassett, Jr., G. (1978).
\newblock Regression quantiles.
\newblock {\em Econometrica}, 46(1):33--50.

\bibitem[Koenker and Bassett, 1982]{MR0640165}
Koenker, R. and Bassett, Jr., G. (1982).
\newblock Robust tests for heteroscedasticity based on regression quantiles.
\newblock {\em Econometrica}, 50(1):43--61.

\bibitem[Koenker and Portnoy, 1987]{Koenker1987}
Koenker, R. and Portnoy, S. (1987).
\newblock L-estimation for linear models.
\newblock {\em Journal of the American Statistical Association}, 82(399):851--857.

\bibitem[Kosorok, 2008]{kosorok2008}
Kosorok, M.~R. (2008).
\newblock {\em Introduction to Empirical Processes and Semiparametric Inference}.
\newblock Springer-Verlag, New York.

\bibitem[Lee, 1992]{Winsorized}
Lee, M.-J. (1992).
\newblock Winsorized mean estimator for censored regression.
\newblock {\em Econometric Theory}, 8(3):368--382.

\bibitem[Lee, 2021]{MR4294546}
Lee, M.-j. (2021).
\newblock Instrument residual estimator for any response variable with endogenous binary treatment.
\newblock {\em J. R. Stat. Soc. Ser. B. Stat. Methodol.}, 83(3):612--635.

\bibitem[Lu and Li, 2021]{MR4393477}
Lu, J. and Li, H. (2021).
\newblock Hypothesis testing in high-dimensional instrumental variables regression with an application to genomics data.
\newblock {\em Statist. Sinica}, 32:613--633.

\bibitem[Newey, 1987]{NEWEY19}
Newey, W.~K. (1987).
\newblock Efficient estimation of limited dependent variable models with endogenous explanatory variables.
\newblock {\em Journal of Econometrics}, 36(3):231--250.

\bibitem[Pakes and Pollard, 1989]{PakesandPollard}
Pakes, A. and Pollard, D. (1989).
\newblock Simulation and the asymptotics of optimization estimators.
\newblock {\em Econometrica}, 57(5):1027--1057.

\bibitem[Powell, 1984]{POWELL1984}
Powell, J.~L. (1984).
\newblock Least absolute deviations estimation for the censored regression model.
\newblock {\em Journal of Econometrics}, 25(3):303--325.

\bibitem[Powell, 1986a]{Powell1986}
Powell, J.~L. (1986a).
\newblock Censored regression quantiles.
\newblock {\em Journal of Econometrics}, 32(1):143--155.

\bibitem[Powell, 1986b]{SCLS}
Powell, J.~L. (1986b).
\newblock Symmetrically trimmed least squares estimation for tobit models.
\newblock {\em Econometrica}, 54(6):1435--1460.

\bibitem[Rivers and Vuong, 1988]{RIVERS1988347}
Rivers, D. and Vuong, Q.~H. (1988).
\newblock Limited information estimators and exogeneity tests for simultaneous probit models.
\newblock {\em Journal of Econometrics}, 39(3):347--366.

\bibitem[Rosenblatt, 1956]{M.Rosenblatt}
Rosenblatt, M. (1956).
\newblock A central limit theorem and a strong mixing condition.
\newblock {\em Proceedings of the National Academy of Sciences of the United States of America}, 42(1):43--47.

\bibitem[Roussas et~al., 1992]{ROUSSAS1992262}
Roussas, G.~G., Tran, L.~T., and Ioannides, D. (1992).
\newblock Fixed design regression for time series: Asymptotic normality.
\newblock {\em Journal of Multivariate Analysis}, 40(2):262--291.

\bibitem[Serfling, 1980]{Serfling}
Serfling, R. (1980).
\newblock {\em Approximation Theorems of Mathematical Statistics}.
\newblock Wiley.

\bibitem[Shalabh and Dhar, 2023]{Dharsh2023}
Shalabh and Dhar, S.~S. (2023).
\newblock Testing the goodness of fit in instrumental variables models.
\newblock In {\em G Families of Probability Distributions}, pages 330--343. Taylor \& Francis.

\bibitem[Shao and Yu, 1996]{Qi-ManShaoandHaoYu}
Shao, Q.-M. and Yu, H. (1996).
\newblock Weak convergence for weighted empirical processes of dependent sequences.
\newblock {\em The Annals of Probability}, 24(4):2098--2127.

\bibitem[Shukla et~al., 2023]{shukla2023}
Shukla, S., Dhar, S.~S., and Shalabh (2023).
\newblock M-estimation in censored regression model using instrumental variables under endogeneity.
\newblock Under revision in Scandinavian Journal of Statistics.

\bibitem[Smith and Blundell, 1986]{Smith}
Smith, R.~J. and Blundell, R.~W. (1986).
\newblock An exogeneity test for a simultaneous equation tobit model with an application to labor supply.
\newblock {\em Econometrica}, 54(3):679--685.

\bibitem[Taylor and Otsu, 2019]{MR3928467}
Taylor, L. and Otsu, T. (2019).
\newblock Estimation of nonseparable models with censored dependent variables and endogenous regressors.
\newblock {\em Econometric Rev.}, 38(1):4--24.

\bibitem[van~der Vaart and Wellner, 1996a]{vanderVaart1996}
van~der Vaart, A.~W. and Wellner, J.~A. (1996a).
\newblock {\em Independence Empirical Processes}, pages 367--371.
\newblock Springer New York, New York, NY.

\bibitem[van~der Vaart and Wellner, 1996b]{MR1385671}
van~der Vaart, A.~W. and Wellner, J.~A. (1996b).
\newblock {\em Weak convergence and empirical processes}.
\newblock Springer Series in Statistics. Springer-Verlag, New York.
\newblock With applications to statistics.

\bibitem[Vansteelandt and Didelez, 2018]{St2018}
Vansteelandt, S. and Didelez, V. (2018).
\newblock Improving the robustness and efficiency of covariate‐adjusted linear instrumental variable estimators.
\newblock {\em Scandinavian Journal of Statistics}, 45:941 -- 961.

\bibitem[Vapnik and \v{C}ervonenkis, 1971]{MR0288823}
Vapnik, V.~N. and \v{C}ervonenkis, A.~J. (1971).
\newblock The uniform convergence of frequencies of the appearance of events to their probabilities.
\newblock {\em Teor. Verojatnost. i Primenen.}, 16:264--279.

\bibitem[Volkonskii and Rozanov, 1959]{Volkonskii}
Volkonskii, V.~A. and Rozanov, Y.~A. (1959).
\newblock Some limit theorems for random functions. i.
\newblock {\em Theory of Probability \& Its Applications}, 4(2):178--197.

\bibitem[Wang and Zhang, 2024]{MR4725141}
Wang, W. and Zhang, Y. (2024).
\newblock Wild bootstrap inference for instrumental variables regressions with weak and few clusters.
\newblock {\em J. Econometrics}, 241(1):Paper No. 105727.

\bibitem[White, 1984]{white1984}
White, H. (1984).
\newblock {\em Asymptotic Theory for Econometricians}.
\newblock Academic Press, New York.

\bibitem[Zhang et~al., 2025]{Zhang21102025}
Zhang, F., Xie, R., and Xiao, Z. (2025).
\newblock Time series quantile regression kink with an unknown threshold.
\newblock {\em Econometric Reviews}, 44(9):1275--1320.

\bibitem[Zhou and Shao, 2013]{Zhou}
Zhou, Z. and Shao, X. (2013).
\newblock Inference for linear models with dependent errors.
\newblock {\em Journal of the Royal Statistical Society. Series B (Statistical Methodology)}, 75(2):323--343.

\end{thebibliography}
\end{document}